\documentclass[10pt]{exam}
\usepackage{graphicx}
\usepackage{subcaption}
\usepackage{booktabs}
\usepackage[usenames,dvipsnames]{color}
\usepackage{setspace}

\usepackage{setspace}

\usepackage[colorinlistoftodos,textsize=tiny%
			]{todonotes}
            
\usepackage{amsfonts}
\usepackage{amssymb}
\usepackage{amsmath, amsthm}
\usepackage{latexsym}
\usepackage{dsfont}
\usepackage{xcolor}
\usepackage{algorithm}
\usepackage{algpseudocode}
\usepackage{hyperref}
\usepackage{bbm, cite}
\usepackage{bm, mathtools}
\newtheorem{theorem}{Theorem}[section]
\newtheorem{lemma}[theorem]{Lemma}
\newtheorem{remark}[theorem]{Remark}

\newtheorem{proposition}[theorem]{Proposition}

\usepackage{natbib}

\newcommand{\E}{\mathbb{E}}

\newcommand{\R}{\mathbb{R}}
\newcommand{\F}{\mathcal{F}}

\newcommand{\llbracket}{[\![}
\newcommand{\rrbracket}{]\!]}

\newcommand{\G}{\mathcal{G}}
\newcommand{\HH}{\mathcal{H}}
\newcommand{\KK}{\mathcal{K}}
\newcommand{\BB}{\mathcal{B}}
\newcommand{\cU}{\mathcal{U}}
\newcommand{\filF}{\mathbb{F}}
\newcommand{\filG}{\mathbb{G}}
\newcommand{\filB}{\mathbb{B}}
\newcommand{\Wien}{\mathbb{W}}
\newcommand{\FG}{\mathbb{F}^{G}}
\newcommand{\nuv}{\bm{\nu}}
\newcommand{\zetav}{\bm{\zeta}}
\newcommand{\rhov}{\bm{\rho}}
\newcommand{\proj}{\operatorname{proj}}
\newcommand{\dbl}{[\![}
\newcommand{\dbr}{]\!]}

\newcommand{\RR}{\mathbb{R}}

\newcommand{\PP}{\mathbb{P}}
\newcommand{\1}{\mathbf{1}}
\newcommand{\cA}{\mathcal{A}}

\newcommand{\thetavec}{\bm{\theta}}
\newcommand{\sigmavec}{\bm{\sigma}}
\newcommand{\muv}{\bm{\mu}}
\newcommand{\piv}{\bm{\pi}}
\newcommand{\onev}{\bm{1}}
\newcommand{\Sigmav}{\bm{\Sigma}}

\title{Portfolio Choice with Competing Precautionary and Accumulation Goals}
\author{Steven Campbell\thanks{Department of Statistics, Columbia University, Email: \href{mailto:sc5314@columbia.edu}{sc5314@columbia.edu}.} \\
  Columbia University  \and Agostino Capponi\thanks{Department of Industrial Engineering and Operations Research and Columbia Business School, Columbia University, Email: \href{mailto:ac3827@columbia.edu}{ac3827@columbia.edu}.}\\
  Columbia University \and Ananya Parashar\thanks{Department of Industrial Engineering and Operations Research, Columbia University, Email: \href{mailto:ap4658@columbia.edu}{ap4658@columbia.edu}.} \\
  Columbia University
  }
\date{}
\begin{document}
\maketitle

\begin{quote}
\textit{``There is so much competition over getting alpha, but
everyone has access to the same hedge funds.
Give up! That's not what is important.''}\\[4pt]
\textit{The focus should be on goals-based investing.}
\begin{flushright}
--- Robert C. Merton, Nobel Laureate in Economics\\
\small\textit{Top1000funds.com}, December 2013
\end{flushright}
\end{quote}
\bigskip

\vspace{-0.7cm}
\begin{abstract}
We study optimal portfolio choice for a household simultaneously managing a random-deadline goal, such as a medical emergency or job loss, and a fixed-deadline goal such as retirement or college tuition. Under a forced funding rule, in which each goal is paid in full whenever affordable, the household maximizes a weighted sum of the probabilities of fully funding both goals in a Black--Scholes market. We identify two novel effects absent from single-goal models: a growth crowding-out effect, in which precautionary saving for the random goal distorts investment toward the fixed goal, and a deadline pressure effect, in which a compressed saving horizon forces excess risk-taking. A striking implication is that the value function need not be monotone in wealth: a household just above the random-goal threshold is forced to pay it when the shock arrives, depleting its wealth for the fixed goal, and ends up worse off than a slightly poorer household that missed the random goal but kept its wealth intact. This non-monotonicity is absent from all single-goal benchmarks and arises purely from the interaction between the two goal types under forced funding. We further study an optional funding variant in which the household may decline the fixed-deadline goal at time $T$ rather than being required to fund it. We characterize the ex ante option value, i.e., the full time-$0$ value of this flexibility and the terminal option value, i.e., its value at the funding decision node. We find that both options are most valuable at intermediate wealth levels where paying the fixed-deadline goal would substantially reduce the continuation value of the random-deadline problem.
\end{abstract}

\bigskip
\noindent\textbf{JEL codes:} G11, G51, C61, C65.\\
\textbf{Keywords:} goal-based investing, stochastic control, viscosity solutions,
robo-advisory, household finance, random stopping time, HJB equation\\
\textbf{Acknowledgments:} SC gratefully acknowledges support from a Columbia University CDFT Research Grant and an NSERC Postdoctoral Fellowship (PDF-599675-2025).
\bigskip

\section{Introduction}
\label{sec:intro}

Households simultaneously manage two structurally different types of
financial goals: pre-scheduled liabilities with known due dates, such
as college tuition, a mortgage down payment, or retirement; and
unforeseeable shocks with uncertain timing, such as a medical
emergency, involuntary job loss, or an accident. Despite the empirical
prevalence of both goal types, the investment literature has studied
them almost entirely in isolation: the fixed-deadline problem is
solved by \citet{Browne1999}, the random-deadline problem by
\citet{BY16}, but no prior framework addresses both simultaneously.
Our study addresses precisely this problem.

We develop a framework in which a household invests dynamically
in financial markets and seeks to maximise a weighted combination of
the probabilities of fully funding each goal when it comes due. We
refer to this as the \emph{dual-goal problem}. The all-or-nothing
funding rule is central to our formulation: when a goal arrives, it
is paid in full if affordable, and otherwise missed entirely.

The interplay between a random-deadline goal and a fixed-deadline
goal generates a tradeoff that is absent in single-goal models.
Meeting the random goal requires the household to maintain a
\emph{precautionary wealth buffer}, a level of liquid wealth
sufficient to absorb a shock that may arrive at any time.
Maintaining this buffer is costly because it diverts resources away
from the growth-oriented portfolio needed to fund the fixed-deadline
goal. On the other hand, investing aggressively toward the fixed
deadline exposes the household to shortfall risk if the random shock
arrives while the portfolio is tilted away from the precautionary
level. We exactly quantify this tradeoff and characterize the optimal
portfolio allocation.

To build intuition, we study a
simplified binomial benchmark that isolates the central
economic mechanism of the full problem: the competition between goals
and the crowding-out effect created by goal payments under the
all-or-nothing funding rule. The benchmark captures what happens when
a random shock arrives. If the household has accumulated enough wealth to absorb it, it pays the random goal and continues investing toward the fixed deadline with the remaining wealth; if it has not, the random goal is paid if affordable but then the fixed-deadline goal is jeopardized, since the household must now reach the deterministic goal amount from a weakened wealth position. The optimal strategy must therefore
balance two competing demands: holding enough in reserve to absorb
the shock and investing aggressively enough to reach the fixed
deadline on time. Because the continuation values take a simple
closed-form structure, the benchmark delivers sharp intuition about
how portfolio choice depends on the relative size, timing, and
importance of the two goals, an  intuition that carries through to
the full dynamic model.

From an economic perspective, our analysis uncovers two distinct
effects that are absent from single-goal models. The first is a
\emph{growth crowding-out effect}. Because both goals follow an
all-or-nothing funding rule, paying one goal in full reduces the
wealth available to fund the other. A household with high initial
wealth invests aggressively toward the fixed-deadline goal,
accumulating a large position in the risky asset. If the random
shock then arrives, the household is forced to liquidate part of
that position to meet the emergency, leaving it with a depleted
wealth base from which to rebuild toward the fixed deadline.
Paradoxically, a household that started with somewhat lower wealth
may have accumulated less risk and therefore requires less adjustment
when the shock arrives, ending up better positioned for the
fixed-deadline goal. Under the \emph{forced funding rule}, in
which each goal is paid in full whenever affordable, this
mechanism implies that the dual-goal value function need not be
monotone in wealth: a household just above
the random-goal threshold is forced to pay it when the shock
arrives, depleting its wealth for the fixed goal, and ends up worse
off than a slightly poorer household that missed the random goal but
kept its wealth intact.

The second is a \emph{deadline pressure effect}. A household that
begins saving later, or equivalently faces a shorter residual horizon
$T$, has less time for the risky asset to compound toward the
fixed-deadline goal. This forces the household to take on more risk
over a broader range of wealth levels: rather than gradually building
wealth and reducing risk as the goal comes within reach, the
household must invest as aggressively as possible for longer before
it can afford to reduce risk taking. The deadline pressure effect captures precisely this
mechanism: shortening the saving horizon $T$ causes the optimal
portfolio weight to remain at its maximum over a wider wealth range
and then spike sharply near the goal threshold, reflecting the
household's need to compensate for lost compounding time.

Our methodology is to first provide a complete solution to both
single-goal sub-problems that serve as the building blocks of our
analysis. For a household facing only a random-deadline goal, the optimal strategy is to
hold a fixed proportion of wealth in the risky asset at all times,
with that proportion determined by a simple formula involving the
Sharpe ratio and the shock arrival intensity. The intuition is that
the exponential arrival time is memoryless, so that the household's
problem looks identical at every moment before the shock, and thus a
stationary policy is optimal. For a household facing only a
fixed-deadline goal, the optimal strategy
coincides with the policy in \citet{Browne1999}, which tilts more
aggressively into the risky asset the further the household is below
its target and the less time remains until the deadline.

We then show how these two single-goal solutions fit together in the
dual-goal problem. Once the first goal is resolved, the household
faces a standard single-goal continuation problem. This triangular
structure allows us to derive an exact dynamic programming
representation of the dual-goal value function, in which the
single-goal value functions appear as endogenous continuation values
rather than as externally specified payoffs. The dual-goal optimal
policy therefore has a natural interpretation: it is the strategy
that best manages the household's risk budget while both goals are
simultaneously pending, knowing that once one goal is resolved the
household will immediately switch to the appropriate single-goal
policy for the remaining liability. We characterize the dual-goal
value function as the unique solution to a Hamilton--Jacobi--Bellman
equation in which the two single-goal value functions appear as
endogenous boundary conditions, and establish existence and
uniqueness of this solution.

We also study an \emph{optional funding} variant of the model, in
which the household may decline the fixed-deadline goal at time $T$
rather than being required to fund it under the forced funding rule.
We characterize two measures of the value of this flexibility. The
{ex ante option value} is the full time-$0$ gain from optional
versus forced funding, computed before the household has observed the
market path, the random-arrival time, or either goal realization. The {terminal option value} is the value of the funding option
measured at the decision node at time $T$, for a fixed level of
terminal wealth, before the fixed-deadline goal amount is drawn.
Both options are hump-shaped in wealth, and have little value
at low wealth where the fixed-deadline goal is typically
unaffordable, and at high wealth where the household can fund the
goal while still preserving enough for the random-deadline
continuation problem. The option is most valuable at intermediate
wealth, where paying the fixed-deadline goal under forced funding
would substantially reduce the continuation value of the
random-deadline problem.

We calibrate all model parameters to empirical data, drawing on
job separation rates for the random-deadline arrival intensity
$\lambda$ and College Board and JP Morgan estimates for the
college and retirement goal amounts (\citep{jpmorgan2025replacement} and \citep{collegeboard2025pricing}). Our numerical analysis quantifies the economic significance of the deadline pressure effect: a household that begins
saving at age 25 achieves a success probability of approximately
80\% at \$100k of initial wealth, whereas a household that delays
until age 50 requires nearly \$1 million to reach the same
probability, a tenfold wealth-equivalent cost of lost compounding
time. The growth crowding-out effect is also quantitatively
significant: as the shock arrival intensity $\lambda$ rises from
$0.06$ to $0.40$, the value function shifts rightward by an
economically large amount at low wealth levels, and the optimal
risky position remains elevated over a substantially wider wealth
range. The sensitivity to preference weights further shows that as the
household places more weight on the fixed-deadline goal, the wealth
level at which it switches from precautionary to growth-oriented
investing rises, illustrating how the balance between the two
effects depends on the household's priorities. Our model  is robust to the
specification of the random-goal distribution. Replacing a
deterministic emergency expenditure with a lognormal of the same
mean produces nearly indistinguishable value and policy curves,
validating the use of a point estimate in calibration.

\subsection{Literature Review}
Traditional finance, rooted in \citet{Markowitz1952}, defines risk
as the standard deviation of returns. The behavioral portfolio theory
introduced by \citet{Shefrin2000} posits that investors naturally
organize their wealth into mental accounts tied to specific goals.
In this framework, risk is redefined as the probability of failing
to reach a specific wealth threshold \citep{Das2010, Brunel2015}.
\citet{Chhabra2005} further operationalize this by suggesting a
wealth allocation framework that buckets assets into personal,
market, and aspirational risks to protect basic standards of living
while pursuing growth.

Recent studies have developed frameworks for goal-based investing.
\citet{Das2022} develop a dynamic programming algorithm using
backward induction to prioritize competing annual goals, such as
tuition and retirement, showing that this approach significantly
outperforms target-date funds. \citet{Capponi2024} provide a
sequential goal-based wealth management framework, distinguishing
between flexible goals, i.e., those yielding partial utility, and
all-or-nothing goals. More recently, \citet{Bayraktar2025} expanded
this by incorporating ``mental costs'', namely psychological
penalties for moving funds between accounts, which reflect the
behavioral reality that households treat goal-linked savings as
non-fungible.

A challenge in household finance is balancing deterministic goals
such as a planned retirement with random-arrival shocks, such as
emergency medical expenses. While \citet{Browne1999} provides a
benchmark for reaching a goal by a fixed deadline, households must maintain precautionary buffers for unpredictable
events. Our contribution to this literature is to identify a 
{growth crowding-out effect}, where the necessity of funding a
random-deadline emergency under the forced funding rule depletes the
resources available for fixed-horizon targets, and to characterize
the value of relaxing this rule through making funding the goal optional.

Our framework has direct implications for robo-advisory platforms,
which have democratized goal-based financial planning by reducing
the cognitive and monetary costs of portfolio management (\citep{DAcunto2023}). For example, Betterment, one of the largest independent
robo-advisors, explicitly organizes client portfolios around
distinct financial goals including retirement, emergency funds,
education, and major purchases.\footnote{See Betterment's Goal
Projection and Advice Disclosure at
\url{https://www.betterment.com/legal/goal-projection}.} However,
each goal is managed in a separate account with its own allocation
and projection model, treating emergency preparedness and
long-horizon saving as independent problems rather than as
competing claims on the same portfolio. \citet{Reher2024} document
that robo-advisors have extended sophisticated wealth management to
households previously excluded from it. 
Our dual-goal framework provides the theoretical foundation for a
unified advisory rule that explicitly accounts for the interaction
between precautionary and accumulation goals.

\section{An Illustrative Three-Period Model}
\label{sec:DiscreteBenchmark}
The full two-goal problem is difficult for two reasons. First, the household faces two goals with non-coincident deadlines, one deterministic and one random. Second, funding a goal requires a fixed-dollar withdrawal from a multiplicative wealth process, destroying the scale invariance of geometric Brownian motion and ruling out closed-form solutions. To build intuition before the general analysis, we study a simplified three-period binomial model that isolates the central tradeoff: how much risk to take ex ante, knowing that funding one goal may deplete the resources available for the other. T

The benchmark imposes two simplifying restrictions. First, there are only three trading periods, with dates $t=0,1,2,3$. Second, portfolio revision is \emph{event-driven}: the investor may choose a new portfolio parameter only when a goal actually arrives. The household faces two goals: a \emph{random} goal of size $R>0$, which arrives at time
\[
\tau_R \in \{1,3\},
\qquad
\PP(\tau_R=1)=p,
\qquad
\PP(\tau_R=3)=1-p,
\]
and a \emph{deterministic} goal of size $G>0$, which arrives at time $2$.

Throughout this section, all wealth levels and liabilities are expressed in \emph{time-0 units}. Equivalently, a liability of size $A$ due at date $t$ requires a nominal payment of $b^tA$ at that date, where $b:=1+r>1$ is the one-period gross risk-free return. Thus the random goal requires a payment of $bR$ if it arrives at date $1$ and a payment of $b^3R$ if it arrives at date $3$, while the deterministic goal requires a payment of $b^2G$ at date $2$. Working in time-0 units removes deterministic discounting from the benchmark and makes the threshold structure especially transparent.

If no goal arrives at date $1$, the investor is not allowed to rebalance the portfolio at that time and must therefore carry the same portfolio weight $\alpha$ over the second period as well. These restrictions preserve the central tradeoff between an early and a later liability, while reducing the full dynamic problem to a scalar ex ante portfolio choice.

We impose a symmetric return structure, with the risky asset equally likely to move up or down in each period. The risk-free gross return lies exactly at the midpoint between the two risky gross returns. That is, the risk-free asset has gross return $b:=1+r>1$ and the risky asset has i.i.d.\ one-period gross returns independent of $\tau_R$
\[
Y_t \in \{u,d\},
\qquad
\PP(Y_t=u)=\PP(Y_t=d)=\frac12,
\]
where
\[
u=b+\Delta,
\qquad
d=b-\Delta,
\qquad
0<\Delta<b,
\]
so that $b=(u+d)/2$. It is convenient to work directly with discounted wealth multipliers. If the investor allocates fraction $\pi$ of current wealth to the risky asset over one period, then discounted next-period wealth is multiplied by
\[
\widetilde m^u(\pi):=1+\delta\pi
\qquad\text{or}\qquad
\widetilde m^d(\pi):=1-\delta\pi,
\]
depending on whether the risky asset moves up or down, where $\delta:=\Delta/b \in(0,1)$.
Shorting and leverage are allowed, but statewise solvency is imposed. Hence,  admissible portfolio weights satisfy $\widetilde m^u(\pi)\ge 0$ and
$\widetilde m^d(\pi)\ge 0$.
Equivalently, $\pi\in\Pi
:=
\left[-\delta^{-1},\ \delta^{-1}\right]$.

We impose an all-or-nothing funding rule: if a goal of size $A>0$
arrives when current discounted wealth is $x$, it is paid in full
if $x\ge A$, reducing wealth by $A$, and missed entirely otherwise,
leaving wealth unchanged. We encode this rule through the
post-payment wealth function
\[
\Gamma_A(x):=x-A\mathbf{1}_{\{x\ge A\}}.
\]

We write the benchmark preference weight on the random goal as $q\in[0,1]$. The investor's objective is the weighted success criterion
\begin{equation}
q\,\PP(\text{random goal is met})
+
(1-q)\,\PP(\text{deterministic goal is met}),
\label{eq:objective-raw-symmetric}
\end{equation}
which they seek to maximize by optimally allocating their portfolio at time $0$ and at the first event time.

At date $0$, the investor chooses a portfolio weight $\alpha\in\Pi$. If the random goal arrives at date $1$, then after the date-$1$ payment decision the investor may choose a new portfolio weight $\beta\in\Pi$ for the interval $[1,2]$. If instead no goal arrives at date $1$, then no new portfolio parameter may be chosen and the same weight $\alpha$ is carried over the second period. At date $2$, after the deterministic goal has been processed, the investor may choose a new portfolio weight $\gamma\in\Pi$ for the interval $[2,3]$, but this date-$2$ decision matters only on the branch $\{\tau_R=3\}$, because only on that branch does the random goal remain outstanding after date $2$.

This timing captures a natural economic setup. Portfolio revision is triggered by liability events, and the initial portfolio must be chosen as an ex ante compromise between preparedness for an early random liability and the ability to reach the deterministic goal when no early shock arrives.

\subsection{The Basic One-Period Target Problem}

The continuation problem after a goal arrives is a one-period target-hitting problem under a solvency constraint. For a target goal of size $A>0$, define the one-period success value
\[
\Phi_A(x):=\sup_{\pi\in\Pi}\PP\bigl(x\,\widetilde m(\pi)\ge A\bigr), \qquad x\ge 0,
\]
where $\widetilde m(\pi)\in\{\widetilde m^u(\pi),\widetilde m^d(\pi)\}$ denotes the random discounted gross return induced by $\pi$ over one period. Under the symmetry imposed above, this problem admits an especially simple solution and the proof is a straightforward
case-based analysis.

\begin{proposition}[One-period target problem]
\label{prop:one-step-symmetric}
For every $A>0$ and every $x\ge 0$,
\begin{equation}
\Phi_A(x)
=
\begin{cases}
1, & x\ge A,\\[6pt]
\dfrac12, & \dfrac{A}{2}\le x<A,\\[8pt]
0, & 0\le x<\dfrac{A}{2}.
\end{cases}
\label{eq:one-step-value-symmetric}
\end{equation}
Moreover, an optimal one-period portfolio may be chosen as follows:
\begin{itemize}
\item if $x\ge A$, choose $\pi=0$, which locks in success in both states;
\item if $A/2\le x<A$, choose either
\[
\pi^+(x;A):=\frac{A/x-1}{\delta}
\qquad\text{or}\qquad
\pi^-(x;A):=-\frac{A/x-1}{\delta},
\]
which hits the target exactly in the up state or in the down state, respectively;
\item if $x<A/2$, the target is infeasible within one period and any $\pi\in\Pi$ is optimal.
\end{itemize}
\end{proposition}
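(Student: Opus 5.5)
The argument is a case analysis on $x$ relative to $A$ and $A/2$. Since $\widetilde m(\pi)$ takes only the two values $1\pm\delta\pi$, each with probability $\tfrac12$, the success probability $\PP(x\,\widetilde m(\pi)\ge A)$ for any fixed $\pi\in\Pi$ lies in $\{0,\tfrac12,1\}$. It therefore suffices to decide, for each $x$, whether some admissible $\pi$ succeeds in both states, and if not, whether some admissible $\pi$ succeeds in at least one state.

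\emph{Both states.} Success in both states requires
\[
x(1+\delta\pi)\ge A \quad\text{and}\quad x(1-\delta\pi)\ge A .
\]
Adding these inequalities gives $2x\ge 2A$, i.e.\ $x\ge A$. Conversely, when $x\ge A$ the choice $\pi=0\in\Pi$ gives $x\ge A$ in both states. Hence the value is $1$ exactly when $x\ge A$, and $\pi=0$ is optimal there.

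\emph{At least one state.} Now take $x<A$. If $x=0$, the target $A>0$ is never reached, so the value is $0$. For $x>0$, the maximal attainable wealth in a single state is
\[
x\sup_{\pi\in\Pi}(1+\delta|\pi|)=2x,
\]
attained at $|\pi|=\delta^{-1}$. So a single state can reach $A$ iff $2x\ge A$, i.e.\ $x\ge A/2$.

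For $A/2\le x<A$, we check that $\pi^\pm(x;A)$ are admissible. Here $A/x-1\in(0,1]$, so $|\pi^\pm|=(A/x-1)/\delta\le \delta^{-1}$. Moreover $\pi^+$ gives up-state wealth exactly $x(1+(A/x-1))=A$, and $\pi^-$ does the same in the down state. Each therefore achieves probability $\tfrac12$, which is optimal since success in both states is impossible when $x<A$.

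For $0\le x<A/2$, both states fall below $A$ for every admissible $\pi$. The value is $0$ and every $\pi\in\Pi$ is optimal.

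The only point needing care is the boundary bookkeeping: the admissibility bound $|\pi|\le\delta^{-1}$ is exactly what produces the threshold $A/2$, and the closed endpoint $x=A/2$ corresponds to $|\pi^\pm|=\delta^{-1}$, which is admissible because $\Pi$ is closed. I expect no genuine obstacle beyond this.
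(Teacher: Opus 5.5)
Your proof is correct and takes the route the paper indicates: the paper gives no detailed proof, only the remark that the result follows from a straightforward case-based analysis. Your two cases supply exactly that analysis, namely adding the two state inequalities to get the lock-in threshold $x\ge A$, and using the bound $1+\delta|\pi|\le 2$ on $\Pi$ (with the closed endpoint $|\pi|=\delta^{-1}$ admissible) to get the gambling threshold $x\ge A/2$.
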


Proposition \ref{prop:one-step-symmetric} yields a three-region structure. Wealth above $A$ lies in a \emph{lock-in region}: the target can be secured safely by fully investing in the risk-free asset. Wealth between $A/2$ and $A$ lies in a \emph{gambling region}: the target can be reached only in one market state, and therefore with probability $1/2$. Wealth below $A/2$ lies in an \emph{infeasible region}: the target cannot be reached within one period. This one-period characterization is the key building block of our benchmark model. Once a goal arrives, the remaining problem always reduces to deciding whether post-payment wealth lies in the lock-in region, the gambling region, or the infeasible region for the remaining goal.

\subsection{Continuation Values After a Goal Arrival}

We now consider the continuation value of the investor's objective after the first goal arrives. Because the investor may choose a new portfolio parameter only when a goal occurs, both continuation problems reduce to the one-period target problem above.

Suppose the random goal arrives at date $1$, and let $x$ denote date-$1$ discounted wealth just before the date-$1$ payment decision. Since the random goal arrives first, we call this the \emph{early branch} of the problem. The continuation value on this branch is
\begin{equation}
E(x)
:=
q\,\mathbf 1_{\{x\ge R\}}
+
(1-q)\,\Phi_G\!\bigl(\Gamma_R(x)\bigr).
\label{eq:E-def-symmetric}
\end{equation}
The first term records success of the random goal at date $1$. The second term is the maximal probability of meeting the deterministic goal at date $2$, starting from post-payment discounted wealth $\Gamma_R(x)$.

Suppose instead that we are on the \emph{late branch} of the problem where the random goal does not arrive at date $1$, and let $z$ denote date-$2$ discounted wealth just before the deterministic goal is processed. Then the continuation value from date $2$ onward is
\begin{equation}
L(z)
:=
(1-q)\,\mathbf 1_{\{z\ge G\}}
+
q\,\Phi_R\!\bigl(\Gamma_G(z)\bigr).
\label{eq:L-def-symmetric}
\end{equation}
The first term records success of the deterministic goal at date $2$. The second term is the maximal probability of meeting the random goal at date $3$, starting from post-payment discounted wealth $\Gamma_G(z)$.

An optimal continuation portfolio after the date-$1$ arrival of the random goal is obtained by applying Proposition \ref{prop:one-step-symmetric} to the target $G$ with current wealth $\Gamma_R(x)$; similarly, an optimal continuation portfolio after the date-$2$ arrival of the deterministic goal is obtained by applying Proposition \ref{prop:one-step-symmetric} to the target $R$ with current wealth $\Gamma_G(z)$.

The continuation values are step functions with very transparent thresholds. In particular, by \eqref{eq:one-step-value-symmetric}, the only possible jump points of $E$ are contained in
\begin{equation}
\mathcal C_E
:=
\left\{
\frac{G}{2},\ G,\ R,\ R+\frac{G}{2},\ R+G
\right\},
\label{eq:CE-symmetric}
\end{equation}
and the only possible jump points of $L$ are contained in
\begin{equation}
\mathcal C_L
:=
\left\{
\frac{R}{2},\ R,\ G,\ G+\frac{R}{2},\ G+R
\right\}.
\label{eq:CW-symmetric}
\end{equation}

We can see the crowding-out mechanism clearly in the late branch. At $z=G$, the deterministic goal becomes just fundable, so
$L(G)=1-q$,
because paying $G$ leaves zero wealth and therefore no chance of meeting $R$ at date $3$. By contrast, $L(G^-)=q\,\Phi_R(G^-)$.
Using \eqref{eq:one-step-value-symmetric},
\[
L(G^-)
=
\begin{cases}
0, & G\le \dfrac{R}{2},\\[8pt]
\dfrac{q}{2}, & \dfrac{R}{2}<G\le R,\\[10pt]
q, & G>R.
\end{cases}
\]
Therefore $L$ has a downward jump at $G$ precisely when $L(G^-)>L(G)=1-q$. Equivalently,
\[
q>\frac23
\quad\text{if}\quad
\frac{R}{2}<G\le R,
\qquad
\text{or}
\qquad
q>\frac12
\quad\text{if}\quad
G>R.
\]
This is a transparent expression of the crowding-out mechanism. When the random goal is sufficiently important, becoming just rich enough to fund the deterministic goal can reduce continuation value, because the payment of $G$ destroys too much optionality for meeting $R$ at date $3$.

A similar effect appears on the early branch. At $x=R$, crossing the threshold simultaneously secures the random goal and forces its payment. The first effect raises value; the second lowers the resources available for the deterministic goal. Thus the jump at $R$ also need not have a fixed sign. Economically, this is exactly the tradeoff the household faces: paying one goal can either help or hurt overall success, depending on how much it tightens the funding of the other goal.

\subsection{The Initial Dual-Goal Problem Value}

We now return to the investor's date-$0$ problem. Let $w>0$ denote initial wealth in time-0 units. If the investor chooses portfolio weight $\alpha\in\Pi$ at date $0$, then date-$1$ discounted wealth on the two possible market branches is
\[
x^u(\alpha):=w(1+\delta\alpha),
\qquad
x^d(\alpha):=w(1-\delta\alpha).
\]
On the branch $\{\tau_R=1\}$, these are the relevant continuation wealth levels, so the early-branch contribution to the date-$0$ objective is
\[
\frac{p}{2}\Bigl(E\bigl(x^u(\alpha)\bigr)+E\bigl(x^d(\alpha)\bigr)\Bigr).
\]

If instead the random goal does not arrive at date $1$, then no new portfolio parameter may be chosen and the same weight $\alpha$ is carried over the second period. The resulting date-$2$ discounted wealth levels are
\[
z^{uu}(\alpha):=w(1+\delta\alpha)^2,
\qquad
z^{ud}(\alpha)=z^{du}(\alpha):=w(1-\delta^2\alpha^2),
\qquad
z^{dd}(\alpha):=w(1-\delta\alpha)^2.
\]
Hence the late-branch contribution to the date-$0$ objective is
\[
\frac{1-p}{4}
\Bigl(
L\bigl(z^{uu}(\alpha)\bigr)
+
2L\bigl(z^{ud}(\alpha)\bigr)
+
L\bigl(z^{dd}(\alpha)\bigr)
\Bigr).
\]

Combining the two branches, the time-$0$ value function is
\begin{equation}
V(w)
=
\sup_{\alpha\in\Pi}
\Biggl\{
\frac{p}{2}\Bigl(E\bigl(x^u(\alpha)\bigr)+E\bigl(x^d(\alpha)\bigr)\Bigr)
+
\frac{1-p}{4}
\Bigl(
L\bigl(z^{uu}(\alpha)\bigr)
+
2L\bigl(z^{ud}(\alpha)\bigr)
+
L\bigl(z^{dd}(\alpha)\bigr)
\Bigr)
\Biggr\}.
\label{eq:V0-general}
\end{equation}
A useful symmetry further simplifies the problem. The objective in \eqref{eq:V0-general} is an even function of $\alpha$. Indeed, replacing $\alpha$ by $-\alpha$ swaps $x^u(\alpha)$ and $x^d(\alpha)$, swaps $z^{uu}(\alpha)$ and $z^{dd}(\alpha)$, and leaves $z^{ud}(\alpha)$ unchanged. Since the date-$0$ objective depends on these quantities only through the symmetric combinations
\[
E\bigl(x^u(\alpha)\bigr)+E\bigl(x^d(\alpha)\bigr)
\quad\text{and}\quad
L\bigl(z^{uu}(\alpha)\bigr)+2L\bigl(z^{ud}(\alpha)\bigr)+L\bigl(z^{dd}(\alpha)\bigr),
\]
it follows that the value associated with $\alpha$ is the same as that associated with $-\alpha$. Consequently, it is sufficient to optimize over $\alpha\in[0,\delta^{-1}]$.

Expression \eqref{eq:V0-general} therefore gives a clean economic interpretation of the date-$0$ decision on the reduced domain $\alpha\in[0,\delta^{-1}]$. The functions $E$ and $L$ are step functions, so the date-$0$ portfolio problem is a threshold-crossing problem. The investor chooses $\alpha$ in order to determine which state-contingent wealth levels
\[
x^u(\alpha),\ x^d(\alpha),\ z^{uu}(\alpha),\ z^{ud}(\alpha),\ z^{dd}(\alpha)
\]
lie above or below the continuation thresholds in $\mathcal C_E$ and $\mathcal C_L$. Increasing $\alpha$ raises wealth on the up-state branches and lowers wealth on the down-state branches; at the same time it lowers the mixed branch wealth $z^{ud}(\alpha)=w(1-\delta^2\alpha^2)$ by spreading wealth more aggressively across extreme outcomes. The initial portfolio therefore trades off downside protection against the chance of pushing favorable states across higher funding thresholds. In particular, the date-$0$ decision has a natural precautionary-buffer interpretation. Choosing a more conservative $\alpha$ preserves more wealth on the downside branches, making it more likely that an early random goal can be absorbed without exhausting the resources needed for the remaining objective. By contrast, a more aggressive $\alpha$ increases the chance that favorable states cross higher funding thresholds, but does so at the cost of weakening this buffer.

This structure also implies that the date-$0$ problem is effectively finite-dimensional. Since $E$ and $L$ change value only when one of the state-contingent wealth levels crosses a threshold in \eqref{eq:CE-symmetric} or \eqref{eq:CW-symmetric}, the objective in \eqref{eq:V0-general} is constant between adjacent critical values of $\alpha$. Thus an optimal date-$0$ portfolio can be identified by comparing the finitely many portfolio weights at which such threshold crossings occur.

This problem likewise admits clean comparative statics in the parameters $p$ and $q$. For any fixed initial portfolio weight $\alpha$, the date-$0$ objective in \eqref{eq:V0-general} is affine in $p$, since it is simply a weighted average of the early- and late-branch continuation values induced by that same $\alpha$. Similarly, because the continuation values $E$ and $L$ are themselves affine in $q$, the date-$0$ objective is affine in $q$ for every fixed $\alpha$. Since the date-$0$ problem reduces to comparing finitely many threshold-crossing candidate values of $\alpha$, the optimized value is the supremum of finitely many affine functions of $p$ and, separately, of $q$. It follows that the value function is continuous, piecewise affine, and separately convex in each parameter, though not necessarily jointly convex in $(p,q)$. Economically, variation in $p$ and $q$ does not alter the form of the optimal policy within a given regime; rather, it changes the ranking of finitely many threshold-crossing candidate portfolios. As a result, the value function varies affinely within each regime and changes slope only when the optimizer switches from one regime to another.

\subsubsection{The Equal-Goal Case $R=G$}

The threshold structure becomes especially transparent in the symmetric case $R=G$. In that case, the early and late continuation problems share the same threshold set
\[
\left\{\frac{R}{2},\ R,\ \frac{3R}{2},\ 2R\right\}.
\]
and substituting $G=R$ into \eqref{eq:E-def-symmetric} and \eqref{eq:L-def-symmetric} yields the explicit continuation values
\[
E(x)
=
\begin{cases}
0, & 0\le x<\dfrac{R}{2},\\[8pt]
\dfrac{1-q}{2}, & \dfrac{R}{2}\le x<R,\\[10pt]
q, & R\le x<\dfrac{3R}{2},\\[10pt]
\dfrac{1+q}{2}, & \dfrac{3R}{2}\le x<2R,\\[10pt]
1, & x\ge 2R,
\end{cases} \quad \text{and} \quad
L(z)
=
\begin{cases}
0, & 0\le z<\dfrac{R}{2},\\[8pt]
\dfrac{q}{2}, & \dfrac{R}{2}\le z<R,\\[10pt]
1-q, & R\le z<\dfrac{3R}{2},\\[10pt]
1-\dfrac{q}{2}, & \dfrac{3R}{2}\le z<2R,\\[10pt]
1, & z\ge 2R.
\end{cases}
\]

These four thresholds have a direct economic interpretation:
\begin{itemize}
\item $R/2$ is the minimum wealth from which the remaining goal is still reachable in one favorable market state;
\item $R$ is the payment threshold for the goal that arrives first;
\item $3R/2$ is the level at which paying the first goal leaves exactly $R/2$, so that the second goal becomes feasible;
\item $2R$ is the level at which paying the first goal leaves exactly $R$, so that the second goal can be locked in with certainty.
\end{itemize}

The date-$0$ value remains the same in terms of $E(\cdot)$ and $L(\cdot)$,
but now the date-$0$ decision is a substantially simplified trade off. We see that the story throughout is that there is an ex ante compromise between preserving enough wealth to absorb an early random liability and taking enough risk to move favorable future states into the next threshold region.

\section{Model Setup}\label{sec:gen.model}

We now turn to the general model in continuous time. Let $(\Omega,\mathcal F,\PP)$ be a probability space supporting the Brownian motion and the exogenous goal variables introduced below. Let $B=(B_t)_{t\ge0}$ be a $d$-dimensional Brownian motion, and let $\mathbb F=(\mathcal F_t)_{t\ge0}$
be the usual augmentation of the natural filtration generated by $B$.

We consider a financial market with $n$ risky assets and one money-market account with constant interest rate $r\ge 0$. The risky asset prices satisfy
\begin{equation}\label{eq:S}
dS_t=\operatorname{diag}(S_t)\bigl(\muv\,dt+\sigmavec\,dB_t\bigr),
\end{equation}
where $\muv\in\RR^n$ and $\sigmavec\in\RR^{n\times d}$ are constant, $n\leq d$, $\operatorname{rank}(\sigmavec)=n$, and $\Sigmav:=\sigmavec\sigmavec^\top$
is positive definite. We write
$\thetavec:=\muv-r\onev$.

The investor faces two exogenous goals.
\begin{itemize}
\item A \emph{fixed-deadline goal} is due at the deterministic time $T>0$, with random amount $G\ge 0$.
\item A \emph{random-deadline goal} arrives at the random time $\tau$, with random amount $R\ge 0$, where $\tau\sim \mathrm{Exp}(\lambda)$ for some $\lambda>0$.
\end{itemize}
We assume that $R$, $G$, and $\tau$ are mutually independent and independent of $\mathcal F_\infty$, and that $R$ and $G$ have continuous distribution functions $F_R$ and $F_G$ on $\mathbb{R}$ with compact supports contained in $[0,b_R]$ and $[0,b_G]$, respectively. Write $b:=b_R+b_G$ to be the maximum dollar amount needed to satisfy both goals. Both $R$ and $G$ are measured in time-$T$ dollars. Thus if the random goal arrives at $s$, the required current-dollar payment is $e^{-r(T-s)}R$, which is bigger than $R$ when $s>T$ and less than $R$ otherwise.

The investor observes market information continuously, but the goal information is revealed only when the corresponding event occurs. More precisely, the pair $(\tau,R)$ is revealed at time $\tau$, while $G$ is revealed at time $T$. We therefore work with the enlarged filtration $\mathbb G=(\mathcal G_t)_{t\ge 0}$ defined by
\[
\mathcal G_t
:=
\bigcap_{u>t}
\Bigl(
\mathcal F_u
\vee
\sigma(\tau\wedge u)
\vee
\sigma\!\bigl(R\,\mathbf 1_{\{\tau\le u\}}\bigr)
\vee
\sigma\!\bigl(G\,\mathbf 1_{\{T\le u\}}\bigr)
\Bigr),
\qquad t\ge 0,
\]
augmented by $\PP$-null sets. Then $\tau$ is a $\mathbb G$-stopping time, $R$ is $\mathcal G_\tau$-measurable, and $G$ is $\mathcal G_T$-measurable. Since $(\tau,R,G)$ is independent of $\mathcal F_\infty$, the process $B$ remains a $d$-dimensional Brownian motion with respect to $\mathbb G$.

Fix a leverage/short-sale bound $K>0$ and set $\Pi_K:=[-K,K]^n$.
A trading strategy is a $\mathbb G$-predictable process $\piv=(\piv_t)_{t\ge0}$ with values in $\Pi_K$, where $\pi_t^i$ denotes the fraction of wealth invested in risky asset $i$ at time $t$, and the residual fraction $1-\onev^\top\piv_t$ is invested in the money-market account. For $0\le t<u\le\infty$, let
\[
\mathcal A^K_{t,u}:=
\Bigl\{\piv:\ \piv \text{ is }\mathbb G\text{-predictable on }[t,u],\ \piv_s\in\Pi_K\text{ for }ds\otimes d\PP\text{-a.e. }\Bigr\}.
\]
Because $\Pi_K$ is bounded, every $\piv\in\mathcal A^K_{t,u}$ is locally square-integrable on finite subintervals. On the event that wealth reaches zero, the value of $\piv$ is immaterial; one may set $\piv_t=0$ thereafter.

Given initial current-dollar wealth $w>0$ and a strategy $\piv\in \mathcal A^K_{0,\infty}$, the associated pre-payment wealth process $W^{w;\piv}$ solves
\begin{equation}\label{eq:W}
dW_t
=
W_t\Bigl(r\,dt+\piv_t^\top\thetavec\,dt+\piv_t^\top\sigmavec\,dB_t\Bigr),
\qquad
W_0=w.
\end{equation}
It is convenient to express wealth in time-$T$ dollars by setting
\[
X_t:=e^{r(T-t)}W_t,
\qquad t\ge 0.
\]
Then $X$ satisfies the time-homogeneous SDE
\begin{equation}\label{eq:X}
dX_t
=
X_t\Bigl(\piv_t^\top\thetavec\,dt+\piv_t^\top\sigmavec\,dB_t\Bigr),
\qquad
X_0=x:=e^{rT}w.
\end{equation}
Thus $X_t$ represents wealth measured in time-$T$ dollars. Equivalently, a liability of size $y$ in time-$T$ dollars requires a current-dollar payment of $e^{-r(T-t)}y$ at time $t$. From this point onward, unless explicitly stated otherwise, wealth variables are measured in time-$T$ dollars. More generally, for $0\le t\le s$ and initial time-$T$ wealth $w$, we write $\widetilde W_s^{t,w;\piv}$ for the pre-payment wealth process satisfying
\begin{equation}\label{eq:pre.pay.wealth}
 d\widetilde W_s^{t,w;\piv}
 =\widetilde W_s^{t,w;\piv}\Bigl(\piv_s^\top\thetavec\,ds+\piv_s^\top\sigmavec\,dB_s\Bigr),
 \qquad \widetilde W_t^{t,w;\piv}=w.
\end{equation}
In particular, $X_s=\widetilde W_s^{0,x;\piv}$.

In the forced funding model, the investor follows an all-or-nothing funding rule: when a goal arrives, it is paid in full if affordable and otherwise missed. For a given admissible strategy $\piv\in\mathcal A^K_{0,\infty}$ and initial time-$T$-dollar wealth $x>0$, we define the post-payment wealth process $\widehat W^{x;\piv}$ as the unique càdlàg adapted process satisfying
\begin{align}
\widehat W_t^{x;\piv}
&=
x
+\int_0^t \widehat W_{s-}^{x;\piv}\,\piv_s^\top\thetavec\,ds
+\int_0^t \widehat W_{s-}^{x;\piv}\,\piv_s^\top\sigmavec\,dB_s \notag\\
&\quad
-R\,\mathbf 1_{\{\tau\le t,\ \widehat W_{\tau-}^{x;\piv}\ge R\}}
-G\,\mathbf 1_{\{T\le t,\ \widehat W_{T-}^{x;\piv}\ge G\}},
\qquad t\ge 0.
\label{eq:wealth-with-jumps}
\end{align}
Equivalently, $\widehat W^{x;\piv}$ follows the diffusion \eqref{eq:X} between goal times and has jumps
\begin{align}
\Delta \widehat W_\tau^{x;\piv}
&=
-R\,\mathbf 1_{\{\widehat W_{\tau-}^{x;\piv}\ge R\}},
\label{eq:jump-R}
\\
\Delta \widehat W_T^{x;\piv}
&=
-G\,\mathbf 1_{\{\widehat W_{T-}^{x;\piv}\ge G\}}.
\label{eq:jump-G}
\end{align}
Since $\tau$ has a continuous distribution, $\PP(\tau=T)=0$, no tie-breaking convention is needed except on a null event. The process $\widehat W^{x;\piv}$ is the investor's wealth, measured in time-$T$ dollars, after incorporating the all-or-nothing payments associated with the two goals.

The household's preferences over the two goals are represented by weights $(\alpha_R,\alpha_D)\in[0,1]^2$ satisfying $\alpha_R+\alpha_D=1$, where $\alpha_R$ is the weight on the random-deadline goal and $\alpha_D$ is the weight on the fixed-deadline goal. Under the all-or-nothing funding rule, the random goal is successfully funded if and only if $\widehat W_{\tau-}^{x;\piv}\ge R$, while the fixed-deadline goal is successfully funded if and only if $\widehat W_{T-}^{x;\piv}\ge G$. Accordingly, the household seeks to maximize
\begin{equation}\label{eq:simple-obj}
J(x;\piv)
:=
\alpha_R\,\PP\!\left(\widehat W_{\tau-}^{x;\piv}\ge R\right)
+
\alpha_D\,\PP\!\left(\widehat W_{T-}^{x;\piv}\ge G\right),
\end{equation}
over all admissible strategies $\piv\in\mathcal A^K_{0,\infty}$, and we define
\begin{equation}\label{eq:VRDK.prob.def.t0}
V^{R,D,K}(0,x):=\sup_{\piv\in\mathcal A^K_{0,\infty}} J(x;\piv).
\end{equation}
This criterion is a weighted sum of the two marginal success probabilities. For $0\le t<T$, $V^{R,D,K}(t,w)$ denotes the corresponding ex ante value conditional on the random-deadline goal not having arrived before $t$ and the fixed-deadline goal not yet being due, starting from pre-payment time-$T$ wealth $w$.

The dual-goal problem has a natural triangular structure that will guide the solution method throughout the paper. Because
$\PP(\tau = T) = 0$, the two goal events cannot coincide, so almost
surely exactly one of $\tau < T$ or $\tau > T$ holds. Once the first
event occurs, the household faces a strictly simpler problem: if the
random shock arrives at $\tau < T$, only the fixed-deadline goal
remains, with residual horizon $T - \tau$ and post-payment wealth
$\widehat{W}_{\tau}^{x;\piv}$; if instead time $T$ arrives first,
only the random-deadline goal remains, with residual time $\tau-T$ which, conditional on $\{\tau>T\}$ and $\mathcal G_T$, is independent exponential with rate $\lambda$ and is independent of future market increments and the revealed value of $G$.
The dual-goal value function can therefore be constructed from two
single-goal building blocks, which we analyze in
Section~\ref{sec:single} before combining them in
Section~\ref{sec:dual}. We defer the proofs of all of the main theoretical results to the Appendices.

\section{Single-Goal Sub-Problems}
\label{sec:single}

In this section we analyze the two single-goal sub-problems that serve as building blocks for the dual-goal problem of Section~\ref{sec:dual}. Section~\ref{sec:fixed} considers the fixed-deadline problem, in which the household maximizes the probability that wealth at the deterministic deadline is at least the goal amount. Section~\ref{sec:random} considers the random-deadline problem, in which the household maximizes the probability that wealth is at least the goal amount when the random deadline arrives.

For the single-goal problems, we use the following time-indexed clipped pre-arrival classes. For $0\le t<u\le\infty$, set
\[
\mathcal A^{K,{\rm pre}}_{t,u}
:=
\Bigl\{
\piv:\ \piv \text{ is }\mathbb F\text{-predictable on }[t,u],\
\ \piv_s\in\Pi_K\text{ for }ds\otimes d\PP\text{-a.e. }(s,\omega)
\Bigr\}.
\]
When $u=\infty$, the condition is understood locally on every finite interval. Because $\Pi_K$ is bounded, all such controls are locally square-integrable. For these sub-problems there is only one goal that is revealed at either $\tau$ or $T$ and it requires immediate compliance. As a result, the only information available to the household for investment decisions leading up to the goal is reflected in $\mathbb{F}$. In Section~\ref{sec:dual} we will reconcile this with the dual problem over the full filtration $\mathbb{G}$.

\subsection{Fixed-Deadline Goal}
\label{sec:fixed}
We first solve the fixed-deadline benchmark. This is the household's pure accumulation problem: how should we invest to maximize our chance of meeting a liability due at a known future date?

\subsubsection{Deterministic Amount Goal}\label{sec:FD-det}

Before turning to our main result, we begin with a deterministic
unconstrained benchmark because, despite falling outside of our framework, it is solved explicitly in \cite{Browne1999} and gives useful intuition. 

Fix $g>0$ and suppose $G\equiv g$. The unconstrained fixed-deadline
benchmark is
\[
V^{D}_{\mathrm{Br}}(t,w;g)
:=
\sup_{\piv}
\mathbb E\!\left[
\mathbf 1_{\{\widetilde W_T^{t,w;\piv}\ge g\}}
\right],
\qquad (t,w)\in[0,T)\times[0,\infty),
\]
where the supremum is over admissible square-integrable $\mathbb F$-predictable portfolio rules with no leverage or short-selling constraints. If $w\ge g$, the household can
invest only in the money-market account and lock in success, so
\[
V^{D}_{\mathrm{Br}}(t,w;g)=1,
\qquad 0\le t<T,\quad w\ge g,
\]
and $V^{D}_{\mathrm{Br}}(t,0;g)=0$.

Define $\gamma^2:=\thetavec^\top\Sigmav^{-1}\thetavec$ and assume $\gamma^2>0$. Browne's unconstrained solution is
\begin{equation}
\label{eq:VD-det-closedform-n-pi}
V^{D}_{\mathrm{Br}}(t,w;g)
=
\Phi\!\left(
\Phi^{-1}\!\left(\frac{w}{g}\right)
+
\gamma\sqrt{T-t}
\right),
\qquad 0<w<g,\quad 0\le t<T
\end{equation}
where $\Phi$ is the standard normal CDF.
The associated optimal portfolio is
\begin{equation}
\label{eq:pi-det-closedform-n}
\piv^\ast(t,w;g)
=
\frac{g}{w}
\frac{\phi\!\left(\Phi^{-1}\!\left(\frac{w}{g}\right)\right)}
{\gamma\sqrt{T-t}}
\Sigmav^{-1}\thetavec,
\qquad 0<w<g,\quad 0\le t<T.
\end{equation}
where $\phi$ is the standard normal density.
We notice that the feedback \eqref{eq:pi-det-closedform-n} generally becomes
unbounded as $t\uparrow T$. This manifests in the value which satisfies\[
\lim_{t\uparrow T}V^{D}_{\mathrm{Br}}(t,w;g)=\frac{w}{g},
\qquad 0<w<g.
\]
That is, the positions get increasingly aggressive to maximize the probability of meeting the goal and as a result the value does not converge to the indicator function as $t\uparrow T$. The constraint we have introduced will prevent this phenomenon.

\subsubsection{Random Amount Goal}
\label{sec:random.amt}

We now return to the standing assumptions of Section~\ref{sec:gen.model}. The
fixed-deadline goal amount $G$ has continuous distribution function
$F_G$ supported on $[0,b_G]$, with $F_G(0)=0$ and $F_G(b_G)=1$.
We fix $K>0$, and continue to use the clipped pre-arrival admissible class $\mathcal A^{K, \rm pre}_{t,T}$.

Since $G$ is independent of $\mathcal F_{T^-}$ and is revealed only at
$T$, and since the pre-payment wealth process is continuous,
\[
\mathbb E\!\left[
\mathbf 1_{\{\widetilde W_{T^-}^{t,w;\piv}\ge G\}}
\,\middle|\, \mathcal F_{T^-}
\right]
=
\mathbb P\!\left(
G\le \widetilde W_{T^-}^{t,w;\piv}
\,\middle|\, \mathcal F_{T^-}
\right)
=
F_G\!\left(\widetilde W_T^{t,w;\piv}\right).
\]
Therefore maximizing the probability of funding the fixed-deadline
goal is equivalent to maximizing the expected terminal payoff
$F_G(\widetilde W_T)$. Define
\begin{equation}
\label{eq:VDK-def}
V^{D,K}(t,w)
:=
\sup_{\piv\in\mathcal A^{K,{\rm pre}}_{t,T}}
\mathbb E\!\left[
F_G\!\left(\widetilde W_T^{t,w;\piv}\right)
\right],
\qquad (t,w)\in[0,T]\times[0,\infty).
\end{equation}
As before, if $w\ge b_G$, the household can choose $\piv\equiv0$ and lock in
terminal wealth of at least $b_G$, so
\[
V^{D,K}(t,w)=1,
\qquad 0\le t\le T,\quad w\ge b_G.
\]
Also $V^{D,K}(t,0)=0$ because wealth remains zero and $F_G(0)=0$.

It is therefore enough to solve the following clipped Hamilton--Jacobi--Bellman (HJB) equation on
$\Omega_D:=(0,T)\times(0,b_G)$:
\begin{equation}
\label{eq:hjb-rgdt-clipped}
\partial_t V^{D,K}(t,w)
+
H^K\!\left(
w,V_w^{D,K}(t,w),V_{ww}^{D,K}(t,w)
\right)
=0,
\qquad (t,w)\in\Omega_D,
\end{equation}
with parabolic boundary data
\begin{equation}\label{eq:rgdt-bdy-data}
V^{D,K}(T,w)=F_G(w),\quad 0\le w\le b_G,
\qquad
V^{D,K}(t,0)=0,\quad V^{D,K}(t,b_G)=1,\quad 0\le t\le T,
\end{equation}
where
\begin{equation}\label{eqn:HK}
H^K(w,p,q)
:=
\sup_{\piv\in\Pi_K}
\left\{
wp\,\piv^\top\thetavec
+
\frac12 w^2q\,\piv^\top\Sigmav\piv
\right\}.
\end{equation}
We have the following result that characterizes $V^{D,K}$.

\begin{proposition}
\label{prop:VDK-viscosity}
The value function $V^{D,K}$ defined in
\eqref{eq:VDK-def} is bounded and continuous on
$[0,T]\times[0,\infty)$, satisfies $V^{D,K}(t,w)=1$ for
$w\ge b_G$, and its restriction to $[0,T)\times(0,b_G)$ is the
unique bounded continuous viscosity solution of
\eqref{eq:hjb-rgdt-clipped} with the boundary data in \eqref{eq:rgdt-bdy-data}.
\end{proposition}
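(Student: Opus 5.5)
We will prove the statement with the standard three-part program: regularity of the value function, the viscosity property via the dynamic programming principle, and uniqueness via a comparison principle on the bounded domain $\Omega_D$.

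\textbf{Boundedness and the easy boundary values.} Boundedness is immediate, since $0\le F_G\le 1$. For $w\ge b_G$, the control $\piv\equiv0$ keeps $\widetilde W_T=w\ge b_G$, so $V^{D,K}=1$ there. At $w=0$ the wealth stays at zero, so $V^{D,K}(t,0)=F_G(0)=0$. At $t=T$ we have $V^{D,K}(T,w)=F_G(w)$ trivially.

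\textbf{Continuity.} We use the explicit representation
\[
\widetilde W_s^{t,w;\piv}=w\,\mathcal E_s,
\qquad
\mathcal E_s=\exp\Bigl(\int_t^s(\piv^\top\thetavec-\tfrac12\piv^\top\Sigmav\piv)\,du+\int_t^s\piv^\top\sigmavec\,dB_u\Bigr).
\]
Since $\Pi_K$ is compact, $\mathbb E[\mathcal E_T^2]\le C$ uniformly in $\piv$. Lipschitz dependence on $w$ is not available directly, because $F_G$ is only continuous. However, $F_G$ is uniformly continuous on $\RR$ (it is constant outside $[0,b_G]$), and for any $\eta>0$ we have $|w\mathcal E-w'\mathcal E|\le |w-w'|\mathcal E$. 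Splitting on the event $\{\mathcal E>M\}$ and applying Markov's inequality gives a modulus of continuity in $w$ that is uniform over $\piv$.

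For continuity in $t$, we compare the problem started at $t$ with the one started at $t'>t$. We use $\piv\equiv0$ on $[t,t']$, and conversely we shift controls. This bounds the difference by $\sup_{\piv}\mathbb E|F_G(w\mathcal E_{t,t'})-F_G(w)|$ plus the spatial modulus, which tends to $0$ as $t'-t\to0$, uniformly in $\piv$, since $\mathcal E_{t,t'}\to1$ in $L^1$ uniformly. Continuity near $w=b_G$ and near $t=T$ follows from the same estimates, so the boundary data are attained continuously.

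\textbf{Viscosity property.} We first establish the dynamic programming principle
\[
V^{D,K}(t,w)=\sup_{\piv}\mathbb E\bigl[V^{D,K}(\theta,\widetilde W_\theta)\bigr]
\]
for stopping times $\theta\in[t,T]$. Because $V^{D,K}$ is already known to be continuous, the DPP can be proved by the usual countable-partition $\varepsilon$-optimal pasting argument, with no measurable selection subtleties. The sub- and supersolution properties of \eqref{eq:hjb-rgdt-clipped} then follow in the standard way. For the supersolution property we take constant controls $\piv\in\Pi_K$ and apply Itô's formula to the test function. For the subsolution property we argue by contradiction on a small ball, using the exit time. Compactness of $\Pi_K$ makes $H^K$ finite and continuous, which is needed here.

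\textbf{Comparison (the main obstacle).} Uniqueness reduces to a comparison principle for bounded semicontinuous sub- and supersolutions on $(0,T)\times(0,b_G)$ with ordered parabolic boundary data. The Hamiltonian is degenerate at $w=0$, and $H^K$ is only Lipschitz-type in $(p,q)$ with coefficients $w,w^2$. We will handle this with the change of variables $y=\log w$. This turns the operator into one with constant coefficients in $y$, namely $\sup_{\piv}\{\piv^\top\thetavec\,u_y+\tfrac12\piv^\top\Sigmav\piv\,(u_{yy}-u_y)\}$, on the half-infinite domain $(-\infty,\log b_G)$. The boundary at $w=0$ becomes $y\to-\infty$.

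On this transformed problem we run the doubling-of-variables argument with Ishii's lemma. We add the standard penalization $\varepsilon/(T-t)$ and a term $\beta$ times a growth function in $|y|$ to localize as $y\to-\infty$. For the localization at $y\to-\infty$ we must use that both the sub- and supersolution tend to $0$ there. For the solutions actually compared this comes from continuity at $w=0$, and we restrict the uniqueness class to bounded continuous functions attaining the boundary data continuously, which is the sense of the statement. Since $\piv$ ranges over a compact set, the Hamiltonian's dependence on $(u_y,u_{yy})$ is uniformly Lipschitz, so the usual estimate closes.

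Finally, the value function is a bounded continuous viscosity solution with the required data, so comparison yields that it is the unique such solution.
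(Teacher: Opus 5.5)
Your proposal is correct, but two of its three steps take a different route from the paper.

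\textbf{Continuity.} The paper approximates $F_G$ uniformly by Lipschitz payoffs $F_G^m$. It quotes Touzi's Proposition~3.7 for continuity of the approximating value functions and passes to the uniform limit. You instead use the multiplicative structure $\widetilde W=w\mathcal E$ together with uniform continuity of $F_G$ and a Markov bound. This is more elementary and self-contained.

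\textbf{Viscosity property.} Both arguments go through the dynamic programming principle; the paper simply cites Touzi's Theorem~7.4, while you prove it by hand. Continuity of $V^{D,K}$ does remove measurable-selection issues in the pasting direction. The reverse inequality $\E[F_G(\widetilde W_T)\mid\F_\theta]\le V^{D,K}(\theta,\widetilde W_\theta)$ still needs a conditioning/canonical-control argument of the type in Lemma~\ref{lem:fresh}.

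\textbf{Comparison.} This is the real difference. The paper stays in the $w$ variable on the bounded interval $(0,b_G)$: it reverses time, subtracts $w/b_G$ to get homogeneous Dirichlet data, and applies the Crandall--Ishii--Lions Cauchy--Dirichlet theorem, checking (3.14) via their Example~3.6. The degeneracy at $w=0$ that worries you is harmless there. The coefficients $w\piv^\top\thetavec$ and $w(\piv^\top\Sigmav\piv)^{1/2}$ are Lipschitz on $[0,b_G]$ uniformly over $\Pi_K$, and $w=0$ is treated as an ordinary Dirichlet boundary. That route gives comparison for semicontinuous sub/supersolutions, and the same template is reused for $V^{R,K}$ and the dual-goal equation.

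Your substitution $y=\log w$ gives a translation-invariant operator, so the Ishii-lemma estimate closes without $\alpha|y-y'|^2$ error terms. The price is an unbounded domain, and a comparison proved only for functions continuous up to $w=0$. That is enough for the uniqueness class in the statement.

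\textbf{Two small fixes.}
\begin{itemize}
\item For a terminal-value problem the time penalization must blow up at the free end $t=0$: subtract $\varepsilon/t$ from the subsolution. With $\varepsilon/(T-t)$, the extra time derivative enters the subsolution inequality with the wrong sign, and you get no contradiction.
\item The decay as $y\to-\infty$ is not actually needed. Because the coefficients are constant, a localizing term $\beta\sqrt{1+y^2}$, whose first and second derivatives are bounded, works for arbitrary bounded sub/supersolutions. This shows the datum at $w=0$ plays no role in your formulation.
\end{itemize}
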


The proof is deferred to Appendix~\ref{app:viscosity}. Unlike the prior subsection, there is no closed form solution to this problem, so we will use this result to supply a numerical solution in the investigation that follows.

\subsection{Random-Deadline Goal}\label{sec:random}
The random-deadline problem captures the household's need for a precautionary buffer. In contrast to the fixed-deadline problem, the horizon is stochastic: the household must be ready at all times, not just at a specific date.

\subsubsection{Deterministic Amount Goal}

We begin once more with a deterministic benchmark that falls outside the standing assumptions in our problem. We include this case to provide intuition since a closed form solution is obtained by \cite{BY16}. Throughout this example we assume the same Sharpe ratio condition used above, $\gamma^2:=\thetavec^\top\Sigmav^{-1}\thetavec>0$.

Fix $c>0$ and suppose $R\equiv c$, so the relevant distribution function is the step function $F_R(w)=\mathbf 1_{\{w\ge c\}}$. We define the unconstrained value as
\[
V^{R}_{\rm BY}(w)
:=
\sup_{\piv}
\E\!\big[\mathbf 1_{\{\widetilde W_{\tau}^{0,w;\piv}\ge c\}}\big],
\qquad w\ge0,
\]
where the supremum is over admissible square-integrable $\mathbb F$-predictable portfolio rules, not subject to leverage or short-sale constraints. Because the pre-payment diffusion is continuous, $\widetilde W_{\tau^-}=\widetilde W_{\tau}$ a.s. By the memoryless property of $\tau$, the pre-arrival continuation value is stationary, so we write $V^{R}_{\rm BY}(w)$ rather than $V^{R}_{\rm BY}(t,w)$.

On smooth parts of the candidate value, the formal HJB is
\begin{equation} \label{eq:hjb-random}
\lambda V^{R}_{\rm BY}(w)
=
\sup_{\piv\in\R^n}
\left\{
w\,(V^{R}_{\rm BY})'(w)\,\piv^\top\thetavec
+
\frac12 w^2\,(V^{R}_{\rm BY})''(w)\,\piv^\top\Sigmav\piv
+
\lambda F_R(w)
\right\}.
\end{equation}
On $0<w<c$ we have $F_R(w)=0$. If a smooth candidate satisfies $V''<0$, the unconstrained maximizer in \eqref{eq:hjb-random} is
\[
\piv^*(w)
=
-\frac{V'(w)}{wV''(w)}\Sigmav^{-1}\thetavec.
\]
A power ansatz $V(w)=(w/c)^\kappa$ on $(0,c)$ yields
\[
\piv^*\equiv \frac{1}{1-\kappa}\Sigmav^{-1}\thetavec,
\qquad
\kappa=\frac{\lambda}{\lambda+\tfrac12\gamma^2}.
\]
In \cite{BY16} it is verified that this is the optimal portfolio and that $V^{R}_{\rm BY}$ is given by
\[
V^{R}_{\rm BY}(w)=
\begin{cases}
(w/c)^\kappa, & 0\le w<c,\\[3pt]
1, & w\ge c.
\end{cases}
\]

\subsubsection{Random Amount Goal}

With this benchmark in mind, we return once more to the standing assumptions of Section~\ref{sec:gen.model}. 
For $\piv\in\mathcal A^{K,{\rm pre}}_{0,\infty}$, let $\widetilde W^{0,w;\piv}$ solve the pre-payment wealth equation \eqref{eq:pre.pay.wealth}. Since at this stage there is only one goal, no withdrawal occurs before the random deadline. Moreover, the pre-payment process is continuous, so
\[
\widehat W_{\tau-}^{0,w;\piv}=\widetilde W_{\tau-}^{0,w;\piv}=\widetilde W_{\tau}^{0,w;\piv}
\qquad \text{a.s.}
\]
Since $R$ is independent of $\mathcal F_\infty\vee\sigma(\tau)$ and is revealed only at $\tau$,
\[
\E\!\left[
\mathbf 1_{\{\widehat W_{\tau-}^{0,w;\piv}\ge R\}}
\,\middle|\,
\mathcal F_\infty\vee\sigma(\tau)
\right]
=
F_R\!\left(\widetilde W_\tau^{0,w;\piv}\right).
\]
Moreover, because $\piv\in\mathcal A^{K,{\rm pre}}_{0,\infty}$, the path $\bigl(\widetilde W_t^{0,w;\piv}\bigr)_{t\ge0}$ is $\mathcal F_\infty$-measurable. Since $\tau$ is independent of $\mathcal F_\infty$, the random time $\tau$ is independent of the wealth path. Therefore, for every such $\piv$,
\[
\E\!\left[F_R\!\left(\widetilde W_\tau^{0,w;\piv}\right)\right]
=
\E\!\left[
\int_0^\infty
\lambda e^{-\lambda t}
F_R\!\left(\widetilde W_t^{0,w;\piv}\right)\,dt
\right].
\]
The random deadline value is consequently
\begin{equation}\label{eq:VRK-random-amount}
V^{R,K}(w)
:=
\sup_{\piv\in\mathcal A^{K,{\rm pre}}_{0,\infty}}
\E\!\left[
\int_0^\infty
\lambda e^{-\lambda t}
F_R\!\left(\widetilde W_t^{0,w;\piv}\right)\,dt
\right],
\qquad w\ge0.
\end{equation}

The same arguments give us the boundary conditions for $V^{R,K}$. Namely, since $R\le b_R$ a.s., investing only in the money-market account gives $V^{R,K}(w)=1$ for all $w\ge b_R$.
Also, because $F_R(0)=0$, wealth starting from $0$ remains $0$ and $V^{R,K}(0)=0$. Therefore, it suffices to study the clipped stationary HJB on $\Omega_R:=(0,b_R)$:
\begin{equation}\label{eq:hjb-random-clipped}
\lambda V^{R,K}(w)
-
H^K\!\bigl(w,(V^{R,K})'(w),(V^{R,K})''(w)\bigr)
-
\lambda F_R(w)
=0,
\qquad w\in\Omega_R,
\end{equation}
with boundary data
\begin{equation}\label{eq:VRK-bdy-data}
V^{R,K}(0)=0,
\qquad
V^{R,K}(b_R)=1,
\end{equation}
and with the convention $V^{R,K}(w)=1$ for $w\ge b_R$.

\begin{proposition}\label{prop:VRK-viscosity}
The value function $V^{R,K}$ defined by \eqref{eq:VRK-random-amount} is bounded and continuous on $[0,\infty)$, satisfies $V^{R,K}(0)=0$ and $V^{R,K}(w)=1$ for $w\ge b_R$, and its restriction to $[0,b_R]$ is the unique bounded continuous viscosity solution of \eqref{eq:hjb-random-clipped} with the boundary data \eqref{eq:VRK-bdy-data}.
\end{proposition}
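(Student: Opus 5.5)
The proof runs parallel to that of Proposition~\ref{prop:VDK-viscosity}, adapted to a stationary, discounted, infinite-horizon problem. It has four steps: boundedness and boundary behaviour, continuity, the viscosity property via a dynamic programming principle, and uniqueness via a comparison principle on $(0,b_R)$.

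\textbf{Boundedness and boundary values.} Boundedness holds because $0\le F_R\le 1$, so $0\le V^{R,K}\le 1$. The identity $V^{R,K}(w)=1$ for $w\ge b_R$ follows from the control $\piv\equiv 0$, which keeps $\widetilde W_t\equiv w$. The identity $V^{R,K}(0)=0$ holds because zero wealth stays at zero.

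\textbf{Monotonicity and continuity.} Monotonicity follows from linearity of the wealth equation: $\widetilde W^{0,w;\piv}=w\,\mathcal E_t(\piv)$, so $w\mapsto \widetilde W_t$ is nondecreasing pathwise for each fixed $\piv$, and hence $V^{R,K}$ is nondecreasing. For continuity, fix $0<w<w'$. For every $\piv$,
\[
\E\!\int_0^\infty \lambda e^{-\lambda t}\bigl|F_R(w'\mathcal E_t)-F_R(w\mathcal E_t)\bigr|\,dt .
\]
Truncate time at a large $T_0$, with tail cost $e^{-\lambda T_0}$. On $[0,T_0]$ the stochastic exponential $\mathcal E_t(\piv)$ is uniformly controlled in probability over $\piv\in\Pi_K$, because $|\piv^\top\thetavec|$ and $\piv^\top\Sigmav\piv$ are bounded. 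In particular $\mathcal E_t\le M$ with probability at least $1-\varepsilon$, uniformly in $\piv$ and $t\le T_0$. On that event,
\[
|w'\mathcal E_t-w\mathcal E_t|\le M|w'-w|,
\]
and $F_R$ is uniformly continuous on $\R$, being a continuous distribution function with compact support. This gives a modulus of continuity that is uniform in $\piv$, so $V^{R,K}$ is continuous on $[0,\infty)$. Continuity at $0$ is the case $w=0$ of the same estimate.

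\textbf{Viscosity property.} I would prove a weak dynamic programming principle: for any $\mathbb F$-stopping time $\theta$,
\[
V^{R,K}(w)=\sup_{\piv}\E\Bigl[\int_0^{\theta}\lambda e^{-\lambda t}F_R(\widetilde W_t)\,dt+e^{-\lambda\theta}V^{R,K}(\widetilde W_\theta)\Bigr].
\]
The stationarity here comes from the time-homogeneity of the wealth dynamics and from the discounting $e^{-\lambda t}$, which absorbs the memoryless survival factor. The continuity just established allows the usual countable-partition, $\varepsilon$-optimal pasting argument to go through, since $\mathbb F$ is Brownian and no measurable selection subtleties arise. Applying Itô's formula to a smooth test function touching $V^{R,K}$ from above or below at $w_0\in(0,b_R)$, with $\theta$ the exit time from a small ball intersected with a short horizon $h$, then gives the sub- and supersolution inequalities for \eqref{eq:hjb-random-clipped}. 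For the subsolution I would use arbitrary constant controls; for the supersolution I would use the near-optimal controls together with the boundedness of $\Pi_K$. The Hamiltonian $H^K$ is continuous because $\Pi_K$ is compact. Finally, the boundary data \eqref{eq:VRK-bdy-data} are attained continuously by the previous step.

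\textbf{Uniqueness (main obstacle).} Uniqueness requires a comparison principle for bounded semicontinuous sub- and supersolutions $u,v$ on $[0,b_R]$ with $u\le v$ at $0$ and $b_R$. The equation is proper because of the strictly positive zeroth-order term $\lambda V$, and the source $\lambda F_R$ is continuous. I would use Ishii's lemma with the doubling
\[
u(x)-v(y)-\frac{|x-y|^2}{2\varepsilon}.
\]
The essential structural estimate is that the diffusion coefficient $w\,\sigmavec^\top\piv$ and the drift coefficient $w\,\piv^\top\thetavec$ are Lipschitz in $w$, uniformly in $\piv\in\Pi_K$. This gives
\[
H^K(x,p,X)-H^K(y,p,Y)\le C\,\frac{|x-y|^2}{\varepsilon}
\]
whenever the matrix inequality from Ishii's lemma holds. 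Since the domain is bounded, the positive discount $\lambda$ then yields $\sup(u-v)\le 0$ in the standard way.

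The care point is the degeneracy at $w=0$: there the equation carries no information, so the Dirichlet condition must be shown to hold in the strong, classical sense rather than in the generalized viscosity sense. I would handle this by proving barrier estimates for the value function near $0$ and near $b_R$, for instance $V^{R,K}(w)\le C w^{\kappa'}$ for some $\kappa'>0$ through a power-function supersolution. These estimates let the comparison argument be run against $V^{R,K}$ directly, so that any bounded continuous viscosity solution with the stated boundary values coincides with $V^{R,K}$.
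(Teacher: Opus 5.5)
Your proposal is correct and follows essentially the paper's route: continuity via time truncation, a uniform-in-$\piv$ probability bound and uniform continuity of $F_R$ (your scaling $\widetilde W_t=w\,\mathcal E_t(\piv)$ plays the role of the paper's flow estimate); the viscosity property via the dynamic programming principle, which the paper imports from Pham's infinite-horizon discounted framework rather than proving; and uniqueness via the Crandall--Ishii--Lions comparison theorem on the bounded interval, whose proof is exactly your doubling-of-variables argument. Two small corrections are needed. First, the barrier step at $w=0$ is superfluous, because that comparison theorem only needs $u\le v$ pointwise at $0$ and $b_R$, which holds by hypothesis for every bounded continuous solution with the data \eqref{eq:VRK-bdy-data}, including $V^{R,K}$ by your continuity step; moreover, the suggested bound $V^{R,K}(w)\le Cw^{\kappa'}$ is false in general, since $\piv\equiv0$ gives $V^{R,K}\ge F_R$ and a continuous $F_R$ may vanish more slowly than any power at $0$. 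Second, for the equation in its proper form \eqref{eq:hjb-random-clipped}, constant controls give the supersolution inequality and the near-optimal (contradiction) argument gives the subsolution inequality, which is the reverse of your labeling.
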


As with the deterministic deadline problem, the proof is deferred to Appendix~\ref{app:viscosity}.

\section{The Dual-Goal Problem}
\label{sec:dual}

We now combine the two single-goal building blocks. Throughout this section the standing model assumptions from Section \ref{sec:gen.model} are in force. The unresolved dual-goal value is the value conditional on the event that the random-deadline goal has not arrived before $t$ and the fixed-deadline goal is not yet due.

To make this conditional problem precise, let $\eta$ be an exponential random variable with parameter $\lambda$, independent of the future Brownian increments and of $(R,G)$, and set $\tau_t:=t+\eta$. Starting from pre-payment wealth $w$ at time $t<T$, the household solves
\begin{equation}\label{eq:VRDK-def}
V^{R,D,K}(t,w)
:=
\sup_{\bm\pi \in \mathcal{A}^K_{t,\infty}}
\mathbb{E}\Big[
\alpha_R \,\mathbf{1}_{\{\widehat{W}_{\tau_t-}^{t,w;\bm\pi} \geq R\}}
+
\alpha_D \,\mathbf{1}_{\{\widehat{W}_{T^-}^{t,w;\bm\pi} \geq G\}}
\Big],
\end{equation}
where $\alpha_R, \alpha_D \in (0,1]$ with $\alpha_R + \alpha_D = 1$
are preference weights reflecting the household's relative priority
over the two goals. Here $\widehat W^{t,w;\piv}$ is the post-payment wealth process started from wealth $w$ at time $t$, with the all-or-nothing funding rule specified in \eqref{eq:jump-R}--\eqref{eq:jump-G}. The admissible class $\mathcal A^K_{t,\infty}$ is understood in terms of the shifted enlarged filtration associated with the residual clock $\tau_t$. The objective is thus a weighted sum of the probabilities of
fully funding each of the liabilities.

The key economic insight is that the dual-goal problem reduces to optimizing the ex ante portfolio---the strategy used while both goals are still pending---since once a goal is resolved, the household optimally switches to the appropriate single-goal policy. We will see that the two single-goal value functions will appear as continuation payoffs after a manipulation of the objective. We begin with this heuristic decomposition of the objective.

If the random-deadline goal arrives before $T$ at time $t$ when pre-payment wealth is $w$, the household earns the random-goal reward if $w\ge R$, pays $R$ if affordable, and then faces the fixed-deadline single-goal problem. The continuation operator is
\begin{align}
\mathcal J^K(t,w)
&:=
\E\left[
\alpha_R\1_{\{w\ge R\}}
+
\alpha_D V^{D,K}\bigl(t,w-R\1_{\{w\ge R\}}\bigr)
\right]
\label{eq:rigorous-JK-def}
\\
&=
\alpha_R F_R(w)
+
\alpha_D\left[
(1-F_R(w)) V^{D,K}(t,w)
+
\int_{[0,w]} V^{D,K}(t,w-r)\,dF_R(r)
\right].
\notag
\end{align}
If instead the fixed deadline $T$ is reached before the random-deadline goal arrives, $G$ is revealed at $T$. Under the forced funding rule, the household earns the fixed-goal reward if $w\ge G$, pays $G$ if affordable, and then faces the random-deadline single-goal problem. The terminal operator is
\begin{align}
\mathcal T^K(w)
&:=
\E\left[
\alpha_D\1_{\{w\ge G\}}
+
\alpha_R V^{R,K}\bigl(w-G\1_{\{w\ge G\}}\bigr)
\right]
\label{eq:rigorous-TK-def}
\\
&=
\alpha_D F_G(w)
+
\alpha_R\left[
(1-F_G(w)) V^{R,K}(w)
+
\int_{[0,w]}  V^{R,K}(w-g)\,dF_G(g)
\right].
\notag
\end{align}
We note that these operators reflect an expectation of the next stage value \emph{immediately before} the respective goals are revealed. 

With the definition of these operators we anticipate that we can reduce the problem \eqref{eq:VRDK-def} to solving
\begin{equation}\label{eq:tilde.VRDK}
\widetilde{V}^{R,D,K}(t,w) := \sup_{\bm\pi\in\mathcal A^{K,\rm pre}_{t,T}}
\mathbb E\!\left[
\int_t^T
\lambda e^{-\lambda(s-t)}
\mathcal J^K\bigl(s,\widetilde W_s^{t,w;\bm\pi}\bigr)\,ds
+
e^{-\lambda(T-t)}
\mathcal{T}^K\bigl(\widetilde W_T^{t,w;\bm\pi}\bigr)
\right], \quad 0\leq t\leq T.
\end{equation}
Note the reduction to pre-arrival controls $\mathcal A^{K,\rm pre}_{t,T}$ and wealth $\widetilde W$ which turns this into a finite horizon control problem similar to the single-goal problems of Section \ref{sec:single}.  The formal verification that this agrees with the problem \eqref{eq:VRDK-def} is given by the following proposition whose proof is provided in Appendix \ref{app:reduction}.

\begin{proposition}\label{prop:reduction}
    $V^{R,D,K}(t,w)=\widetilde V^{R,D,K}(t,w)$ for all $(t,w)\in [0,T)\times [0,\infty)$.
\end{proposition}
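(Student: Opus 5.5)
We prove the two inequalities $V^{R,D,K}\le\widetilde V^{R,D,K}$ and $V^{R,D,K}\ge\widetilde V^{R,D,K}$ separately. Both rest on a decomposition of the objective in \eqref{eq:VRDK-def} at the first event time $\tau_t\wedge T$. Fix $(t,w)$ and $\piv\in\mathcal A^K_{t,\infty}$.

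\textbf{Step 1: freezing the pre-event control.} Before $\tau_t\wedge T$ no goal information has been revealed, so on $\{s\le \tau_t\wedge T\}$ the $\mathbb G$-predictable process $\piv_s$ agrees with an $\mathbb F$-predictable process $\piv^0_s$. This is the standard Jeulin--Yor type decomposition for progressive enlargement by $\tau_t$ together with the $\mathcal G_\tau$-measurable mark $R$ and $G$ revealed at $T$. Hence on $[t,\tau_t\wedge T)$ the wealth $\widehat W$ coincides with the pre-payment process $\widetilde W^{t,w;\piv^0}$, which is continuous. Since $\eta$ is independent of $\mathcal F_\infty$, the first event is $\tau_t$ with density $\lambda e^{-\lambda(s-t)}$ on $(t,T)$, and $T$ with probability $e^{-\lambda(T-t)}$. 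We condition on $\mathcal F_\infty\vee\sigma(\tau_t)$.

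\textbf{Step 2: upper bound.}

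On $\{\tau_t=s<T\}$ the random-goal reward is $\mathbf 1_{\{\widetilde W_s\ge R\}}$. After payment, wealth is $\widetilde W_s-R\mathbf 1_{\{\widetilde W_s\ge R\}}$, and the fixed-goal reward depends only on the control on $(s,T]$. Conditional on $\mathcal G_s$, the remaining controls form (after a measurable selection / regular conditional argument) an admissible control for the fixed-deadline problem started at $(s,\cdot)$. The only residual randomness, $\tau$, is already revealed, so there is no extra information to exploit. Therefore the conditional expected fixed-goal reward is at most $V^{D,K}(s,\text{post-payment wealth})$. Averaging over $R$, which is independent of everything else at that point, gives at most $\mathcal J^K(s,\widetilde W_s)$.

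On $\{\tau_t>T\}$ we argue symmetrically. Given $\mathcal G_T$, the residual clock $\tau_t-T$ is $\mathrm{Exp}(\lambda)$ and independent of future Brownian increments and of $G$. The post-$T$ control is therefore admissible for the random-deadline problem, and the conditional reward is at most $V^{R,K}$ of the post-payment wealth. By memorylessness and stationarity, averaging over $G$ gives at most $\mathcal T^K(\widetilde W_T)$.

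Integrating over $\tau_t$ yields $J\le$ the functional in \eqref{eq:tilde.VRDK} evaluated at $\piv^0$, so $V^{R,D,K}\le\widetilde V^{R,D,K}$.

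\textbf{Step 3: lower bound.} Take any $\piv^0\in\mathcal A^{K,\rm pre}_{t,T}$ and $\epsilon>0$. We paste in $\epsilon$-optimal continuation controls: for the fixed-deadline problem after $\tau_t=s$, and for the random-deadline problem after $T$.

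The main obstacle is to do this measurably in the state $(s,y)$. We handle it with the continuity of $V^{D,K}$ and $V^{R,K}$ from Propositions \ref{prop:VDK-viscosity} and \ref{prop:VRK-viscosity}. Partition $[t,T]\times[0,b]$ into a finite grid fine enough that $V^{D,K}$ varies by less than $\epsilon$ on each cell, using uniform continuity on the compact set; we take $V\equiv1$ above $b_G$ and $b_R$, where $\piv=0$ is optimal. In each cell we use a single $\epsilon$-optimal control for a representative point.

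Passing from the representative point to nearby wealth requires a stability estimate. For fixed $\piv$, $\widetilde W^{s,y;\piv}=y\,\mathcal E(\cdot)$ scales linearly in $y$, and $F_G$ is uniformly continuous. Hence for $\piv$ fixed, $\E[F_G(\widetilde W_T^{s,y;\piv})]$ is uniformly continuous in $y$, uniformly over controls valued in $\Pi_K$. The same holds in $s$: the time shift changes only the horizon, and for bounded $\piv$ the wealth over a short horizon is close to its starting value in $L^1$. The same argument applies to the discounted integral defining $V^{R,K}$.

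The pasted strategy is $\mathbb G$-predictable: it is piecewise defined using information $\mathcal G_{\tau}$ and $\mathcal G_T$ available at the switch. Its payoff is at least the reduced functional evaluated at $\piv^0$, minus $C\epsilon$. Letting $\epsilon\to0$ and taking the supremum over $\piv^0$ gives $V^{R,D,K}\ge\widetilde V^{R,D,K}$.

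I expect the measurable pasting, together with the uniform continuity estimate, to be the only delicate step. Step 1's decomposition of $\mathbb G$-predictable processes before $\tau_t\wedge T$ is standard but must be cited carefully.
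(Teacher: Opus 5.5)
Your proposal follows essentially the same route as the paper. The upper bound uses a Jeulin--Yor/Aksamit--Jeanblanc reduction of $\mathbb G$-predictable controls before $\tau_t\wedge T$; your ``symmetric'' argument also needs this reduction on $[T,\tau_t)$, now relative to $\mathbb F\vee\sigma(G)$, which you should state explicitly. It then uses regular-conditional bounds by $V^{D,K}$ and $V^{R,K}$ and the exponential disintegration. The lower bound pastes finitely many $\epsilon$-optimal continuation controls. The differences are minor: you build the partition from an equicontinuity estimate that is uniform over $\Pi_K$-valued controls, where the paper uses continuity of $P_D^{\nuv}$ for each fixed control plus a finite subcover, and your ``time shift changes only the horizon'' tacitly relies on the paper's representation of near-optimal single-goal controls as functionals of post-start Brownian increments (Lemma~\ref{lem:fresh}).
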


Recall now the maximum wealth needed to meet both goals $b:=b_R+b_G$ and define the domain $\Omega_{RD}:=(0,T)\times(0,b)$. Using Proposition~\ref{prop:reduction} we can expect that the HJB equation associated with the dual goal value function is
\begin{equation}
\label{eq:dual-HJB-clipped}
\partial_t u(t,w)
+
H^K\bigl(w,u_w(t,w),u_{ww}(t,w)\bigr)
+
\lambda\bigl(\mathcal J^K(t,w)-u(t,w)\bigr)
=0,
\qquad (t,w)\in\Omega_{RD}.
\end{equation}
where $H^K$ is the same function in \eqref{eqn:HK} of Section~\ref{sec:single}. We also assign the boundary data
\begin{equation}\label{eq:dual-HJB-bc}
u(T,w)=\mathcal T^K(w),\quad 0\le w\le b,
\qquad
u(t,0)=0,\quad u(t,b)=1,\quad 0\le t\le T.
\end{equation}
The following is our main result which allows us to characterize the dual goal value function.

\begin{theorem}
\label{thm:dual-HJB}
The value function $V^{R,D,K}$ defined by \eqref{eq:VRDK-def} is bounded and continuous on $[0,T)\times[0,\infty)$, satisfies $V^{R,D,K}(t,0)=0$ and $V^{R,D,K}(t,w)=1$ for $w\ge b$, and its continuous extension on $[0,T]\times[0,b]$ is the unique bounded continuous viscosity solution of \eqref{eq:dual-HJB-clipped} with boundary data \eqref{eq:dual-HJB-bc}.
\end{theorem}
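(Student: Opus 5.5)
By Proposition~\ref{prop:reduction}, it suffices to work with $\widetilde V^{R,D,K}$ from \eqref{eq:tilde.VRDK}. This is a standard finite-horizon control problem with pre-arrival controls in $\mathcal A^{K,\rm pre}_{t,T}$ and running reward $\lambda\mathcal J^K$ discounted at rate $\lambda$. Its terminal reward is $\mathcal T^K$. The plan mirrors the proof of Proposition~\ref{prop:VDK-viscosity}. We establish regularity of the data, boundary behaviour, the dynamic programming principle with the resulting viscosity property, and finally a comparison principle on $\Omega_{RD}$.

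\textbf{Step 1: the data.} By Propositions~\ref{prop:VDK-viscosity} and \ref{prop:VRK-viscosity}, $V^{D,K}$ and $V^{R,K}$ are bounded, continuous and valued in $[0,1]$. Hence $\mathcal J^K$ is continuous on $[0,T]\times[0,\infty)$ and $\mathcal T^K$ is continuous on $[0,\infty)$. This uses dominated convergence in the expectation form of \eqref{eq:rigorous-JK-def}--\eqref{eq:rigorous-TK-def}. The indicator $\1_{\{w\ge R\}}$ is discontinuous only on the null set $\{R=w\}$, which is null because $F_R$ is continuous. Likewise $w-R\1_{\{w\ge R\}}$ is a.s.\ continuous in $w$.

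For $w\ge b$, we have $w-R\ge b_G$ a.s., so $\mathcal J^K(s,w)=1$ for all $s$. Similarly $\mathcal T^K(w)=1$. At $w=0$, both $\mathcal J^K$ and $\mathcal T^K$ vanish, since $F_R(0)=F_G(0)=0$ and $V^{D,K}(\cdot,0)=V^{R,K}(0)=0$.

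\textbf{Step 2: boundary values.} The choice $\piv\equiv0$ keeps wealth constant, which gives $V\equiv1$ for $w\ge b$. Since $\mathcal J^K,\mathcal T^K\le 1$, we also have $V\le1$. Wealth started at $0$ stays at $0$, so $V(t,0)=0$.

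\textbf{Step 3: continuity of the value.} Continuity in $w$ follows from $L^2$ and uniform-in-control estimates on the geometric wealth process with bounded $\piv$. Indeed, $\widetilde W^{t,w;\piv}=w\,\mathcal E_t^\piv$, so a change in $w$ rescales the path. Combined with uniform continuity of $\mathcal J^K,\mathcal T^K$ on compacts and tightness of $\mathcal E$, this gives continuity uniform over controls.

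Continuity in $t$, including up to $T$ with limit $\mathcal T^K(w)$, uses the following facts. The horizon shift changes the integral by $O(|t-t'|)$. The terminal wealth satisfies $\mathcal E_{T}/\mathcal E_{t}\to1$ in probability, uniformly in controls with $|\piv|\le K$. This last fact is exactly where the bounded control set is essential: it rules out the degenerate limit seen for Browne's unconstrained solution. The same estimates also give continuity at the lateral boundaries $w=0$ and $w=b$.

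\textbf{Step 4: DPP and viscosity property.} We prove the dynamic programming principle for \eqref{eq:tilde.VRDK} in its usual form, conditioning at a stopping time $\theta\le T$. Measurable selection is avoided by using continuity of the value and a countable partition of $(t,w)$-space. Applying Itô's formula to smooth test functions touching $V$ from above and below then yields the sub- and supersolution inequalities for \eqref{eq:dual-HJB-clipped}. The term $\lambda(\mathcal J^K-u)$ arises from the discount and the running reward, and $H^K$ is continuous because $\Pi_K$ is compact.

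\textbf{Step 5: comparison (main obstacle).} Uniqueness requires comparison on the bounded domain $\Omega_{RD}$ with Dirichlet data \eqref{eq:dual-HJB-bc}. Two features complicate it. First, the diffusion coefficient $w\piv^\top\sigmavec$ degenerates at $w=0$. Second, the Hamiltonian is a supremum that can be nonconcave in $q$.

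The equation, however, is proper: it is nonincreasing in $u$ through $-\lambda u$, and the source $\lambda\mathcal J^K$ is continuous and independent of $u$. The drift and volatility are Lipschitz in $w$ uniformly in $\piv\in\Pi_K$. We therefore apply the standard Crandall--Ishii--Lions doubling of variables with penalty $|w-w'|^2/(2\varepsilon)$, after the change $u\mapsto e^{\beta t}u$ that gives strict monotonicity. The Lipschitz bound on $w\sigmavec^\top\piv$ controls the second-order term via Ishii's lemma, exactly as in the single-goal Proposition~\ref{prop:VDK-viscosity}.

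The delicate point is that the boundary data must be attained continuously, so that no loss of boundary conditions occurs. This is supplied by Step 3. Given that, the comparison argument for the single-goal case should transfer essentially verbatim, with the extra zeroth-order term. The continuous extension of $V^{R,D,K}$ is then the unique solution.
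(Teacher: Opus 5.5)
Your proposal is correct and has the same architecture as the paper. It reduces to $\widetilde V^{R,D,K}$ via Proposition~\ref{prop:reduction}, proves continuity of $\mathcal J^K$, $\mathcal T^K$ and of the value, gets the viscosity property from the DPP (the paper simply cites \cite[Theorem~7.4]{Touzi2013}), and gets uniqueness from a Crandall--Ishii--Lions comparison after time reversal (the paper cites \cite[Theorem~8.2]{crandall1992usersguideviscositysolutions}, subtracts $h(w)=w/b$ and checks condition (3.14) via Example~3.6, which is your doubling argument in packaged form). The one real difference is continuity: you use the scaling $\widetilde W^{t,w;\piv}=w\,\mathcal E$ with uniform moment bounds, while the paper approximates $\mathcal J^K,\mathcal T^K$ uniformly by Lipschitz data and applies \cite[Proposition~3.7]{Touzi2013} to an augmented Mayer problem; your route is more elementary, provided you make explicit that interior continuity in $t$ compares the two control classes by restricting a control from $[t,T]$ to $[t',T]$ (using $\mathcal E_{t,t'}\to1$ in $L^2$ uniformly over $\Pi_K$-valued controls) and by extending a control on $[t',T]$ by zero on $[t,t']$, which leaves wealth unchanged.
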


The proof is deferred to Appendix~\ref{app:viscosity}.

\subsection{Optional Funding of the Fixed-Deadline Goal}
\label{sec:optional-funding}

The forced funding rule in \eqref{eq:jump-R}--\eqref{eq:jump-G}
requires the household to fund a goal whenever it is affordable. This
forced-payment rule is the source of the crowding-out mechanism:
crossing a funding threshold can reduce continuation value because the
goal payment removes wealth that could otherwise be used for the
remaining liability. We now consider a variant in which the household
may choose to forgo the fixed-deadline goal at time $T$, while the
random-deadline goal remains mandatory whenever it arrives.

We note that when the random goal arrives early $\{\tau<T\}$, the analysis is unchanged since there is no value from declining an affordable fixed-deadline goal at $T$. The only change occurs on $\{\tau>T\}$. At time $T$, after $G$ is revealed, the household chooses whether to fund the fixed-deadline goal. To incorporate the flexibility for this decision, we augment the control space with an all-or-nothing funding decision at $T$.  For deterministic terminal wealth $w$ and realized goal amount $g$, define the feasible decision set
\[
\Theta(w,g):=
\begin{cases}
\{0\}, & w<g,\\
\{0,1\}, & w\ge g,
\end{cases}
\]
where $d=1$ means funding the fixed-deadline goal and $d=0$ means declining it. The payoff from making the decision $d\in\Theta(w,g)$ is
\[
\alpha_D d+
\alpha_R V^{R,K}(w-dg).
\]
Therefore the pointwise optimal terminal decision is
\begin{equation}
\label{eq:rigorous-terminal-selector}
D^*(w,g)=
\1_{\{w\ge g\}}
\1_{\{\alpha_D+
\alpha_R V^{R,K}(w-g)
\ge
\alpha_R V^{R,K}(w)\}}.
\end{equation}
Equivalently, when $w\ge g$, the fixed-deadline goal is funded if and only if
\[
\alpha_D
\ge
\alpha_R\left[ V^{R,K}(w)- V^{R,K}(w-g)\right].
\]
Intuitively, this trades off the benefit of funding the deterministic goal against the impact that the loss of wealth has on the likelihood of funding the random goal in the future. 

Define the new \emph{optional} terminal operator by
\begin{equation}
\label{eq:rigorous-Topt-def}
\mathcal T_{\rm opt}^K(w)
:=
\E\left[
\max_{d\in\Theta(w,G)}
\left\{
\alpha_D d+
\alpha_R V^{R,K}(w-dG)
\right\}
\right]
\end{equation}
which reflects that the household makes the best decision at the deterministic goal time. Equivalently,
\begin{align}
\mathcal T_{\rm opt}^K(w)
&=
\alpha_R(1-F_G(w)) V^{R,K}(w)
\notag\\
&\quad+
\int_{[0,w]}
\max\left\{
\alpha_R V^{R,K}(w),
\alpha_D+
\alpha_R V^{R,K}(w-g)
\right\}\,dF_G(g).
\label{eq:rigorous-Topt-integral}
\end{align}
Let $V_{\rm opt}^{R,D,K}(t,w)$ be the value of the problem with this augmented set of controls. We once again expect that we can identify the value of this problem as
\begin{equation}
\label{eq:rigorous-Vopt-recursive}
V_{\rm opt}^{R,D,K}(t,w)
=
\sup_{\piv\in\cA^{K,{\rm pre}}_{t,T}}
\E_{t,w}\left[
\int_t^T
\lambda e^{-\lambda(s-t)}
\mathcal J^K\bigl(s,\widetilde W_s^{t,w;\piv}\bigr)\,ds
+
e^{-\lambda(T-t)}
\mathcal T_{\rm opt}^K\bigl(\widetilde W_T^{t,w;\piv}\bigr)
\right], \quad t<T.
\end{equation}
Consequently, relative to the original formulation, optional funding changes only the terminal condition. The HJB equation for this problem remains \eqref{eq:dual-HJB-clipped}, but the terminal condition becomes
\begin{equation}
\label{eq:rigorous-opt-bc}
u(T,w)=\mathcal T_{\rm opt}^K(w),\quad 0\le w\le b,
\qquad
u(t,0)=0,
\qquad
u(t,b)=1.
\end{equation}

Optional funding removes the terminal crowding-out created by forced payment of $G$: at time $T$, the household can compare
funding the fixed-deadline goal with preserving wealth for the remaining random-deadline goal, and chooses the better alternative.
That said, this does not eliminate all non-monotonicity from the model, because the early-arrival source term $\mathcal J^K$ still contains the
mandatory payment of $R$. As a result, compulsory payment of the random-deadline goal before $T$ can still reduce the fixed-deadline continuation value.

The following is the parallel characterization result for value function of the optional funding version of the problem.
\begin{proposition}
\label{prop:optional-dual-HJB}
The value function $V_{\rm opt}^{R,D,K}$ defined by \eqref{eq:rigorous-Vopt-recursive} is bounded and continuous on $[0,T)\times[0,\infty)$, satisfies
\[
V_{\rm opt}^{R,D,K}(t,0)=0,
\qquad
V_{\rm opt}^{R,D,K}(t,w)=1\quad\text{for }w\ge b,
\]
and its continuous extension on $[0,T]\times[0,b]$ is the unique bounded continuous viscosity solution of \eqref{eq:dual-HJB-clipped} with boundary data \eqref{eq:rigorous-opt-bc}.
\end{proposition}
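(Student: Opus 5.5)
The plan mirrors the proof of Theorem~\ref{thm:dual-HJB}, since by \eqref{eq:rigorous-Vopt-recursive} the optional problem differs from \eqref{eq:tilde.VRDK} only through the terminal payoff $\mathcal T^K_{\rm opt}$ in place of $\mathcal T^K$. The first step is to establish the structural properties of $\mathcal T^K_{\rm opt}$ that the argument for Theorem~\ref{thm:dual-HJB} used for $\mathcal T^K$: boundedness, continuity on $[0,b]$, $\mathcal T^K_{\rm opt}(0)=0$ and $\mathcal T^K_{\rm opt}(w)=1$ for $w\ge b$.

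\emph{Continuity and boundary values of $\mathcal T^K_{\rm opt}$.} Continuity follows from the integral form \eqref{eq:rigorous-Topt-integral}. The map $(w,g)\mapsto \max\{\alpha_RV^{R,K}(w),\alpha_D+\alpha_RV^{R,K}(w-g)\}$ is jointly continuous on $\{g\le w\}$ because $V^{R,K}$ is continuous (Proposition~\ref{prop:VRK-viscosity}). The moving endpoint $[0,w]$ contributes no jump because $F_G$ is continuous, so $G$ has no atoms. The boundary values come as follows. At $w=0$ we have $F_G(0)=0$ and $V^{R,K}(0)=0$. For $w\ge b=b_R+b_G$, $G\le b_G\le w$ a.s., and funding gives $\alpha_D+\alpha_RV^{R,K}(w-G)=\alpha_D+\alpha_R=1$ since $w-G\ge b_R$. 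This is the maximum possible value, so $\mathcal T^K_{\rm opt}(w)=1$.

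\emph{Value-function properties.} Next I show that $V^{R,D,K}_{\rm opt}$ is bounded, satisfies the boundary identities, and is continuous.
\begin{itemize}
\item Boundedness in $[0,1]$ holds because $\mathcal J^K$ and $\mathcal T^K_{\rm opt}$ both take values in $[0,1]$.
\item At $w=0$, wealth stays at zero and $\mathcal J^K(s,0)=0$ (from $F_R(0)=0$ and $V^{D,K}(s,0)=0$), so the value is $0$.
\item For $w\ge b$, the choice $\piv\equiv0$ keeps $\widetilde W\equiv w$. This gives $\mathcal J^K(s,w)=\alpha_R+\alpha_D V^{D,K}(s,w-R)=1$ because $w-R\ge b_G$, and also $\mathcal T^K_{\rm opt}(w)=1$. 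The value is therefore $\int_t^T\lambda e^{-\lambda(s-t)}ds+e^{-\lambda(T-t)}=1$.
\item Continuity in $w$ uses the uniform bound $\E\sup_{s\le T}|\widetilde W^{t,w;\piv}_s-\widetilde W^{t,w';\piv}_s|\le C|w-w'|$, valid uniformly over $\piv\in\mathcal A^{K,\rm pre}_{t,T}$ because $\Pi_K$ is bounded. This is combined with uniform continuity of $\mathcal J^K$ and $\mathcal T^K_{\rm opt}$ on compacts. Since wealth above $b$ is saturated at value $1$, one may truncate to $[0,b]$.
\item Continuity in $t$ follows from the dynamic programming principle together with the same moment estimates. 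The estimate $\E|\widetilde W_{t+h}-w|\le C w\sqrt h$ is uniform in the control.
\item The extension to $t=T$ with value $\mathcal T^K_{\rm opt}$ follows from the same estimate.
\end{itemize}

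\emph{Viscosity property and uniqueness.} I will prove the dynamic programming principle for \eqref{eq:rigorous-Vopt-recursive}, which is a standard finite-horizon problem with running reward $\lambda\mathcal J^K$ and discount $\lambda$. The Markovian controls and the continuity just proved allow the usual Fleming--Soner argument. From the principle, the sub- and supersolution properties of \eqref{eq:dual-HJB-clipped} follow by testing with smooth functions and using Itô's formula. Uniqueness comes from the comparison principle already invoked for Theorem~\ref{thm:dual-HJB}, which depends only on the structure of $H^K$, on the continuity of the source term $\mathcal J^K$, and on the boundary data being continuous. It therefore applies verbatim with $\mathcal T^K_{\rm opt}$ in place of $\mathcal T^K$.

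\emph{Main obstacle.} The delicate step is comparison near $w=0$, where the operator degenerates because $w^2\piv^\top\Sigmav\piv$ vanishes there. Boundary data must be attained in the classical sense there, with no loss of boundary condition. For this I plan to reuse the barrier from Theorem~\ref{thm:dual-HJB}: since $\mathcal T^K_{\rm opt}\le 1$ and the problem is degenerate only at $0$, bound $V_{\rm opt}$ above by a supersolution of the form $C w^{\kappa}$ near $0$, and below trivially by $0$. This forces continuity to the boundary value $0$. Comparison then follows by the standard doubling-of-variables argument on $[0,T]\times[0,b]$ with Ishii's lemma.

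Note also that if Proposition~\ref{prop:reduction} is needed to identify $V_{\rm opt}$ with the original augmented problem, its proof adapts with one change at time $T$. The $\mathcal G_T$-measurable decision $d$ is optimized pointwise via \eqref{eq:rigorous-terminal-selector}, which is measurable since $V^{R,K}$ is continuous.
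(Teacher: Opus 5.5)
Your plan matches what the paper intends. The paper omits this proof as identical to those of Proposition~\ref{prop:reduction} and Theorem~\ref{thm:dual-HJB} with $\mathcal T^K_{\rm opt}$ in place of $\mathcal T^K$, and your steps are what that entails: continuity and corner-compatible boundary values of $\mathcal T^K_{\rm opt}$, continuity of the value, the viscosity property, comparison via Lemma~\ref{lem:dual-comparison}, and the measurable terminal selector in the reduction. Your flow estimate and hand-proved dynamic programming principle are harmless substitutes for the paper's Lipschitz approximation and the results it cites from Touzi.

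The one thing to drop is your ``main obstacle.'' The Dirichlet data are imposed classically on functions continuous on the closed domain, so Lemma~\ref{lem:dual-comparison} needs no barrier at $w=0$, and the proof of Theorem~\ref{thm:dual-HJB} uses none. Continuity at $w=0$ already follows from your flow estimate. A barrier $Cw^{\kappa}$ would not work in general anyway, since $\mathcal T^K_{\rm opt}\ge\alpha_D F_G$ and $\mathcal J^K\ge\alpha_R F_R$ need not be $O(w^{\kappa})$ near $0$ when $F_G,F_R$ are merely continuous.
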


The justification of this result and the identification of the value function with the right hand side in \eqref{eq:rigorous-Vopt-recursive} is essentially identical to that of Proposition~\ref{prop:reduction} and Theorem~\ref{thm:dual-HJB}. The proofs are therefore omitted.

\section{Model Calibration}
In this section, we describe the numerical procedure used to compute the full two-goal value function $V^{R,D,K}$. After outlining the algorithm, we then specialize to the one-risky-asset case where we use some real-world metrics to construct calibrations of the parameters. These are then used in a sensitivity analysis where we showcase the behavior of the value function and optimal policy in response to different parameters
 
\subsection{Algorithm}
We first solve for each of the single-goal benchmarks. Let $U(t,x)$ denote the value function in log-wealth coordinates. For both the fixed-deadline and the two-goal problems, we discretize $x\in[x_{\min},\log b]$ on a uniform grid and time on a uniform backward grid. Conditional on a provisional policy $\pi$, the controlled operator can be written as
\[
\mathcal L^{\pi}U
=
(\pi(\mu-r)-\tfrac12\sigma^2\pi^2)U_x+(\tfrac12\sigma^2\pi^2)U_{xx}.
\]
Then we approximate both $U_{xx}$ by central differences and $U_x$ by an upwind stencil chosen according to the sign of $a(\pi)$. This yields a monotone tridiagonal discretization of the HJB. For the fixed-deadline problem, a single implicit Euler step has the form
\[
(I-\Delta t\,L_{\pi})U^{n}=U^{n+1},
\]
while for the two-goal problem the killing term produces
\[
(I-\Delta t\,L_{\pi}+\Delta t\,\lambda I)U^{n}
=
U^{n+1}+\Delta t\,\lambda\,\mathcal J^{K}(t_n,\cdot).
\]
At each time slice, we solve the resulting tridiagonal system by the Thomas algorithm, update the policy by pointwise maximization of the discrete Hamiltonian over $[-K,K]$, damp the policy update, and repeat until the sup-norm change in the policy falls below a prescribed tolerance. 

The random-deadline problem is stationary, so it is solved on the same log-wealth grid without time stepping. For a fixed policy, the discretized HJB reduces to
\[
(\lambda I-L_{\pi})U=\lambda F,
\]
where $F(x)=F_R(e^x)$ is the transformed goal distribution. Howard policy iteration therefore alternates between a stationary tridiagonal solve and a pointwise policy-improvement step until convergence. 

Once the single-goal value functions have been computed, they are fed into the two-goal problem through the continuation operator $\mathcal J^K$ and the terminal operator $\mathcal T^K$. We evaluate expectations over the distributions through a sufficiently large quadrature. Whenever the post-payment wealth $w-r$ or $w-g$ does not coincide with a grid point, we evaluate the precomputed single-goal value functions by interpolation in log-wealth. Then the algorithm follows exactly from the theory: compute the two single-goal building blocks, construct $\mathcal J^K$ and $\mathcal T^K$, and then solve the full dual-goal HJB backward in time using these as inputs.

\subsection{Estimation Procedure}
We now describe the calibration of each parameter, including the empirical sources used and the range of values considered in the sensitivity analysis.

\paragraph{Arrival intensity $\lambda$, and probability distribution of the random goal.}

The parameter $\lambda$ governs the rate of the exponential clock for the random-deadline goal. The Bureau of Labor Statistics Job Openings and Labor Turnover Survey (JOLTS) reports layoffs and discharges rates by state monthly, see \url{https://www.bls.gov/news.release/jltst.t05.htm}. The range of involuntary job loss ranges from 0.5\% through 2.9\% monthly per state therefore we can set $\lambda$ in the range $\lambda \in [0.06, 0.40]$. For the baseline, we use $\lambda=0.20$.

 Next, we need to estimate the distribution of losses when the job loss occurs. To do so, we use two complementary sources of data. First, we take Consumer Expenditures data for the year 2024 published by the US Bureau of Labor Statistics\footnote{See \url{https://www.bls.gov/news.release/cesan.nr0.htm} for details.}. We take the third quantile of the average annual expenditures as our estimate of what a median-income household spends annually. The reasoning is as follows: the five income quintiles divide households into five equal-sized groups ranked from lowest to highest income. The third quintile is the middle group, containing households ranked from the 40th to the 60th percentile of the income distribution. The household at the exact median of the income distribution, i.e., the 50th percentile, falls inside the third quintile by construction. Hence, the average expenditure of the third quintile, given by \$66,900 is the best available approximation of what a median-income household spends annually. Next, we use \cite{duly2003necessities}, which shows that for the third income quintile the shares are 14.9\% for housing, 25\% for food, and 4.7\$ for apparel. Summing up, $44.6\%$ of total expenditure are allocated to necessities. Combined with the previous estimate, we obtain that the total dollar amount of necessities is \$29,837.40. We can then pick the mean of $R$ to be $\$29,837.40$ if we treat it as a truncated normal with small variance (or we can pick another probability distribution centered here).

\paragraph{Goal amount distributions} We model the fixed-deadline goal size through a (truncated\footnote{We truncate the lognormal distribution symmetrically in log space at three log-standard deviations from its center; that is, we set
$a_G=\exp(\mu_G-3\sigma_G)$ and \(b_G=\exp(\mu_G+3\sigma_G)\), and use the lognormal law conditional on \(G\in[a_G,b_G]\), with the density renormalized on this compact support to align with the theory.}) lognormal distribution, $\log G \sim \mathcal{N}(\mu_G, \sigma_G^2)$, calibrated separately for two fixed horizon goals, specifically retirement and college saving.

{\bf Retirement goal}: For retirement, the mean $\mu_G$ is calibrated using the income replacement rates estimated by \citet{jpmorgan2025replacement} from longitudinal Chase household transaction data. For a \emph{high-income household} with
pre-retirement annual income $Y = \$300{,}000$, the required replacement rate is approximately 55\%, implying a retirement wealth
target of $G \approx 0.55Y / 0.04 \approx 13.75Y$ under a 4\%
annual withdrawal rule, so we set $\mu_G = \log(13.75Y)$. For a
\emph{low-income household} with pre-retirement annual income
$Y = \$30{,}000$, the required replacement rate is approximately
104\%, implying $G \approx 1.04Y / 0.04 = 26Y$, so
we set $\mu_G = \log(26Y)$. We treat $\sigma_G$ as a free parameter.

{\bf College goal:} The mean $\mu_G$ is calibrated using cost-of-attendance data from \citet{collegeboard2025pricing}. It reports average annual total 
cost-of-attendance budgets---including tuition and fees, housing, 
food, books, transportation, and other expenses---of \$30{,}990 
for full-time in-state students at public four-year institutions 
and \$65{,}470 at private nonprofit four-year institutions in 
2025--26. Multiplying by four yields approximate four-year totals 
of \$124{,}000 and \$262{,}000 respectively, giving two baseline calibrations:
\begin{itemize}
    \item {Public institution:} $\mu_G = \log(124{,}000)
          \approx 11.73$, corresponding to a median four-year cost
          of \$124{,}000.
    \item {Private institution:} $\mu_G = \log(262{,}000)
          \approx 12.48$, corresponding to a median four-year cost
          of \$262{,}000.
\end{itemize}

The variance $\sigma_G^2$ reflects uncertainty
about future tuition trajectories, arising from uncertain future
tuition growth rates, changes in institutional aid policy, and
household income volatility that affects net costs. We can treat $\sigma_G$ as a free parameter and conduct sensitivity analysis over $\sigma_G \in [0.30, 0.70]$.

\paragraph{Goal horizon:} The parameter $T$ represents the total saving horizon from the time the household begins accumulating wealth towards the fixed-deadline goal. For {retirement saving}, we assume households enter the labor market at age 25 and retire at age 65, giving a baseline
horizon of $T = 40$ years. We conduct sensitivity analysis over $T \in \{15, 25, 35, 40\}$,
where all cases assume retirement at age 65 and $T$ reflects the
age at which the household begins saving. Specifically, $T = 40$
corresponds to a household that begins saving at age 25 upon
entering the labor market; $T = 35$ to one that begins saving at
age 30; $T = 25$ to one that begins saving at age 40; and $T = 15$
to one that delays saving until age 50, close to retirement. \citet{lusardi2007baby} document heterogeneity in the age at which US households begin actively planning for retirement, which strongly depends on financial literacy.

For {college saving}, a household that begins saving at the
birth of a child faces a horizon of $T = 18$ years. This calibration applies regardless of the household's own age at the
time of the child's birth. We set $T = 18$ as the baseline for college saving and note that households who begin saving later,  for example, when the child enters primary school at age 6 face a compressed horizon of $T = 12$. Hence, we conduct sensitivity analysis for $T$ in the range 10 through 18.

\paragraph{Goal weights $\alpha_R$ and $\alpha_D$.}

The weights $\alpha_R$ and $\alpha_D$ are preference parameters that are not directly observable. We treat them as free parameters. We fix $\alpha_R$ and vary $\alpha_D$ to assess the impact of changes in preference weights on the optimal funding and investment strategy. Unless otherwise specified, preference weights in the following graphs are fixed at $(\alpha_D,\alpha_R)=(0.50,0.50)$.

\paragraph{Market parameters.} We use the quarterly data in \citet{LettauLudvigson2001}, the average excess log return on the market is approximately $0.016$, while the log risk-free rate (proxied by the Treasury bill rate) is about $0.01$ per quarter. Annualizing implies a log equity premium of roughly $0.064$ and a risk-free rate of about $4\%$ per year. To convert the log excess return into the corresponding arithmetic equity risk premium, we account for the volatility correction: if the annualized volatility of equity returns is approximately $16\%$ (consistent with the sample moments), then
\[
\mu - r \approx 0.064 + \tfrac{1}{2}\sigma^2 \approx 0.064 + 0.5 \times (0.16)^2 \approx 0.077,
\]
yielding an equity risk premium of about $7.5\%$--$8\%$ in arithmetic terms. The corresponding Sharpe ratio is therefore
\[
\frac{\mu - r}{\sigma} \approx \frac{0.077}{0.16} \approx 0.5.
\]
Overall, these estimates imply a risk-free rate of about $4\%$, an arithmetic equity premium close to $8\%$, annualized volatility around $16\%$, and a Sharpe ratio in the range of $0.4$--$0.5$.

\subsection{Forced Funding Sensitivity Analysis} \label{sec:baseline-sensitivity}

We use the calibrated parameters to illustrate the two economic effects under the forced funding rule identified in Section~\ref{sec:intro}: the
\emph{growth crowding-out effect} and the \emph{deadline pressure effect}. Each figure plots the time-$0$ value function $V^{R,D}(0,w)$ (left panel) and the optimal risky position $\pi^*(0,w)$ (right panel). Throughout the figures below, the wealth axis is displayed on a logarithmic scale and controls clipped to $\pi\in[-K,K]$ with $K=5.0$. The exception is Figure \ref{fig:calibration-nonmonotone} where we restrict instead to controls in $[0, K]$. This is because, under the forced-payment rule, the optimal policy tends to short heavily near the random goal amount (to ensure the household does not meet it as opposed to meeting it and lose wealth for the more important, later fixed-deadline goal). In the robo-advising framework, this is generally not economically plausible as a household prescription. Restricting to $[0, K]$ is outside the formal analysis; however, the arguments follow similarly since the control set remains compact.

\begin{figure}[H]
\centering
\includegraphics[width=\textwidth]{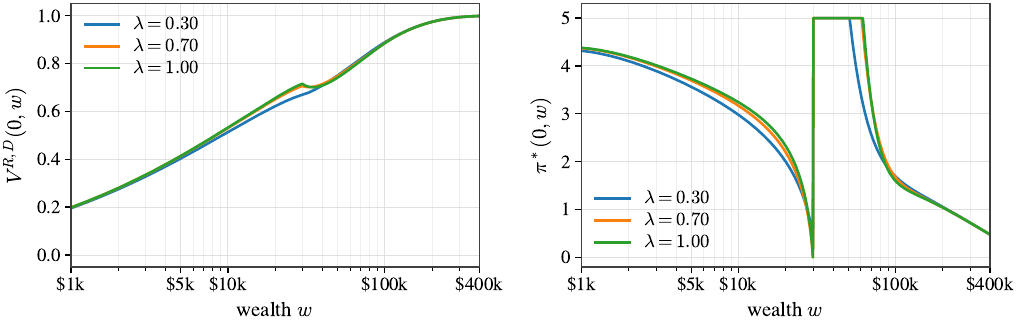}
\caption{Growth crowding-out effect under forced funding. Sensitivity of the value function $V^{R,D}(0,w)$ (left panel) and the initial optimal risky position $\pi^*(0,w)$ (right panel) to random-deadline arrival intensity $\lambda$ beyond the calibrated range $[0.06, 0.40]$ to $\lambda \in
\{0.70, 1.00\}$. This shows the non-monotonicity of the value function when $\alpha_D > \alpha_R$. The fixed-deadline goal is a truncated lognormal with median \$124{,}000, the horizon is $T=18$, the random goal is a truncated normal with small variance at $R=\$29{,}837.40$, and
$(\alpha_D,\alpha_R)=(0.99,0.01)$. We restrict the  control set to the
long-only set $[0,K]$.
}
\label{fig:calibration-nonmonotone}
\end{figure}

Within the calibrated range $\lambda\in[0.06,0.40]$, the value function $V^{R,D}(0,w)$ is monotone in wealth and the
crowding-out is present but local: higher $\lambda$ shifts the optimal portfolio upward over a wider wealth range, reflecting the tighter resource conflict created by a more imminent random goal. To highlight the effect in its sharpest form, we increase $\lambda$ beyond the calibrated range to $0.70$ and $1.00$, values that could represent a household facing recurring large unpredicted expenditures. At these intensities, we see that a non-monotone region emerges in the intermediate wealth range near the random-goal threshold $R$. The mechanism is precisely the growth crowding-out effect: a household with wealth just above $R$ can afford the random-deadline goal when it arrives, earning the $\alpha_R$ reward, but paying it strips away nearly all accumulated wealth, leaving the household unable to remain on track for the fixed-deadline goal $G$. Since $\alpha_D>\alpha_R$ in this calibration, a household that started slightly poorer, missed the random goal but retained its wealth, is better positioned for $G$. The value function therefore dips in the barely-affordable band where $w\gtrsim R$, producing a local minimum before it recovers again as wealth grows large enough to fund both goals. The policy panel reflects this directly: in the dip region the wealth manager reduces the risky asset allocation to a minimal level, to avoid entering into the wealth range where meeting the random goal is individually affordable, but collectively undesirable.

\begin{figure}[H]
\centering
\includegraphics[width=\textwidth]{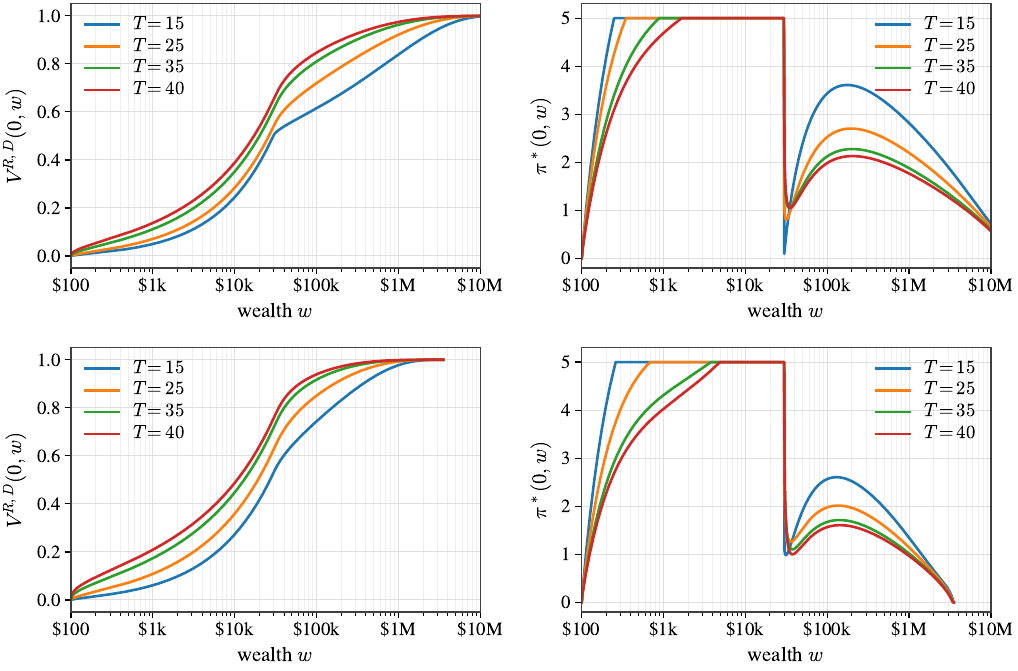}
\caption{Deadline pressure effect. Sensitivity of the value function $V^{R,D}(0,w)$ (left panel) and the initial optimal risky position $\pi^*(0,w)$ (right panel) to the retirement saving horizon $T$. The top row uses the high-income retirement target and the bottom row uses the low-income target. The figure compares $T\in\{15,25,35,40\}$ years, corresponding to households that begin saving at ages $50$, $40$, $30$, and $25$ respectively, all retiring at age $65$.}
\label{fig:calibration-retirement-horizon}
\end{figure}

In Figure~\ref{fig:calibration-retirement-horizon}, we vary the
saving horizon $T\in\{15,25,35,40\}$ years, corresponding to households that begin saving at ages $50$, $40$, $30$, and $25$ respectively, all retiring at age $65$. The deadline pressure effect is quantitatively large: a household that begins saving at age $25$ achieves a success probability of approximately $80\%$ at \$100k of initial wealth, whereas a household that delays until age $50$ requires nearly \$1 million to reach the same probability, namely a tenfold wealth-equivalent cost of lost compounding time. In the policy panels, shorter-horizon curves
keep $\pi^*$ at its maximum over a wider wealth range and then spike sharply near the goal threshold, reflecting the household's need to compensate for the absence of gradual compounding. Shorter horizons also make the post-threshold catch-up region for the next goal more aggressive as the household has less time to rebuild wealth. Finally, we see a longer horizon increases exposure to the random goal, since $\Pr(\tau\le T)=1-e^{-\lambda T}$ rises with $T$. This creates a countervailing force on value at low wealth, which compresses the value function curves at the far left relative to the middle and upper wealth ranges. 

\begin{figure}[H]
\centering
\includegraphics[width=\textwidth]{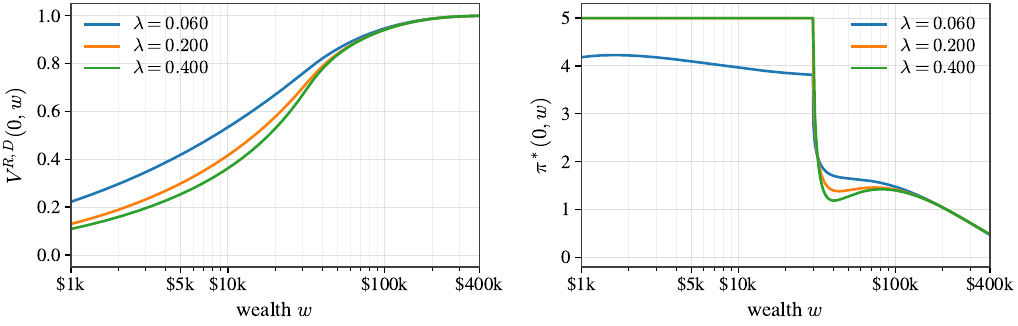}
\caption{Sensitivity of the value function $V^{R,D}(0,w)$ (left panel) and the initial optimal risky position $\pi^*(0,w)$ (right panel) to the random-deadline intensity $\lambda$ under the public-college baseline.}
\label{fig:calibration-lambda-public-college}
\end{figure}

Figure~\ref{fig:calibration-lambda-public-college} varies $\lambda$ under the public-college
baseline. Low $\lambda$ makes the problem approximately a single fixed-deadline problem. As the random goal is unlikely to arrive before $T$, both the value and policy curves are close to the benchmark by \cite{Browne1999}. As $\lambda$ rises, two effects appear. First, the value curve shifts rightward at low wealth,
reflecting the increased probability of an imminent shock that depletes resources before the household has accumulated sufficiently. Second, the optimal policy becomes more aggressive in the intermediate wealth region. In this circumstance, the marginal value of an extra dollar before the random event is higher when the event is more likely, since the continuation value after the random goal arrives is lower by $R$, justifying greater risk-taking now. When $\lambda=0.20$ and $\lambda=0.40$ curves, $\pi^*$ reaches its maximum $K=5.0$ over the intermediate
range, while $\lambda=0.06$ reaches only $4$--$4.2$ in the same region. These patterns are a manifestation of the growth crowding-out effect: even within the calibrated range, a higher $\lambda$ tightens the resource conflict between the two goals by making the random-deadline claim more imminent, without yet producing the non-monotonicity documented in Figure~\ref{fig:calibration-nonmonotone}.

In Figure~\ref{fig:calibration-weight-public-college}, we vary the
preference weights $\alpha_D\in\{0.1,0.3,0.5,0.7,0.9\}$, with
$\alpha_R=1-\alpha_D$, holding all other parameters at their calibrated values. Higher $\alpha_D$ makes the value curve more S-shaped, with a steeper transition centered near the fixed-deadline goal threshold, because a larger share of the objective depends on the harder-to-reach accumulation goal. The optimal policy exhibits a crossover: at low and intermediate wealth, $\alpha_R$-heavy households take more risk because the emergency goal is attainable at modest wealth levels; at high wealth, $\alpha_D$-heavy households take more risk because the college goal is within reach. The crossover wealth level marks the point at which the dominant goal in the household's objective switches from the random-deadline goal to the fixed-deadline goal.

\begin{figure}[H]
\centering
\includegraphics[width=\textwidth]{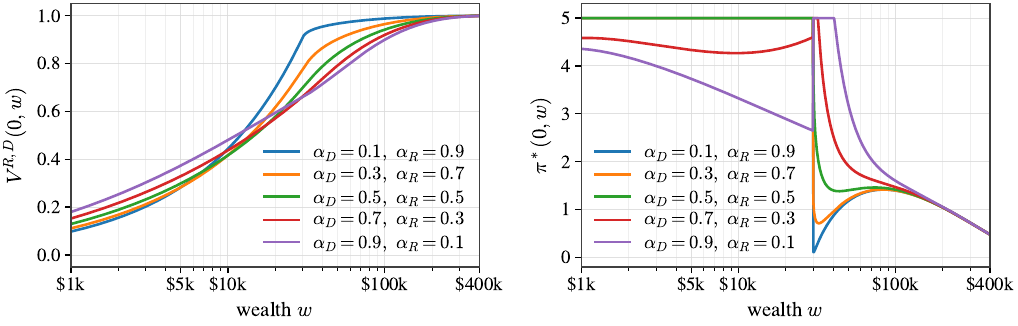}
\caption{Sensitivity of the value function $V^{R,D}(0,w)$ (left panel) and the initial optimal risky position $\pi^*(0,w)$ (right panel) to the preference weights under the public-college baseline calibration. We impose $\alpha_D+\alpha_R=1$ and vary $\alpha_D \in \{0.1,0.3,0.5,0.7,0.9\}$. }
\label{fig:calibration-weight-public-college}
\end{figure}

In Appendix~\ref{app:senscalibr}, we show that our sensitivity analysis remains qualitatively the same under plausible variations of the calibrated parameters.

\subsection{Optional Funding Sensitivity Analysis}
\label{sec:optional-graphs}

In the optional funding variant of the model, the household
may decline the fixed-deadline goal at time $T$ rather than
being forced to fund it whenever affordable. The forced funding sensitivity analysis in
Section~\ref{sec:baseline-sensitivity} uses equal preference weights $\alpha_D=\alpha_R=0.5$, under which optional and forced funding produce identical value functions and optimal policies because the household always prefers to fund the fixed-deadline goal when it can. To illustrate cases where the forced and optional funding  formulations differ, we set $(\alpha_D,\alpha_R)=(0.10, 0.90)$ throughout this section (unless otherwise indicated): the random-deadline goal
carries most of the household's objective weight, so
preserving wealth for it can outweigh the direct reward from funding the fixed-deadline goal. We fix the remaining parameters at $T=18$, $\lambda=0.20$, fixed-deadline goal distributed as a truncated lognormal with median \$124{,}000
and $\sigma_G=0.5$, and random goal $R=\$29{,}837.40$.

\subsubsection{Two Notions of the Value of Optional
Fixed-Deadline Funding}
\label{sec:two-option-value-objects}

The \emph{ex ante option value} is the full time-$0$ value of optional versus forced funding, computed before the household has observed the price path, the random-arrival time, or either goal realization:
\begin{equation}
\label{eq:full-time0-option-value}
\Delta V_0(w_0)
:=
V_{\mathrm{opt}}^{R,D,K}(0,w_0)
-
V_{\mathrm{forced}}^{R,D,K}(0,w_0).
\end{equation}
The \emph{terminal option value} is the value of optionality
measured at the terminal date $T$ for a fixed level of wealth
$w$, before the fixed-deadline amount $G$ is drawn:
\begin{equation}
\label{eq:terminal-option-average-G}
\Delta\mathcal T(w)
:=
\mathcal T_{\mathrm{opt}}^K(w)
-
\mathcal T_{\mathrm{forced}}^K(w)
=
\mathbb E_G[\Delta(w,G)],
\end{equation}
where $\Delta(w,g)$ is the pointwise terminal option gain
\begin{equation}
\label{eq:pointwise-terminal-option-gain}
\Delta(w,g)
:=
\mathbf 1_{\{w\ge g\}}
\left(
\max\left\{
\alpha_R V^{R,K}(w),\,
\alpha_D+\alpha_R V^{R,K}(w-g)
\right\}
-
\left[
\alpha_D+\alpha_R V^{R,K}(w-g)
\right]
\right).
\end{equation}
If the goal is unaffordable, forced and optional funding
coincide and the gain is zero. If the goal is affordable, the
household compares the direct reward from funding it,
$\alpha_D$, with the loss in random-deadline continuation
value $\alpha_R[V^{R,K}(w)-V^{R,K}(w-g)]$; the option has
value exactly when the continuation-value loss exceeds the
fixed-goal reward. We use $\mathcal T_{\mathrm{forced}}^K
:=\mathcal T^K$ as introduced in
Section~\ref{sec:dual}.

The ex ante value averages over market risk, arrival risk, and
both goal realizations. The terminal value conditions on
terminal wealth and averages only over $G$; the random-deadline
goal amount $R$ does not appear explicitly because it is
already incorporated into the continuation value $V^{R,K}$,
which represents the optimized probability of funding the
random-deadline goal from any given post-$T$ wealth level.

The ex ante value is smaller than the terminal value for two
reasons. First, the terminal option is relevant only on paths
where the random-deadline goal has not arrived before $T$,
which occurs with probability $e^{-\lambda T}$. On paths where
the random goal arrives before $T$, the fixed-deadline goal
has not yet been processed and no funding decision is
available, so optional and forced funding coincide on those paths. Second, the ex ante value reflects optimization over the entire $[0,T]$ investment horizon. Because the household can adjust its strategy in response to funding rule, pre-terminal optimization can partially offset the disadvantage of forced funding. By contrast, the terminal option value holds terminal wealth fixed and isolates the funding decision at $T$ alone.

\begin{figure}[H]
    \centering
    \begin{subfigure}[t]{0.48\textwidth}
        \centering
        \includegraphics[width=\linewidth]{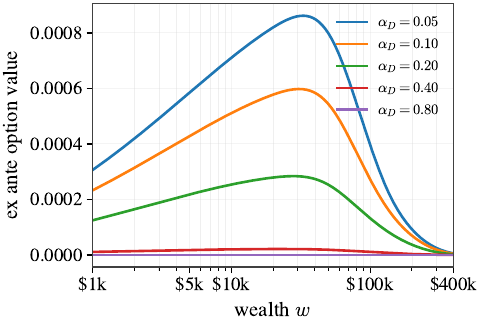}
        \caption{Ex ante option value $\Delta V_0(w_0)$.}
        \label{fig:option-value-time0}
    \end{subfigure}
    \hfill
    \begin{subfigure}[t]{0.48\textwidth}
        \centering
        \includegraphics[width=\linewidth]{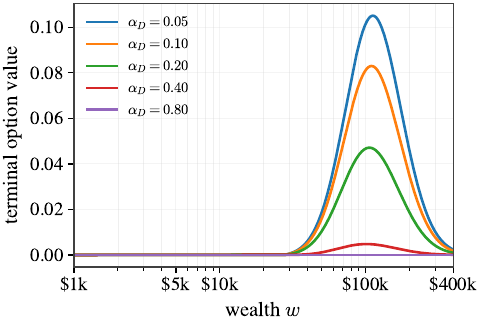}
        \caption{Terminal option value $\Delta\mathcal T(w)$.}
        \label{fig:option-value-timeT}
    \end{subfigure}
    \caption{Ex ante option value (left panel) and terminal option value (right panel) as a function of wealth. Both
    objects are hump-shaped: the option has little value at
    low wealth where the fixed-deadline goal is typically
    unaffordable, and at high wealth where the household can
    fund the goal while still preserving enough for the
    random-deadline continuation problem. The option is most
    valuable at intermediate wealth, where paying the
    fixed-deadline goal would substantially reduce the continuation value of the random-deadline problem.}
    \label{fig:option-value-main}
\end{figure}

Figure~\ref{fig:three-option-terminal-average-G} varies the
dispersion $\sigma_G$ of the fixed-deadline goal distribution
and plots the terminal option value $\Delta\mathcal T(w)$ as a
function of terminal wealth $w$. The option value depends on
the probability mass assigned to realizations of $G$ that are
affordable but costly in terms of lost random-deadline
continuation value. Very small realizations of $G$ generate
little option value because funding them barely reduces
$V^{R,K}$. Very large realizations are typically unaffordable
and therefore generate no option value either. The largest contribution comes from intermediate realizations that are
affordable but materially reduce the continuation value of the
random-deadline goal. A more diffuse goal distribution shifts
mass into and out of this region, producing the variation in
terminal option value visible across the $\sigma_G$ curves.

\begin{figure}[H]
\centering
\includegraphics[width=0.48\textwidth]{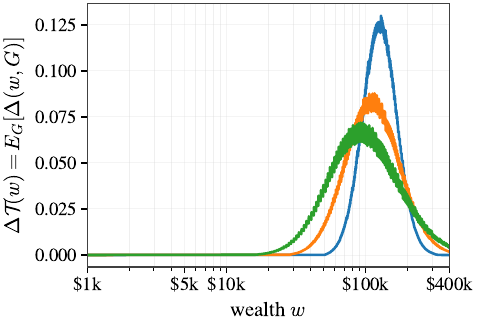}
\caption{Terminal option value $\Delta\mathcal T(w)=\mathbb
E_G[\Delta(w,G)]$ for different dispersions $\sigma_G$ of the
fixed-deadline goal distribution. Wealth $w$ is fixed on the
horizontal axis and the average is taken only over $G$, with
future random-deadline uncertainty already embedded in
$V^{R,K}$.}
\label{fig:three-option-terminal-average-G}
\end{figure}

Figure~\ref{fig:three-option-policy-value} shows how optional funding changes the household's investment behavior relative to forced funding. The policy difference is positive at intermediate wealth: knowing it can decline the fixed-deadline goal at $T$, the household doesn't face the same destruction in continuation value once crossing the goal threshold. As a result, intermediate upside states become more attractive, and the optional policy can take a larger risky position than the forced-funding policy. At low and high wealth levels the two policies coincide, because the funding decision is either infeasible or unambiguously optimal regardless of the funding rule.

\begin{figure}[H]
    \centering
    \begin{subfigure}[t]{0.48\textwidth}
        \centering
        \includegraphics[width=\linewidth]{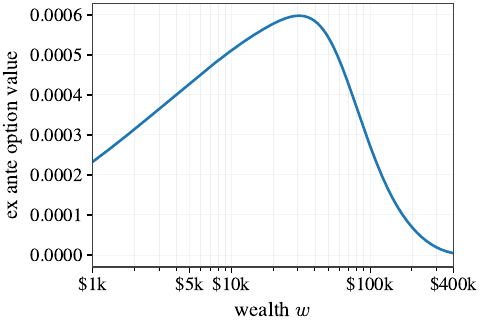}
        \caption{Ex ante option value.}
        \label{fig:option-value-0}
    \end{subfigure}
    \hfill
    \begin{subfigure}[t]{0.48\textwidth}
        \centering
        \includegraphics[width=\linewidth]{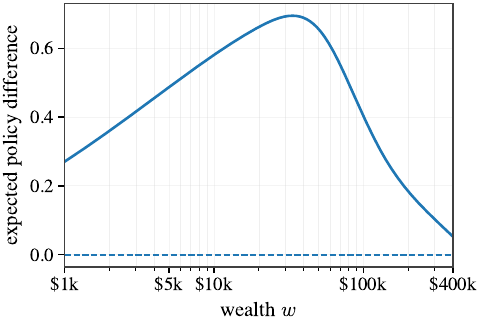}
        \caption{Optimal policy difference between optional
        and forced funding.}
        \label{fig:option-policy-0}
    \end{subfigure}
    \caption{Ex ante option value (left panel) and optimal policy difference (right panel) under baseline $\lambda=0.20$. The policy difference plots $\mathbb{E}[\pi_{\mathrm{opt}}(T^-,W_T)\mid\tau>T]-\mathbb{E}[\pi_{\mathrm{forced}}(T^-,W_T)\mid\tau>T]$ as a function of initial
    wealth.}
    \label{fig:three-option-policy-value}
\end{figure}

\subsection{Asymptotic and Limit Benchmarks}
\label{sec:asymptotics}

We verify that the two single-goal building blocks converge to their
known unconstrained benchmarks as the control cap $K$ increases.
For a deterministic fixed-deadline goal $G\equiv g$, the clipped
value $V^{D,K}(0,w)$ should approach the closed form expression in \cite{Browne1999}
\[
V^D_{\mathrm{Browne}}(0,w;g)
=
\Phi\!\left(\Phi^{-1}\!\left(\tfrac{w}{g}\right)+\gamma\sqrt{T}\right).
\]
For a deterministic random-deadline goal $R\equiv c$, the clipped
value $V^{R,K}(w)$ should approach the power form $(w/c)^\kappa$
with $\kappa=\lambda/(\lambda+\tfrac12\gamma^2)$. Both
convergences are confirmed in
Figures~\ref{fig:limit-fixed-to-browne}
and~\ref{fig:limit-random-to-power}.

\begin{figure}[H]
\centering
\includegraphics[width=0.48\textwidth]{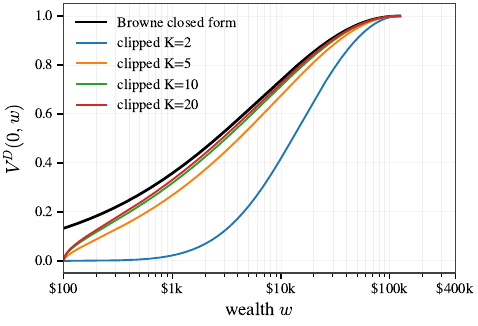}
\caption{
Fixed-deadline single-goal benchmark. The clipped fixed-deadline HJB solution \(V^{D,K}(0,w)\) is compared with the unconstrained Browne closed form \(V^D_{\mathrm{Browne}}(0,w;g)\). As \(K\) increases, the clipped solution approaches the Browne benchmark on the region where the unconstrained policy is not binding.
}
\label{fig:limit-fixed-to-browne}
\end{figure}

\begin{figure}[H]
\centering
\includegraphics[width=0.48\textwidth]{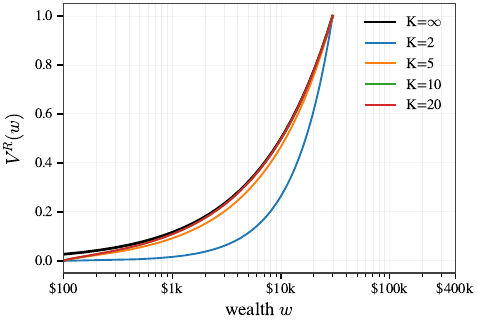}
\caption{
Random-deadline single-goal benchmark. The clipped random-deadline HJB solution \(V^{R,K}(w)\) is compared with the unconstrained power-form solution \((w/R)^\kappa\). The approximation improves as the control cap \(K\) exceeds the unconstrained optimal constant risky fraction.
}
\label{fig:limit-random-to-power}
\end{figure}

\subsubsection{Priority per Dollar of the Fixed-Deadline Goal}

Finally, we vary the fixed-goal priority per dollar, measured by \((\alpha_D/G_0)\times100{,}000\). This statistic combines the reward from funding the fixed-deadline goal with the scale of the liability. It is the relevant object for the optional funding decision because the household compares the fixed-goal reward \(\alpha_D\) with the continuation-value loss from paying \(G\).

\begin{figure}[H]
    \centering

    \begin{subfigure}[t]{0.48\textwidth}
        \centering
        \includegraphics[width=\linewidth]{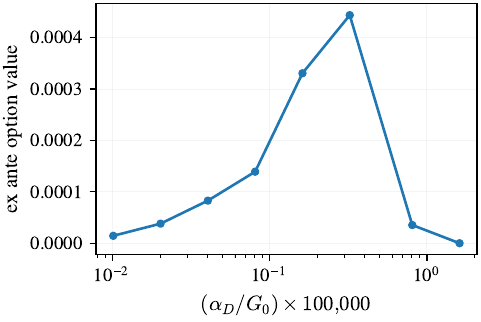}
        \caption{Ex ante option value.}
        \label{fig:asym-alphaG-time0}
    \end{subfigure}
    \hfill
    \begin{subfigure}[t]{0.48\textwidth}
        \centering
        \includegraphics[width=\linewidth]{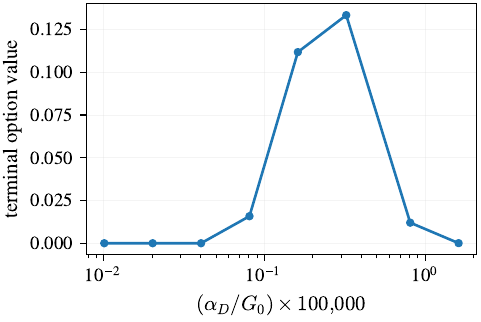}
        \caption{Terminal option value.}
        \label{fig:asym-alphaG-terminal}
    \end{subfigure}

    \caption{
    Option value as fixed-goal priority per dollar varies. The horizontal axis is \((\alpha_D/G_0)\times100{,}000\). Wealth is fixed at $\$50k$. 
    }
    \label{fig:asym-alpha-over-G-option}
\end{figure}

Both panels in Figure~\ref{fig:asym-alpha-over-G-option} display the same basic mechanism. When \(\alpha_D/G_0\) is very low, the deterministic goal is low priority per dollar and often too costly to fund, so the option to decline it has limited value. At intermediate values, the goal is often affordable, but funding it substantially reduces the continuation value of the random-deadline goal. This is where optional funding is most valuable. When \(\alpha_D/G_0\) is high, the fixed goal is sufficiently important or inexpensive that the household funds it whenever feasible, so optional and forced funding coincide.

\appendix
\section{Viscosity Proofs}\label{app:viscosity}

\subsection{Deterministic Deadline Goal}

As in Section~\ref{sec:gen.model}, for $K>0$ we write
\[
\Pi_K=[-K,K]^n,
\qquad
H^K(w,p,q)
:=
\sup_{\piv\in\Pi_K}
\left\{
wp\,\piv^\top\thetavec
+
\frac12 w^2q\,\piv^\top\Sigmav\piv
\right\}.
\]
We write the HJB equation appearing in \eqref{eq:hjb-rgdt-clipped} either as
\begin{equation}
\label{eq:A1-display-PDE}
 u_t+H^K(w,u_w,u_{ww})=0
\end{equation}
or, equivalently, in terms of
\[
\mathcal{F}_D^K(t,w,z,a,p,q):=-a-H^K(w,p,q).
\]
Here $z$ is the value variable.  It is included only to match standard viscosity notation since the fixed-deadline equation has no zero-order term.

A viscosity subsolution of $\mathcal{F}_D^K=0$ satisfies $\mathcal{F}_D^K\le0$.  Thus, if a smooth test function $\phi$ touches a subsolution from above at $(t,w)$, then
\[
-\phi_t(t,w)-H^K(w,\phi_w(t,w),\phi_{ww}(t,w))\le0,
\]
equivalently $\phi_t+H^K\ge0$.  A viscosity supersolution satisfies the reverse inequality.  This is the sign convention used throughout the proof below. We begin with a preliminary result about $H^K$.

\begin{lemma}\label{prop:FD-clipped-H}
For every $K>0$, the Hamiltonian $H^K$ is finite and continuous on $\mathbb{R}^3$. Moreover, on every compact subset of $\mathbb R^3$, $H^K$ is Lipschitz. Finally, if $q_1\le q_2$, then
\[
H^K(w,p,q_1)\le H^K(w,p,q_2)
\qquad\text{for all }(w,p)\in\mathbb R ^2.
\]
Consequently, $\mathcal F_D^K$ is continuous, proper, and degenerate elliptic.
\end{lemma}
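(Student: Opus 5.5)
The plan is to view $H^K$ as the pointwise supremum over the compact set $\Pi_K$ of the family of functions
\[
h_{\piv}(w,p,q):=wp\,\piv^\top\thetavec+\tfrac12 w^2q\,\piv^\top\Sigmav\piv,\qquad \piv\in\Pi_K,
\]
each of which is a polynomial in $(w,p,q)$ with coefficients depending continuously on $\piv$. Finiteness is immediate: for fixed $(w,p,q)$ the map $\piv\mapsto h_{\piv}(w,p,q)$ is continuous on the compact set $\Pi_K$, so the supremum is attained and is finite. More quantitatively, with $C_K:=\sup_{\piv\in\Pi_K}\max\{|\piv^\top\thetavec|,\tfrac12\piv^\top\Sigmav\piv\}<\infty$, we get $|H^K(w,p,q)|\le C_K(|w||p|+w^2|q|)$.

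For the local Lipschitz property I will use the elementary inequality $|\sup_{\piv} a_{\piv}-\sup_{\piv} b_{\piv}|\le\sup_{\piv}|a_{\piv}-b_{\piv}|$. On a compact set $Q\subset\mathbb R^3$, say $Q\subset[-M,M]^3$, the gradient of $h_{\piv}$ in $(w,p,q)$ is
\[
\bigl(p\,\piv^\top\thetavec+wq\,\piv^\top\Sigmav\piv,\ w\,\piv^\top\thetavec,\ \tfrac12 w^2\piv^\top\Sigmav\piv\bigr).
\]
This is bounded by a constant $L(M,K)$ uniformly in $\piv\in\Pi_K$. Since $[-M,M]^3$ is convex, each $h_{\piv}$ is $L$-Lipschitz there, with the same constant for every $\piv$. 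Hence $H^K$ is $L$-Lipschitz on $Q$, and continuity on all of $\mathbb R^3$ follows.

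Monotonicity in $q$ follows because $\Sigmav$ is positive definite, so $w^2\piv^\top\Sigmav\piv\ge0$. Therefore $h_{\piv}(w,p,q_1)\le h_{\piv}(w,p,q_2)$ for every $\piv$ whenever $q_1\le q_2$, and taking suprema preserves this inequality.

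Finally, consider $\mathcal F_D^K(t,w,z,a,p,q)=-a-H^K(w,p,q)$.
\begin{itemize}
\item It is continuous as a sum of continuous functions.
\item It is proper in the standard sense: it is nondecreasing in $z$, trivially so since it does not depend on $z$.
\item It is degenerate elliptic: it is nonincreasing in the second-order variable $q$, because $H^K$ is nondecreasing in $q$ and enters with a minus sign.
\end{itemize}
There is no real obstacle in this argument. The only care needed is making the Lipschitz constant uniform in $\piv$, which compactness of $\Pi_K$ provides; it does not extend to a global Lipschitz bound, since the gradient grows with $w$.
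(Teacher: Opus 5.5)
Your proposal is correct and follows essentially the same route as the paper. Both arguments take a Lipschitz bound on each $\piv\mapsto$-indexed polynomial that is uniform over the compact set $\Pi_K$, pass it through the supremum, and get monotonicity in $q$ from $\piv^\top\Sigmav\piv\ge0$; your gradient bound on an enclosing convex box simply makes explicit the uniform constant that the paper asserts directly from the polynomial structure.
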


\begin{proof}
For $\bm\pi\in\Pi_K$, define
\[
g^K(w,p,q;\bm\pi)
:=
w\,p\,\bm\pi^\top\bm\theta
+
\tfrac12\,w^2\,q\,\bm\pi^\top\bm\Sigma\,\bm\pi.
\]
Fix a compact set $C\subset \mathbb{R}^3$.
Because $\Pi_K$ is compact, there is a constant $M_K<\infty$ such that $\|\bm\pi\|\le M_K$ for every $\bm\pi\in\Pi_K$. Since $g^K$ is polynomial in $(w,p,q,\bm\pi)$, there exists $L_{C,K}<\infty$ with
\[
|g^K(\xi;\bm\pi)-g^K(\eta;\bm\pi)|
\le
L_{C,K}\,\|\xi-\eta\|
\qquad\text{for all }\xi,\eta\in C,\ \bm\pi\in\Pi_K.
\]
Taking the supremum over $\bm\pi\in\Pi_K$ yields
\[
|H^K(\xi)-H^K(\eta)|
\le
L_{C,K}\,\|\xi-\eta\|,
\qquad \xi,\eta\in C,
\]
so $H^K$ is continuous and locally Lipschitz.

If $q_1\le q_2$, then for every $\bm\pi\in\Pi_K$,
\[
g^K(w,p,q_1;\bm\pi)
\le
g^K(w,p,q_2;\bm\pi),
\]
because $\bm\pi^\top\bm\Sigma\,\bm\pi\ge0$. Taking suprema gives
$H^K(w,p,q_1)\le H^K(w,p,q_2)$.
Therefore $\mathcal F_D^K$ is nonincreasing in $q$, i.e.\ degenerate elliptic. It is proper because it does not depend on the value variable (therefore nondecreasing), and it is continuous by the continuity of $H^K$.
\end{proof}

We note that since the control set $\Pi_K$ is compact, the drift and diffusion
coefficients for the wealth
\[
b(w,\bm\pi)=w\bm\pi^\top\bm\theta,
\qquad
\sigma_W(w,\bm\pi)=w\bm\pi^\top\bm\sigma
\]
are locally Lipschitz in $w$ with constants uniform in
$\bm\pi\in\Pi_K$, and satisfy a uniform linear-growth bound.
Hence, for every admissible control, the wealth equation has a
unique strong solution and the problem is well-posed.

\begin{lemma}
\label{lem:FD-continuity}
The value function $V^{D,K}(t,w)$ is jointly continuous on $[0,T]\times [0,\infty)$.
\end{lemma}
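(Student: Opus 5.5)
We prove joint continuity of $V^{D,K}$ on $[0,T]\times[0,\infty)$ by splitting it into continuity in $w$, uniform in $t$, and continuity in $t$, uniform in $w$ on compacts. The tool throughout is the multiplicative structure of the wealth equation. For any $\piv\in\mathcal A^{K,\rm pre}_{t,T}$,
\[
\widetilde W_T^{t,w;\piv}=w\,\mathcal E_{t,T}(\piv),\qquad \mathcal E_{t,T}(\piv):=\exp\Bigl(\int_t^T\bigl(\piv_s^\top\thetavec-\tfrac12\piv_s^\top\Sigmav\piv_s\bigr)ds+\int_t^T\piv_s^\top\sigmavec\,dB_s\Bigr),
\]
so the same control applied from two initial wealths produces terminal wealths in the fixed ratio $w'/w$. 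On $w\ge b_G$ the function is identically $1$, and $F_G$ is continuous and nondecreasing with values in $[0,1]$. It therefore suffices to work on $[0,T]\times[0,b_G+1]$.

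\textbf{Continuity in $w$.} Since $F_G$ is continuous on the compact set $[0,b_G]$ and constant outside it, $F_G$ is uniformly continuous on $\mathbb R$. Fix $\varepsilon>0$ and choose $\delta$ with $|F_G(x)-F_G(y)|<\varepsilon$ whenever $|x-y|<\delta$. For $0<w<w'$ and any control,
\[
\bigl|F_G(w'\mathcal E)-F_G(w\mathcal E)\bigr|\le \varepsilon+\1_{\{(w'-w)\mathcal E\ge\delta\}}.
\]
Markov's inequality bounds the expectation of the indicator by $(w'-w)\E[\mathcal E]/\delta$. Because $\Pi_K$ is bounded, $\mathcal E$ is a nonnegative supermartingale-type term times $e^{\int\piv^\top\thetavec}$, so $\E[\mathcal E_{t,T}(\piv)]\le e^{C_KT}$ with $C_K:=\sup_{\piv\in\Pi_K}|\piv^\top\thetavec|$. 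Taking suprema over controls gives
\[
|V^{D,K}(t,w')-V^{D,K}(t,w)|\le\varepsilon+|w'-w|e^{C_KT}/\delta,
\]
uniformly in $t$. This covers $w=0$ as well, since $V^{D,K}(t,0)=0$ and $F_G(0)=0$.

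\textbf{Continuity in $t$.} Take $s<t\le T$. Any control on $[t,T]$ extends to $[s,T]$ by setting $\piv=0$ on $[s,t)$, which gives $V^{D,K}(s,w)\ge V^{D,K}(t,w)$. This uses that the law of the problem is shift-invariant, so $V^{D,K}(t,\cdot)$ depends only on $T-t$; I would state this time-homogeneity explicitly.

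For the reverse bound, take an $\varepsilon$-optimal control $\piv$ from $(s,w)$ and split $\mathcal E_{s,T}=\mathcal E_{s,t}\,\mathcal E_{t,T}$. Conditioning on $\mathcal F_t$ gives
\[
\E\bigl[F_G(w\mathcal E_{s,T})\bigr]\le \E\bigl[V^{D,K}(t,w\mathcal E_{s,t})\bigr],
\]
since conditionally on $\mathcal F_t$ the remaining control is admissible from time $t$. This step is a weak dynamic programming inequality. To avoid measurable selection, I would argue it via regular conditional probabilities and the fact that any $\mathbb F$-predictable control, frozen at an $\mathcal F_t$-measurable path, is admissible from time $t$ with respect to the shifted Brownian motion.

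Next apply the uniform modulus in $w$ from the first step:
\[
\E\bigl[V^{D,K}(t,w\mathcal E_{s,t})\bigr]\le V^{D,K}(t,w)+\varepsilon+w\,\E|\mathcal E_{s,t}-1|\,e^{C_KT}/\delta.
\]
The bound $\E|\mathcal E_{s,t}-1|\le C\sqrt{t-s}$ holds uniformly over controls in $\Pi_K$, by the Itô isometry and the bounded coefficients. Hence $V^{D,K}(s,w)-V^{D,K}(t,w)\to0$ as $t-s\to0$, uniformly for $w$ in compacts.

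The case $t=T$ needs a separate check, namely $V^{D,K}(s,w)\to F_G(w)$ as $s\uparrow T$. The same estimate applied with the terminal reward $F_G$ in place of $V^{D,K}(t,\cdot)$ gives this directly. This is exactly where the clipping $K<\infty$ matters, since it rules out the Browne phenomenon in which the limit is $w/g$ rather than the reward.

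Combining the two uniform moduli gives joint continuity. I expect the main obstacle to be the rigorous conditioning step, the weak dynamic programming inequality, in the time-continuity argument. If it proves delicate, an alternative is to bypass it entirely: since the problem is time-homogeneous, compare horizons $T-s$ and $T-t$ by a direct time shift of controls. Then $V(s,w)-V(t,w)$ is at most the effect of an extra initial stretch of length $t-s$, which is controlled by the same $\E|\mathcal E-1|$ bound.
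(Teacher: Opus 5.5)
Your proof is correct, but it takes a different route from the paper's. The paper never handles the time variable directly. It approximates $F_G$ uniformly by Lipschitz functions $F_G^m$ and invokes \cite[Proposition~3.7]{Touzi2013} to get continuity of the Lipschitz-payoff value functions $V_m$. It then passes to the uniform limit using $|V_m-V^{D,K}|\le\|F_G^m-F_G\|_\infty$. You instead build explicit moduli from the scaling identity $\widetilde W_T^{t,w;\piv}=w\,\mathcal E_{t,T}(\piv)$. Uniform continuity of $F_G$ plus Markov's inequality gives a modulus in $w$ that is uniform on all of $[0,T]\times[0,\infty)$. The easy half of the dynamic programming principle together with $\E|\mathcal E_{s,t}-1|\le C\sqrt{t-s}$ gives the time modulus. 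The paper's version is shorter and hands the delicate time regularity to a textbook theorem. It is also reused almost verbatim for the dual problem (Lemma~\ref{lem:dual-value-continuity}) via the Mayer augmentation. Yours is self-contained, produces quantitative moduli and monotonicity of $V^{D,K}$ in $t$ as by-products, and shows exactly where $K<\infty$ is used. The price is the conditioning inequality $\E[F_G(w\mathcal E_{s,T})]\le\E[V^{D,K}(t,w\mathcal E_{s,t})]$, which must be proved rather than asserted. It is precisely Lemma~\ref{lem:condbd}(i) with $S=t$ and $\HH=\F_t$. That lemma rests only on Lemma~\ref{lem:fresh}, plus, for measurability of the right-hand side at fixed $t$, the continuity in $w$ you establish first, so there is no circularity. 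Two small remarks. First, $V^{D,K}(s,w)\ge V^{D,K}(t,w)$ needs no time-homogeneity: extending a control by zero on $[s,t)$ leaves time-$T$ wealth unchanged. Second, your fallback time-shift argument would first require reducing to controls that are functionals of Brownian increments, as in Lemma~\ref{lem:fresh}, since an $\mathbb F$-predictable control on $[s,T]$ may depend on pre-$s$ information.
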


\begin{proof}
Since the function $F_G$ is continuous and constant outside the
compact interval $[0,b_G]$, it is uniformly continuous on $\mathbb R$.
Choose Lipschitz functions $F_G^m:\mathbb R\to[0,1]$ such that
\[
F_G^m(x)=0\quad\text{for }x\le0,
\qquad
F_G^m(x)=1\quad\text{for }x\ge b_G,
\qquad
\|F_G^m-F_G\|_\infty\to0.
\]
For instance, one may take the piecewise-linear interpolations of $F_G$
on partitions of $[0,b_G]$ whose mesh tends to zero, and extend them
constantly outside $[0,b_G]$.

For each $m$, define the auxiliary value function
\[
V_m(t,w)
:=
\sup_{\piv\in\mathcal A^{K,{\rm pre}}_{t,T}}
\mathbb E\!\left[
F_G^m\!\left(\widetilde W_T^{t,w;\piv}\right)
\right],
\qquad
(t,w)\in[0,T]\times\mathbb{R}.
\]
Observe that we have extended the domain to all of $\mathbb{R}$ but this does not pose a problem since the wealth SDE is still well-posed and clearly $V_m(t,w)=0$ if $w\leq 0$. For fixed $m$, this is a finite-horizon stochastic control problem in Mayer form with bounded control set $\Pi_K$, Lipschitz terminal payoff $F_G^m$,
and coefficients that have linear growth in $(w,\pi)$ and are Lipschitz in the state uniformly over controls.
Therefore \cite[Proposition~3.7]{Touzi2013} applies and gives $V_m\in C([0,T]\times\mathbb R)$.
More precisely, $V_m$ is Lipschitz in the wealth variable, uniformly in
time, and $1/2$-Hölder continuous in time on compact wealth intervals.

We now compare $V_m$ with $V^{D,K}$. For every
$(t,w)\in[0,T]\times[0,\infty)$,
\[
\begin{aligned}
\left|V_m(t,w)-V^{D,K}(t,w)\right|
&\le
\sup_{\piv\in\mathcal A^{K,{\rm pre}}_{t,T}}
\mathbb E\!\left[
\left|
F_G^m\!\left(\widetilde W_T^{t,w;\piv}\right)
-
F_G\!\left(\widetilde W_T^{t,w;\piv}\right)
\right|
\right]  \\
&\le
\|F_G^m-F_G\|_\infty .
\end{aligned}
\]
Thus $V_m\to V^{D,K}$ uniformly on $[0,T]\times[0,\infty)$. Since each $V_m$ is
continuous, the uniform limit $V^{D,K}$ is continuous on
$[0,T]\times[0,\infty)$.
\end{proof}

Our next result gives the usual comparison principle for viscosity solutions. For an open domain $Q\subset\mathbb R^2$, we write $USC(\overline Q)$ and $LSC(\overline Q)$ for the classes of real-valued functions on $\overline Q$ that are upper and lower semicontinuous relative to $\overline Q$, respectively.

\begin{lemma}[Comparison]
\label{lem:FD-comparison}
Let $Q:=(0,T)\times(0,b_G)$, and let $u\in USC(\overline Q)$ and
$v\in LSC(\overline Q)$ be bounded. Suppose that $u$ is a viscosity
subsolution and $v$ is a viscosity supersolution of
\[
-u_t-H^K(w,u_w,u_{ww})=0
\qquad\text{in }Q.
\]
Assume further that the parabolic boundary inequalities
\[
u(T,w)\le F_G(w)\le v(T,w),\qquad 0\le w\le b_G,
\]
and
\[
u(t,0)\le 0\le v(t,0),\qquad
u(t,b_G)\le 1\le v(t,b_G),
\qquad 0\le t\le T,
\]
hold. Then $u\le v$ on $(0,T]\times [0,b_G]$.
\end{lemma}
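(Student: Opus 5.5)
We follow the standard doubling-of-variables route for parabolic equations on a bounded domain. The first step is to make the subsolution strict. For $\varepsilon>0$ set $u_\varepsilon(t,w):=u(t,w)-\varepsilon/t$. Since $H^K$ does not depend on the value variable, $u_\varepsilon$ satisfies
\[
-\partial_t u_\varepsilon-H^K(w,\partial_w u_\varepsilon,\partial_{ww}u_\varepsilon)\le -\varepsilon/t^2<0
\]
in the viscosity sense, and $u_\varepsilon\to-\infty$ as $t\downarrow0$. It suffices to prove $u_\varepsilon\le v$ for every $\varepsilon$ and then let $\varepsilon\to0$. Suppose instead that
\[
\delta:=\sup_{(0,T]\times[0,b_G]}(u_\varepsilon-v)>0.
\]
Because $u_\varepsilon-v$ is upper semicontinuous, bounded, and tends to $-\infty$ as $t\downarrow0$, this supremum is attained at some point $(\hat t,\hat w)$ with $\hat t>0$. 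The boundary inequalities force $\hat w\in(0,b_G)$ and $\hat t<T$, since at $w=0$, $w=b_G$ and $t=T$ we have $u_\varepsilon\le u\le v$.

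Next we double variables. Define
\[
\Psi_\alpha(t,w,s,y):=u_\varepsilon(t,w)-v(s,y)-\frac{\alpha}{2}\bigl(|t-s|^2+|w-y|^2\bigr)
\]
on $\bigl((0,T]\times[0,b_G]\bigr)^2$. Let $(t_\alpha,w_\alpha,s_\alpha,y_\alpha)$ be a maximizer. The usual argument gives $\alpha(|t_\alpha-s_\alpha|^2+|w_\alpha-y_\alpha|^2)\to0$, and along a subsequence both points converge to a maximizer of $u_\varepsilon-v$. Here a subtlety appears: the boundary inequalities hold on the whole boundary, and semicontinuity gives $\limsup(u_\varepsilon(t_\alpha,w_\alpha)-v(s_\alpha,y_\alpha))\le(u_\varepsilon-v)(\text{limit})$. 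Hence limits lying on the parabolic boundary would give a value $\le0<\delta$, which is a contradiction. So for $\alpha$ large both points lie in the interior $(0,T)\times(0,b_G)$, bounded away from $w=0$ and from $t=0$.

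We then apply the parabolic Crandall--Ishii lemma (theorem of sums). It yields $(a,p,X)\in\overline{\mathcal P}^{2,+}u_\varepsilon(t_\alpha,w_\alpha)$ and $(a,p,Y)\in\overline{\mathcal P}^{2,-}v(s_\alpha,y_\alpha)$, with the same $a=\alpha(t_\alpha-s_\alpha)$ and $p=\alpha(w_\alpha-y_\alpha)$, and
\[
\begin{pmatrix}X&0\\0&-Y\end{pmatrix}\le 3\alpha\begin{pmatrix}1&-1\\-1&1\end{pmatrix}.
\]
Subtracting the strict sub- and supersolution inequalities gives
\[
\varepsilon/T^2\le H^K(w_\alpha,p,X)-H^K(y_\alpha,p,Y).
\]

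The main obstacle is bounding this right-hand side, because $H^K$ is not globally Lipschitz in $(w,p)$ and $p$ may blow up. We plan to estimate it directly through the controls rather than through Lemma~\ref{prop:FD-clipped-H}. For any $\piv\in\Pi_K$, the matrix inequality tested with the vector $(w_\alpha\xi,y_\alpha\xi)$, where $\xi^2=\piv^\top\Sigmav\piv$, gives
\[
\tfrac12 w_\alpha^2X\xi^2-\tfrac12 y_\alpha^2Y\xi^2\le \tfrac32\alpha|w_\alpha-y_\alpha|^2\xi^2,
\]
and the drift difference is $p(w_\alpha-y_\alpha)\piv^\top\thetavec=\alpha|w_\alpha-y_\alpha|^2\piv^\top\thetavec$. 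Both terms are bounded by $C_K\alpha|w_\alpha-y_\alpha|^2$ uniformly over $\Pi_K$, which is exactly where the compactness of $\Pi_K$ and the linear (geometric) structure of the coefficients enter. Taking the supremum over $\piv$, the right-hand side is at most $C_K\alpha|w_\alpha-y_\alpha|^2\to0$. This contradicts $\varepsilon/T^2>0$, so $u_\varepsilon\le v$, and letting $\varepsilon\to0$ gives $u\le v$ on $(0,T]\times[0,b_G]$.
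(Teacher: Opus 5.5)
Your proof is correct and is essentially the paper's argument: the paper time-reverses, subtracts $h(w)=w/b_G$ so that the lateral data become zero, and then cites the Cauchy--Dirichlet comparison theorem of Crandall, Ishii and Lions (their Theorem~8.2), whose proof is exactly the strictification-and-doubling argument you wrote out. Your bound $H^K(w_\alpha,p,X)-H^K(y_\alpha,p,Y)\le C_K\alpha|w_\alpha-y_\alpha|^2$, uniform over $\piv\in\Pi_K$, is the structural condition (3.14) that the paper obtains from their Example~3.6; because you work on the backward problem and use the boundary inequalities pointwise, you do not need the time reversal or the shift by $h$.
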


\begin{proof}
We reduce the statement to the Cauchy--Dirichlet comparison theorem of
\cite[Theorem~8.2]{crandall1992usersguideviscositysolutions}. Set $s:=T-t$, $h(w):=\frac{w}{b_G}$,
and define
\[
U(s,w):=u(T-s,w)-h(w),
\qquad
V(s,w):=v(T-s,w)-h(w),
\qquad (s,w)\in[0,T]\times[0,b_G].
\]
Then $U\in USC([0,T]\times[0,b_G])$ and
$V\in LSC([0,T]\times[0,b_G])$. The time reversal turns the terminal
condition into an initial condition, and the subtraction of $h$ makes it so
\[
U(0,w)\le F_G(w)-h(w)\le V(0,w),
\qquad 0\le w\le b_G,
\]
and
\[
U(s,0)\le 0\le V(s,0),
\qquad
U(s,b_G)\le 0\le V(s,b_G),
\qquad 0\le s\le T.
\]

We now identify the transformed equation. If $\psi$ is a smooth test
function for $U$, then the corresponding test function for $u$ is
\[
\phi(t,w):=\psi(T-t,w)+h(w).
\]
Thus
\[
\phi_t(t,w)=-\psi_s(T-t,w),
\qquad
\phi_w(t,w)=\psi_w(T-t,w)+h'(w),
\qquad
\phi_{ww}(t,w)=\psi_{ww}(T-t,w)+h''(w).
\]
Since $h'(w)=1/b_G$ and $h''(w)=0$, the viscosity inequalities for
$u$ and $v$ transform into the viscosity inequalities for $U$ and $V$
associated with
\[
z_s+F_h(w,z_w,z_{ww})=0
\qquad\text{in }(0,T)\times(0,b_G),
\]
where
\[
F_h(w,p,q):=
-H^K\left(w,p+\frac1{b_G},q\right).
\]
Hence $U$ is a viscosity subsolution and $V$ is a viscosity supersolution
of this Cauchy--Dirichlet problem with continuous initial data $F_G-h$. This is precisely the problem in \cite[Equation (8.4)]{crandall1992usersguideviscositysolutions}.

It remains to verify the structural assumptions needed for
\cite[Theorem~8.2]{crandall1992usersguideviscositysolutions}. By Lemma \ref{prop:FD-clipped-H} we conclude that the operator
$F_h$ is continuous and proper. Continuity follows from the continuity of
$H^K$. Properness holds because $F_h$ is independent of the value variable
and is degenerate elliptic: if $q_1\le q_2$, then
$H^K(w,p,q_1)\le H^K(w,p,q_2)$, and therefore
$F_h(w,p,q_1)\ge F_h(w,p,q_2)$.

Lastly, the structural condition \cite[Equation~(3.14)]{crandall1992usersguideviscositysolutions}
holds by a direct application of \cite[Example~3.6]{crandall1992usersguideviscositysolutions}.
Indeed, for each $\piv\in\Pi_{K}$ define
\[
b^{\piv}(w):=-w\piv^{\top}\thetavec,\qquad
Z^{\piv}(w):=\frac{w}{\sqrt{2}}\bigl(\piv^{\top}\Sigmav\piv\bigr)^{1/2},
\qquad
f^{\piv}(w):=\frac{w}{b_{G}}\piv^{\top}\thetavec .
\]
Then, identifying $Z^{\piv}(w)$ with a $1\times 1$ matrix,
\[
F_{h}(w,p,q)
=
\inf_{\piv\in\Pi_{K}}
\left\{
b^{\piv}(w)p
-\operatorname{tr}\!\left(Z^{\piv}(w)^{\top}Z^{\piv}(w)q\right)
-f^{\piv}(w)
\right\}.
\]
For each fixed $\piv$, this is a sum of the first-order, second-order, and
continuous spatial terms covered by Example~3.6. Moreover, because $\Pi_{K}$
is compact, the one-sided Lipschitz constants for $b^{\piv}$, the Lipschitz
constants for $Z^{\piv}$, and the moduli of continuity of $f^{\piv}$ can be chosen
uniformly over $\piv\in\Pi_{K}$. The final stability statement in
\cite[Example~3.6]{crandall1992usersguideviscositysolutions} therefore implies
that the infimum over $\piv\in\Pi_{K}$ satisfies the same structural condition.

The Cauchy--Dirichlet comparison theorem
\cite[Theorem~8.2]{crandall1992usersguideviscositysolutions} yields
\[
U(s,w)\le V(s,w)
\qquad\text{for all }(s,w)\in[0,T)\times[0,b_G].
\]
Since $U-V=u(T-s,w)-v(T-s,w)$, returning to the original variables gives
$u\le v$ on $(0,T]\times [0,b_G]$.
\end{proof}

With these preliminaries, we are ready to complete the proof of Proposition~\ref{prop:VDK-viscosity}.

\begin{proof}[Proof of Proposition~\ref{prop:VDK-viscosity}]
We divide the proof into parts

\emph{\underline{Step 1:} (Regularity and Boundary Values)} The value function is bounded because $0\le F_G\le1$ and it is continuous by Lemma~\ref{lem:FD-continuity}. The boundary
values are immediate. If $w=0$, then wealth remains zero and
$V^{D,K}(t,0)=F_G(0)=0$. If $w=b_G$, the household can choose
$\bm\pi\equiv0$ and lock in terminal wealth $b_G$, so
$V^{D,K}(t,b_G)=1$. At $t=T$, no time remains to trade, and therefore
$V^{D,K}(T,w)=F_G(w)$.

\emph{\underline{Step 2:} (Viscosity Solution)} We note as in Lemma~\ref{lem:FD-continuity} that the definition of $V^{D,K}$ readily extends to $[0,T]\times \mathbb{R}$ with the same properties. We note that the present bounded-control Mayer problem falls within the stochastic-control framework of \cite[Section~7]{Touzi2013}. Since $V^{D,K}$ is bounded and $H^K$ is finite and continuous, we may apply \cite[Theorem~7.4]{Touzi2013} to obtain that $V^{D,K}$ is a viscosity solution of \eqref{eq:hjb-rgdt-clipped} on $[0,T)\times \mathbb{R}$ and hence also on $[0,T)\times(0,b_G)$.

\emph{\underline{Step 3:} (Uniqueness)} Let $u$ and $v$ be two bounded continuous viscosity solutions of
\[
-u_t-H^K(w,u_w,u_{ww})=0
\qquad\text{in }(0,T)\times(0,b_G),
\]
with the
stated boundary data. Applying Lemma~\ref{lem:FD-comparison} to $u$ as
subsolution and $v$ as supersolution gives $u\le v$. Reversing the roles of
$u$ and $v$ gives $v\le u$. Hence $u=v$ on $(0,T]\times[0,b_G]$. This equality extends to $0$ by continuity.
\end{proof}

\subsection{Random Deadline Goal}

Define
\[
\mathcal F_R^K(w,z,p,q)
:=
\lambda z-H^K(w,p,q)-\lambda F_R(w),
\]
so that \eqref{eq:hjb-random-clipped} is
\[
\mathcal F_R^K
\bigl(w,V^{R,K}(w),(V^{R,K})'(w),(V^{R,K})''(w)\bigr)=0.
\]
Because the same clipped Hamiltonian $H^K$ appears here as in the
fixed-deadline problem, Lemma~\ref{prop:FD-clipped-H} implies that
$H^K$ is finite and continuous, locally Lipschitz on compact sets, and
nondecreasing in its second-derivative argument. Since $F_R$ is
continuous, $\mathcal F_R^K$ is continuous. Moreover,
$\mathcal F_R^K$ is strictly increasing in the value variable $z$,
with slope $\lambda>0$, and is nonincreasing in the second-derivative
argument $q$. Hence $\mathcal F_R^K$ is proper and degenerate elliptic.

Our first supporting result proves the continuity of the value function $V^{R,K}$. 

\begin{lemma}
\label{lem:RD-continuity}
The value function $V^{R,K}$ is continuous on $[0,\infty)$.
\end{lemma}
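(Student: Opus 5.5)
I will follow the approach of Lemma~\ref{lem:FD-continuity}: approximate the running reward $F_R$ uniformly by Lipschitz functions, show that each approximating value function is continuous, and pass to the uniform limit. Since $F_R$ is continuous, equals $0$ on $(-\infty,0]$ and equals $1$ on $[b_R,\infty)$, it is uniformly continuous. I will take piecewise-linear interpolants $F_R^m$ of $F_R$ on partitions of $[0,b_R]$ with mesh tending to zero, extended constantly outside that interval. Then $F_R^m$ is Lipschitz with some constant $L_m$, and $\|F_R^m-F_R\|_\infty\to0$. Set
\[
V_m(w):=\sup_{\piv\in\mathcal A^{K,{\rm pre}}_{0,\infty}}\E\!\left[\int_0^\infty\lambda e^{-\lambda t}F_R^m\bigl(\widetilde W_t^{0,w;\piv}\bigr)\,dt\right].
\]
As in the fixed-deadline case, the estimate $|V_m(w)-V^{R,K}(w)|\le\|F_R^m-F_R\|_\infty$ holds, because $\int_0^\infty\lambda e^{-\lambda t}\,dt=1$. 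So $V_m\to V^{R,K}$ uniformly on $[0,\infty)$, and it suffices to show that each $V_m$ is continuous.

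The infinite horizon is the only new point relative to Lemma~\ref{lem:FD-continuity}, and the multiplicative structure of wealth handles it directly. For a fixed control $\piv$, the wealth process is linear in its starting point: $\widetilde W_t^{0,w;\piv}=w\,\mathcal E_t(\piv)$, where
\[
\mathcal E_t(\piv):=\exp\Bigl(\int_0^t(\piv_s^\top\thetavec-\tfrac12\piv_s^\top\Sigmav\piv_s)\,ds+\int_0^t\piv_s^\top\sigmavec\,dB_s\Bigr).
\]
Because $\mathcal E(\piv)$ is an exponential martingale times a drift term bounded by $e^{c_K t}$, with $c_K:=\sup_{\piv\in\Pi_K}|\piv^\top\thetavec|$, we get $\E[\mathcal E_t(\piv)]\le e^{c_Kt}$ uniformly over admissible controls. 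This gives
\[
|V_m(w)-V_m(w')|\le L_m|w-w'|\sup_{\piv}\int_0^\infty\lambda e^{-\lambda t}\E[\mathcal E_t(\piv)]\,dt\le L_m|w-w'|\int_0^\infty\lambda e^{-(\lambda-c_K)t}\,dt,
\]
which is finite only if $\lambda>c_K$.

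That restriction is the main obstacle, and I will remove it by truncating in time rather than relying on moments. Fix $\varepsilon>0$ and choose $T_\varepsilon$ with $e^{-\lambda T_\varepsilon}<\varepsilon$. Since $0\le F_R^m\le1$, the part of the integral beyond $T_\varepsilon$ contributes at most $\varepsilon$ for every control and every starting wealth. On $[0,T_\varepsilon]$ the Lipschitz bound above holds with the finite constant $L_m\int_0^{T_\varepsilon}\lambda e^{(c_K-\lambda)t}\,dt$. Combining the two pieces,
\[
|V_m(w)-V_m(w')|\le C_{m,\varepsilon}|w-w'|+2\varepsilon .
\]
Letting $w'\to w$ and then $\varepsilon\to0$ gives continuity of $V_m$. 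Alternatively, one could bypass $F_R^m$ and use $\E|F_R(w\mathcal E_t)-F_R(w'\mathcal E_t)|$ together with the uniform continuity of $F_R$ plus a tightness bound on $\mathcal E_t$ over $t\le T_\varepsilon$; the Lipschitz route is cleaner.

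Finally, $V^{R,K}$ is the uniform limit of the continuous functions $V_m$ on $[0,\infty)$, so it is continuous. This is consistent with the boundary values $V^{R,K}(0)=0$ and $V^{R,K}=1$ on $[b_R,\infty)$.
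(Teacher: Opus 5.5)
Your proof is correct, but you obtain the key estimate differently from the paper.

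The paper's proof has the same outer structure as yours. It truncates the horizon at a finite $N$, where the tail costs at most $e^{-\lambda N}$, and it couples the two wealth processes through the same Brownian motion and control. It never approximates $F_R$, however. It cites the general $L^2$ flow estimate $\sup_{\piv}\E\bigl[\sup_{0\le s\le N}|\widetilde W_s^{0,w;\piv}-\widetilde W_s^{0,w';\piv}|^2\bigr]\le C_{N,K}|w-w'|^2$ from \cite[Theorem~2.4]{Touzi2013}. It then splits on whether this deviation exceeds $\eta$: the modulus of continuity $\omega_R$ of $F_R$ handles the small-deviation event and Chebyshev handles its complement. This gives $\omega_R(\eta)+2C_{N,K}\eta^{-2}|w-w'|^2$, which is essentially the ``alternative'' you mention at the end of your third paragraph.

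Your route makes two changes. You add an outer uniform approximation $F_R^m\to F_R$, mirroring Lemma~\ref{lem:FD-continuity}. You also replace the cited flow estimate with the explicit scaling $\widetilde W^{0,w;\piv}=w\,\mathcal E(\piv)$ and the first-moment bound $\E[\mathcal E_t(\piv)]\le e^{c_Kt}$. That bound uses only the fact that the stochastic exponential of $\int\piv^\top\sigmavec\,dB$ is a nonnegative supermartingale.

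The trade-offs are as follows. Yours is fully self-contained and gives an explicit Lipschitz-plus-$\varepsilon$ modulus for each $V_m$. The paper's avoids the extra approximation layer, and its flow estimate does not rely on wealth being linear in its initial value, so it would survive non-multiplicative dynamics. Both arguments need the time truncation for the same reason: the coupling constant may grow exponentially with the horizon. Your observation about needing $\lambda>c_K$ makes this explicit.
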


\begin{proof}
Since $F_R$ is continuous and
constant outside $[0,b_R]$, it is uniformly continuous on $\mathbb R$.
Let $\omega_R$ be a modulus of continuity for $F_R$. Fix $N>0$ and define the truncated value
\[
V_N^{R,K}(w)
:=
\sup_{\piv\in\mathcal A^{K,{\rm pre}}_{0,\infty}}
\mathbb E\left[
\int_0^N
\lambda e^{-\lambda s}
F_R\!\left(\widetilde W_s^{0,w;\piv}\right)\,ds
\right].
\]
The tail is uniformly bounded by
\[
0\le V^{R,K}(w)-V_N^{R,K}(w)
\le
\int_N^\infty \lambda e^{-\lambda s}\,ds
=
e^{-\lambda N}.
\]
Hence, for $w,w'\geq 0$,
\[
|V^{R,K}(w)-V^{R,K}(w')|
\le
|V_N^{R,K}(w)-V_N^{R,K}(w')|
+
2e^{-\lambda N}.
\]

We now prove continuity of $V_N^{R,K}$. Couple the wealth processes
started from $w$ and $w'$ using the same Brownian motion and the same
control $\piv$. Since $\Pi_K$ is bounded, the coefficients
\[
b(x,\piv)=x\piv^\top\thetavec,
\qquad
\sigma_W(x,\piv)=x\piv^\top\sigmavec
\]
are globally Lipschitz in $x$, uniformly over $\piv\in\Pi_K$, and have
linear growth. Therefore the finite-horizon flow estimate of
\cite[Theorem~2.4]{Touzi2013} gives, for each $N<\infty$, a constant
$C_{N,K}<\infty$ such that
\[
\sup_{\piv\in\mathcal A^{K,{\rm pre}}_{0,\infty}}
\mathbb E\left[
\sup_{0\le s\le N}
\left|
\widetilde W_s^{0,w;\piv}
-
\widetilde W_s^{0,w';\piv}
\right|^2
\right]
\le
C_{N,K}|w-w'|^2 .
\]
Consequently, for every $\eta>0$,
\[
\begin{aligned}
|V_N^{R,K}(w)-V_N^{R,K}(w')|
&\le
\sup_{\piv}
\mathbb E\left[
\int_0^N
\lambda e^{-\lambda s}
\left|
F_R\!\left(\widetilde W_s^{0,w;\piv}\right)
-
F_R\!\left(\widetilde W_s^{0,w';\piv}\right)
\right|ds
\right] \\
&\le
\omega_R(\eta)
+
2\sup_{\piv}
\mathbb P\left(
\sup_{0\le s\le N}
\left|
\widetilde W_s^{0,w;\piv}
-
\widetilde W_s^{0,w';\piv}
\right|
>\eta
\right) \\
&\le
\omega_R(\eta)
+
\frac{2C_{N,K}}{\eta^2}|w-w'|^2 .
\end{aligned}
\]
Letting $w'\to w$ gives
\[
\limsup_{w'\to w}|V_N^{R,K}(w)-V_N^{R,K}(w')|
\le
\omega_R(\eta).
\]
Then $\eta\downarrow0$ proves continuity of $V_N^{R,K}$. Finally,
\[
\limsup_{w'\to w}|V^{R,K}(w)-V^{R,K}(w')|
\le
2e^{-\lambda N},
\]
and letting $N\to\infty$ proves continuity of $V^{R,K}$.
\end{proof}

Next, we prove the comparison principle for viscosity solutions of this elliptic problem.

\begin{lemma}[Comparison]
\label{lem:RD-comparison}
Let $\Omega_R:=(0,b_R)$. Suppose $u\in USC(\overline\Omega_R)$ and
$v\in LSC(\overline\Omega_R)$ are bounded, $u$ is a viscosity subsolution
and $v$ is a viscosity supersolution of
\[
\mathcal F_R^K(w,u,u',u'')=0
\qquad\text{in }\Omega_R,
\]
and
\[
u(0)\le v(0),
\qquad
u(b_R)\le v(b_R).
\]
Then $u\le v$ on $\overline\Omega_R$.
\end{lemma}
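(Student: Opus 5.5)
We will argue by contradiction using the classical doubling-of-variables method for degenerate elliptic equations on a bounded interval. This is the stationary analogue of the argument behind Lemma~\ref{lem:FD-comparison}. We could cite \cite[Theorem~3.3]{crandall1992usersguideviscositysolutions} directly, since $\mathcal F_R^K$ is proper with strict monotonicity in $z$. We will instead spell out the structure so that the role of $\lambda>0$ is transparent.

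Suppose $M:=\max_{\overline\Omega_R}(u-v)>0$. The maximum is attained because $u-v$ is upper semicontinuous on the compact set $[0,b_R]$. By the boundary inequalities, any maximizer lies in the open interval $(0,b_R)$.

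For $\varepsilon>0$ we consider
\[
\Phi_\varepsilon(x,y):=u(x)-v(y)-\frac{(x-y)^2}{2\varepsilon}
\]
on $[0,b_R]^2$, and let $(x_\varepsilon,y_\varepsilon)$ be a maximizer. The standard estimates (\cite[Lemma~3.1]{crandall1992usersguideviscositysolutions}) give
\[
(x_\varepsilon-y_\varepsilon)^2/\varepsilon\to0,\qquad x_\varepsilon,y_\varepsilon\to\hat x,
\]
where $\hat x$ maximizes $u-v$, and also $u(x_\varepsilon)-v(y_\varepsilon)\to M$. Using upper semicontinuity of $u-v$ at the endpoints and $u(0)-v(0)\le0<M$ (likewise at $b_R$), we get $\hat x\in(0,b_R)$. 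Hence $x_\varepsilon,y_\varepsilon$ are interior for small $\varepsilon$.

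The Crandall--Ishii lemma (\cite[Theorem~3.2]{crandall1992usersguideviscositysolutions}) then yields $p_\varepsilon=(x_\varepsilon-y_\varepsilon)/\varepsilon$ and $X_\varepsilon,Y_\varepsilon$ such that
\[
\begin{pmatrix}X_\varepsilon&0\\0&-Y_\varepsilon\end{pmatrix}\le\frac{3}{\varepsilon}\begin{pmatrix}1&-1\\-1&1\end{pmatrix},
\]
together with the viscosity inequalities
\[
\lambda u(x_\varepsilon)-H^K(x_\varepsilon,p_\varepsilon,X_\varepsilon)-\lambda F_R(x_\varepsilon)\le0,\qquad
\lambda v(y_\varepsilon)-H^K(y_\varepsilon,p_\varepsilon,Y_\varepsilon)-\lambda F_R(y_\varepsilon)\ge0.
\]
Subtracting the two inequalities gives
\[
\lambda\bigl(u(x_\varepsilon)-v(y_\varepsilon)\bigr)\le H^K(x_\varepsilon,p_\varepsilon,X_\varepsilon)-H^K(y_\varepsilon,p_\varepsilon,Y_\varepsilon)+\lambda\bigl(F_R(x_\varepsilon)-F_R(y_\varepsilon)\bigr).
\]

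The key step, and the main obstacle, is to show that the Hamiltonian difference is bounded by $\omega(|x_\varepsilon-y_\varepsilon|^2/\varepsilon+|x_\varepsilon-y_\varepsilon|)$. This is condition (3.14) of \cite{crandall1992usersguideviscositysolutions}. We will argue exactly as in the proof of Lemma~\ref{lem:FD-comparison}, writing $H^K$ as a supremum over $\piv\in\Pi_K$. The drift is $w\piv^\top\thetavec$ and the diffusion is $Z^{\piv}(w)=\frac{w}{\sqrt2}(\piv^\top\Sigmav\piv)^{1/2}$, both Lipschitz in $w$ uniformly over the compact set $\Pi_K$.

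For each fixed $\piv$, the first-order term differs by $|\piv^\top\thetavec|\,|x_\varepsilon-y_\varepsilon||p_\varepsilon|\le C|x_\varepsilon-y_\varepsilon|^2/\varepsilon$. For the second-order term, test the matrix inequality with the vector $(Z^{\piv}(x_\varepsilon),Z^{\piv}(y_\varepsilon))$. This gives
\[
Z^{\piv}(x_\varepsilon)^2X_\varepsilon-Z^{\piv}(y_\varepsilon)^2Y_\varepsilon\le\frac3\varepsilon\bigl(Z^{\piv}(x_\varepsilon)-Z^{\piv}(y_\varepsilon)\bigr)^2\le C|x_\varepsilon-y_\varepsilon|^2/\varepsilon,
\]
with $C$ independent of $\piv$. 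Taking the supremum over $\piv$ preserves the bound.

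Finally, $F_R$ is uniformly continuous, so $F_R(x_\varepsilon)-F_R(y_\varepsilon)\to0$. Letting $\varepsilon\to0$ in the subtracted inequality therefore gives $\lambda M\le0$, contradicting $M>0$. Hence $u\le v$ on $\overline\Omega_R$. The strict properness supplied by $\lambda>0$ is what replaces the time variable used in the parabolic case.
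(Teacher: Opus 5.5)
Your proposal is correct and is essentially the paper's argument. The paper verifies the hypotheses of \cite[Theorem~3.3]{crandall1992usersguideviscositysolutions}: properness, condition (3.13) with $\gamma=\lambda$, and (3.14) via Example~3.6 with constants uniform over the compact set $\Pi_K$. You instead unpack the doubling-of-variables proof behind that theorem and check the structure condition by hand, which is exactly what Example~3.6 encodes; the only cosmetic point is that $x_\varepsilon,y_\varepsilon\to\hat x$ holds along subsequences, which suffices because every limit point satisfies $u(\hat x)-v(\hat x)=M>0$ and is therefore interior.
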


\begin{proof}
We verify the hypotheses of \cite[Theorem~3.3]{crandall1992usersguideviscositysolutions}.
The domain $\Omega_R$ is bounded, and as noted above $\mathcal F_R^K$ is continuous and
proper. Moreover, for $r>s$,
\[
\mathcal F_R^K(w,r,p,q)-\mathcal F_R^K(w,s,p,q)
=
\lambda(r-s),
\]
so the condition in \cite[Equation~(3.13)]{crandall1992usersguideviscositysolutions}
holds with $\gamma=\lambda$.

It remains only to verify the structural condition
\cite[Equation~(3.14)]{crandall1992usersguideviscositysolutions}. We write
\[
\mathcal F_R^K(w,r,p,q)
=
\lambda r
+
\inf_{\piv\in\Pi_K}
\left\{
-wp\,\piv^{\top}\thetavec
-
\frac12 w^2q\,\piv^{\top}\Sigmav\piv
-
\lambda F_R(w)
\right\}.
\]
For each $\piv\in\Pi_K$, set
\[
b^{\piv}(w):=-w\piv^{\top}\thetavec,
\qquad
Z^{\piv}(w):=\frac{w}{\sqrt{2}}
\bigl(\piv^{\top}\Sigmav\piv\bigr)^{1/2}.
\]
Then the controlled part has the form
\[
b^{\piv}(w)p
-
\operatorname{tr}
\left(
Z^{\piv}(w)^{\top}Z^{\piv}(w)q
\right)
-
\lambda F_R(w).
\]
The first-order coefficients $b^{\piv}$ are one-sided Lipschitz in $w$ uniformly
over $\piv\in\Pi_K$, the diffusion factors $Z^{\piv}$ are Lipschitz in $w$
uniformly over $\piv\in\Pi_K$, and the spatial term $\lambda F_R(w)$ is
continuous on the compact interval $[0,b_R]$. Therefore
\cite[Example~3.6]{crandall1992usersguideviscositysolutions} gives the
structural condition with a common modulus, and this condition is preserved
under the infimum over the compact control set $\Pi_K$.

The comparison theorem now yields $u\le v$ in $\Omega_R$. The boundary
inequality gives the result on $\overline\Omega_R$.
\end{proof}

We now turn to the proof of Proposition~\ref{prop:VRK-viscosity}.

\begin{proof}[Proof of Proposition~\ref{prop:VRK-viscosity}]
We once again divide the proof into parts

\emph{\underline{Step 1:} (Regularity and Boundary Values)} The value is bounded between $0$ and $1$ since $0\leq F_R\leq 1$ and continuity follows from Lemma~\ref{lem:RD-continuity}. The boundary values are also immediate as before with $V^{D,K}$. If $w=0$, wealth remains zero and
$F_R(0)=0$, so $V^{R,K}(0)=0$. If $w\ge b_R$, the household can choose
$\piv\equiv0$, keeping wealth at least $b_R$ forever, and hence
$V^{R,K}(w)=1$.

\emph{\underline{Step 2:} (Viscosity Solution)} To see that the value function $V^{R,K}$ is a viscosity solution of
\[
\lambda u(w)
-
H^K\bigl(w,u'(w),u''(w)\bigr)
-
\lambda F_R(w)
=0
\qquad\text{on }(0,b_R)
\]
we view the random-deadline problem as an infinite-horizon discounted
stochastic control problem, in the
framework of \cite[Section~4.3]{Pham2009}. To match the Euclidean-state formulation in \cite[Section~4.3]{Pham2009},
we use the harmless extension of the coefficients and payoff to all
$w\in\mathbb R$. The state dynamics are
autonomous:
\[
d\widetilde W_t
=
\widetilde W_t
\left(
\piv_t^{\top}\thetavec\,dt
+
\piv_t^{\top}\sigmavec\,dB_t
\right),
\qquad
\piv_t\in\Pi_K
\]
and the control set is compact. Starting from $w>0$, the solution remains strictly positive. The coefficients are continuous, locally
Lipschitz in the state uniformly over $\piv\in\Pi_K$, and satisfy a
linear-growth bound. The running reward is
\[
f(w,\piv):=\lambda F_R(w),
\]
which is bounded and continuous, and the discount rate is
$\lambda>0$.

In the notation of \cite{Pham2009}, the Hamiltonian is
\[
\widehat H^K(w,p,q)
:=
\sup_{\piv\in\Pi_K}
\left\{
wp\,\piv^{\top}\thetavec
+
\frac12 w^2q\,\piv^{\top}\Sigmav\piv
+
\lambda F_R(w)
\right\}
=
H^K(w,p,q)+\lambda F_R(w).
\]
Since $\Pi_K$ is compact, $\widehat H^K$ is finite and continuous. Hence
we are in the regular compact-control case\footnote{We note that the infinite-horizon viscosity result is stated under the condition that (in our notation) $\lambda$ is ``sufficiently large.''
That condition is used in \cite[Proposition 4.3.1]{Pham2009} to ensure that the infinite-horizon payoff is
well-defined in the unbounded reward setting, and to ensure that constant controls are admissible (see the invocation of \cite[Remark 3.2.2]{Pham2009} in the proof). In the present problem these requirements hold for
every $\lambda>0$, because the running reward is uniformly bounded over the admissible controls $\piv$ and so we can dispense with this requirement.} of
\cite[Theorem~4.3.1 and Remark~4.3.4]{Pham2009} and the variational
inequality reduces to the ordinary HJB equation. The argument in \cite[Section~4.3 and Theorem~4.3.1]{Pham2009} therefore applies in
the present setting and gives that
$V^{R,K}$ is a viscosity solution of
\[
\lambda V^{R,K}(w)
-
\widehat H^K\bigl(w,(V^{R,K})'(w),(V^{R,K})''(w)\bigr)
=0
\]
on $\mathbb{R}$ and hence, also on $(0,b_R)$. Substituting the expression for
$\widehat H^K$ gives
\[
\lambda V^{R,K}(w)
-
H^K\bigl(w,(V^{R,K})'(w),(V^{R,K})''(w)\bigr)
-
\lambda F_R(w)
=0
\]
in the viscosity sense on $(0,b_R)$ as required.

\emph{\underline{Step 3:} (Uniqueness)} Let $u$ and $v$ be two bounded continuous
viscosity solutions of \eqref{eq:hjb-random-clipped} on $(0,b_R)$ with
boundary values $u(0)=v(0)=0$ and $u(b_R)=v(b_R)=1$. Applying
Lemma~\ref{lem:RD-comparison} first to $u$ as subsolution and $v$ as
supersolution gives $u\le v$. Reversing the roles gives $v\le u$. Hence
$u=v$ on $[0,b_R]$.
\end{proof}

\subsection{Dual-Goal Problem}
\label{app:dual-viscosity-proof}

Throughout this subsection we use Proposition~\ref{prop:reduction} and
work with the reduced ex ante formulation of the dual-goal value.
Thus, for $(t,w)\in[0,T)\times[0,\infty)$,
\begin{align}
V^{R,D,K}(t,w)= \widetilde V^{R,D,K}(t,w)
=
\sup_{{\piv}\in\mathcal A^{K,{\rm pre}}_{t,T}}
\E_{t,w}\Bigg[
&\int_t^T
\lambda e^{-\lambda(s-t)}
\mathcal J^K\!\left(s,\widetilde W_s^{t,w;{\piv}}\right)\,ds
\nonumber\\
&\qquad\qquad
+
e^{-\lambda(T-t)}
\mathcal T^K\!\left(\widetilde W_T^{t,w;{\piv}}\right)
\Bigg],
\label{eq:app-dual-reduced-value}
\end{align}
where $\widetilde W^{t,w;{\piv}}$ is the pre-payment wealth process
\[
d\widetilde W_s^{t,w;{\piv}}
=
\widetilde W_s^{t,w;{\piv}}
\left(
{\piv}_s^\top\thetavec\,ds+{\piv}_s^\top\sigmavec\,dB_s
\right),
\qquad
\widetilde W_t^{t,w;{\piv}}=w .
\]
This is a standard finite-horizon stochastic control problem with discount
rate $\lambda$, running reward $\lambda\mathcal J^K$, and terminal reward
$\mathcal T^K$.

Recall that $b:=b_R+b_G$. We first record the regularity of the two payoff
operators. For $w\ge0$, these operators are
\[
\mathcal J^K(t,w)
=
\E_R\!\left[
\alpha_R\mathbf 1_{\{w\ge R\}}
+
\alpha_D V^{D,K}\!\left(t,w-R\mathbf 1_{\{w\ge R\}}\right)
\right],
\]
and
\[
\mathcal T^K(w)
=
\E_G\!\left[
\alpha_D\mathbf 1_{\{w\ge G\}}
+
\alpha_R V^{R,K}\!\left(w-G\mathbf 1_{\{w\ge G\}}\right)
\right].
\]
For the proof below, we extend them harmlessly to all real $w$ by setting
\[
\mathcal J^K(t,w)=\mathcal T^K(w)=0\quad\text{for }w\le0.
\]
For $w\ge b$, the formulas above already give
\[
\mathcal J^K(t,w)=\mathcal T^K(w)=1,
\]
because $R\le b_R$, $G\le b_G$, $V^{D,K}(t,x)=1$ for
$x\ge b_G$, and $V^{R,K}(x)=1$ for $x\ge b_R$.

\begin{lemma}
\label{lem:dual-operators-continuity}
The functions $\mathcal J^K$ and $\mathcal T^K$ are bounded and continuous: $\mathcal J^K\in C([0,T]\times\mathbb R), \mathcal T^K\in C(\mathbb R)$.
Moreover, both functions take values in $[0,1]$, and both are constant outside
the compact wealth interval $[0,b]$.
\end{lemma}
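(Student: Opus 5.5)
We prove the lemma by handling the indicator jumps separately. The main point is that the all-or-nothing payment indicators are discontinuous in $w$, yet after integrating against the continuous laws $F_R$ and $F_G$ the resulting expressions are continuous. We work from the integral representations given after \eqref{eq:rigorous-JK-def} and \eqref{eq:rigorous-TK-def}. As ingredients we use the continuity and boundedness of $V^{D,K}$ on $[0,T]\times[0,\infty)$ (Proposition~\ref{prop:VDK-viscosity}, Lemma~\ref{lem:FD-continuity}) and of $V^{R,K}$ on $[0,\infty)$ (Proposition~\ref{prop:VRK-viscosity}). We also use the boundary facts $V^{D,K}(t,0)=0$, $V^{R,K}(0)=0$, $V^{D,K}=1$ for $w\ge b_G$ and $V^{R,K}=1$ for $w\ge b_R$.

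\emph{Bounds and constancy.} Since $\alpha_R+\alpha_D=1$ and every quantity inside the expectations lies in $[0,1]$, both operators take values in $[0,1]$. For $w\ge b$ we have $w\ge R$ a.s. and $w-R\ge b_G$, so $\mathcal J^K=\alpha_R+\alpha_D=1$. Symmetrically, $\mathcal T^K=1$ there. For $w\le 0$ the operators are defined to be $0$. So the only remaining work is continuity on $[0,T]\times[0,b]$, together with matching at $w=0$. At $w=0$ the formula gives $\mathcal J^K(t,0)=\alpha_R F_R(0)+\alpha_D V^{D,K}(t,0)=0$, because $F_R(0)=0$ and the atom-free law of $R$ puts no mass on $\{0\}$. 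The same computation gives $\mathcal T^K(0)=0$, so the extension by $0$ joins continuously.

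\emph{Continuity of $\mathcal J^K$.} We decompose
\[
\mathcal J^K(t,w)=\alpha_R F_R(w)+\alpha_D\,\E_R\!\left[V^{D,K}(t,w)\mathbf 1_{\{R>w\}}+V^{D,K}(t,w-R)\mathbf 1_{\{R\le w\}}\right].
\]
The first term is continuous because $F_R$ is. For the second term, fix $(t_n,w_n)\to(t,w)$. On the event $\{R\neq w\}$, which has probability one since $F_R$ is continuous, the indicators $\mathbf 1_{\{R\le w_n\}}$ eventually equal $\mathbf 1_{\{R\le w\}}$. On that same event, joint continuity of $V^{D,K}$ gives $V^{D,K}(t_n,w_n)\to V^{D,K}(t,w)$ and $V^{D,K}(t_n,w_n-R)\to V^{D,K}(t,w-R)$. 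Where $w_n-R<0$ we extend $V^{D,K}$ by $0$, which is continuous because $V^{D,K}(\cdot,0)=0$. The integrand is bounded by $1$, so dominated convergence gives continuity in $(t,w)$. Note that the atom at the payment threshold is harmless for a second reason as well: at $R=w$ the two branches give $V^{D,K}(t,w)$ versus $V^{D,K}(t,0)=0$, which differ, but this occurs only on a null set.

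\emph{Continuity of $\mathcal T^K$, and the main obstacle.} The argument for $\mathcal T^K$ is identical, using continuity of $F_G$ and of $V^{R,K}$ and the value $V^{R,K}(0)=0$. The step requiring the most care is this pointwise-a.s. convergence across the indicator discontinuity. It relies on two facts: the distribution functions have no atoms, and the single-goal values are continuous up to the boundary $w=0$ and jointly in $t$. This is why we use the joint continuity from Lemma~\ref{lem:FD-continuity} rather than only continuity on the open domain.
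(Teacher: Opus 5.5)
Your proof is correct and takes essentially the same route as the paper: the bounded integrand converges almost surely off the null event $\{R=w\}$ (resp.\ $\{G=w\}$), using continuity of $F_R,F_G$ and of $V^{D,K}$, $V^{R,K}$, and dominated convergence then gives continuity. You also check explicitly that $\mathcal J^K(t,0)=\mathcal T^K(0)=0$, so the zero extension to $w\le 0$ joins continuously; the paper leaves this implicit when it says it suffices to treat $w\in[0,b]$.
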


\begin{proof}
The boundedness is immediate from $0\le V^{D,K},V^{R,K}\le1$ and
$\alpha_R+\alpha_D=1$.

We prove continuity of $\mathcal J^K$; the proof for $\mathcal T^K$ is the
same. Let $(t_n,w_n)\to(t,w)$ in $[0,T]\times\mathbb R$. It is enough to
consider $w\in[0,b]$, since $\mathcal J^K$ is constant for $w\le0$ and
for $w\ge b$. By Proposition~\ref{prop:VDK-viscosity},
$V^{D,K}$ is continuous on $[0,T]\times[0,\infty)$, and it is extended
constantly as $1$ for wealth at least $b_G$.

For every realization $r$ such that $r\ne w$, we have
\[
\mathbf 1_{\{w_n\ge r\}}\to \mathbf 1_{\{w\ge r\}},
\qquad
w_n-r\mathbf 1_{\{w_n\ge r\}}
\to
w-r\mathbf 1_{\{w\ge r\}}.
\]
Because $F_R$ is continuous, $\PP(R=w)=0$. Hence the integrand in the
definition of $\mathcal J^K(t_n,w_n)$ converges almost surely to the integrand
in the definition of $\mathcal J^K(t,w)$. The integrands are bounded by $1$,
so dominated convergence gives
\[
\mathcal J^K(t_n,w_n)\to\mathcal J^K(t,w).
\]
This proves $\mathcal J^K\in C([0,T]\times\mathbb R)$. The proof for
$\mathcal T^K$ uses Proposition~\ref{prop:VRK-viscosity} and the continuity
of $F_G$, and is identical.
\end{proof}

\begin{lemma}
\label{lem:dual-value-continuity}
The value function $\widetilde V^{R,D,K}$ defined by
\eqref{eq:tilde.VRDK} is continuous on
$[0,T]\times[0,\infty)$.
\end{lemma}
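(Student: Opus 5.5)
The plan is to mimic the proofs of Lemma~\ref{lem:FD-continuity} and Lemma~\ref{lem:RD-continuity}: approximate the non-Lipschitz payoffs by Lipschitz ones uniformly, invoke standard finite-horizon regularity for the approximating problems, and pass to the uniform limit. By Lemma~\ref{lem:dual-operators-continuity}, $\mathcal J^K$ is continuous on $[0,T]\times\mathbb R$ and $\mathcal T^K$ is continuous on $\mathbb R$. Both are constant outside $[0,b]$, so $\mathcal J^K$ is uniformly continuous on the compact set $[0,T]\times[-1,b+1]$, hence on all of $[0,T]\times\mathbb R$, and $\mathcal T^K$ is uniformly continuous on $\mathbb R$. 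I would choose Lipschitz approximations $\mathcal J_m$ and $\mathcal T_m$ with values in $[0,1]$, equal to $0$ for $w\le0$ and to $1$ for $w\ge b$, such that $\|\mathcal J_m-\mathcal J^K\|_\infty+\|\mathcal T_m-\mathcal T^K\|_\infty\to0$. For $\mathcal J_m$ one can take piecewise-linear interpolation on a product grid in $(t,w)$, which is Lipschitz in both variables.

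Let $V_m$ denote the value \eqref{eq:tilde.VRDK} with $(\mathcal J^K,\mathcal T^K)$ replaced by $(\mathcal J_m,\mathcal T_m)$. Exactly as in Lemma~\ref{lem:FD-continuity}, for every $(t,w)$ we have
\[
|V_m(t,w)-\widetilde V^{R,D,K}(t,w)|\le \|\mathcal J_m-\mathcal J^K\|_\infty\int_t^T\lambda e^{-\lambda(s-t)}ds+\|\mathcal T_m-\mathcal T^K\|_\infty,
\]
since both suprema run over the same controls and the weights $\int_t^T\lambda e^{-\lambda(s-t)}ds+e^{-\lambda(T-t)}=1$. So $V_m\to\widetilde V^{R,D,K}$ uniformly, and it suffices to show each $V_m$ is continuous.

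For fixed $m$, the reduced problem is a finite-horizon Bolza problem with constant discount $\lambda$, bounded Lipschitz running reward $\lambda\mathcal J_m(s,\cdot)$, bounded Lipschitz terminal reward, compact control set $\Pi_K$, and state coefficients that are Lipschitz with linear growth uniformly in $\piv$. Continuity in $w$, uniformly in $t$, follows from the flow estimate \cite[Theorem~2.4]{Touzi2013} applied to the synchronous coupling used in Lemma~\ref{lem:RD-continuity}. That argument gives $|V_m(t,w)-V_m(t,w')|\le C_m\sqrt{C_{T,K}}\,|w-w'|$.

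The main obstacle is continuity in time, because the integration interval $[t,T]$ and the starting time both move. I would handle it with the dynamic programming principle, or more simply with a direct argument. For $t<t'$, take a control on $[t,T]$ and restrict it to $[t',T]$ after conditioning, or conversely extend a control on $[t',T]$ by $\piv=0$ on $[t,t']$. The running reward over $[t,t']$ contributes at most $\lambda(t'-t)$. The discount factors differ by $O(t'-t)$. The wealth displacement satisfies $\E|\widetilde W_{t'}-w|^2\le C w^2(t'-t)$, and the Lipschitz bound in $w$ converts this into an $O(\sqrt{t'-t})$ error. This is the $1/2$-Hölder time regularity of \cite[Proposition~3.7]{Touzi2013}, which I would cite directly, as was done for $V^{D,K}$, after absorbing the running reward into a Mayer form by augmenting the state. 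Then $V_m\in C([0,T]\times[0,\infty))$, and the uniform limit $\widetilde V^{R,D,K}$ is continuous.
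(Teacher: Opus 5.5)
Your proposal is correct and follows essentially the same route as the paper. Both approximate $\mathcal J^K$ and $\mathcal T^K$ uniformly by bounded Lipschitz functions, bound $|V_m-\widetilde V^{R,D,K}|$ by $\|\mathcal J_m-\mathcal J^K\|_\infty+\|\mathcal T_m-\mathcal T^K\|_\infty$, obtain continuity of each $V_m$ from \cite[Proposition~3.7]{Touzi2013} after putting the discounted running reward in Mayer form, and pass to the uniform limit. The one step you leave implicit is the augmentation: the paper uses states $dY_s=\chi(Z_s)\lambda\mathcal J_m(s,\widetilde W_s)\,ds$, $dZ_s=-\lambda Z_s\,ds$ with a bounded Lipschitz cutoff $\chi$, so that the augmented coefficients and the payoff $y+\chi(z)\mathcal T_m(w)$ are globally Lipschitz. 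Your direct flow-estimate and time-shift arguments are a valid but unneeded fallback.
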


\begin{proof}
By Lemma~\ref{lem:dual-operators-continuity}, $\mathcal J^K$ and
$\mathcal T^K$ are bounded and uniformly continuous after the harmless
constant extensions described above. Therefore there exist bounded Lipschitz
functions
\[
\mathcal J_m:[0,T]\times\mathbb R\to[0,1],
\qquad
\mathcal T_m:\mathbb R\to[0,1],
\]
such that
\[
\|\mathcal J_m-\mathcal J^K\|_\infty
+
\|\mathcal T_m-\mathcal T^K\|_\infty
\longrightarrow0 .
\]
For each $m$, define the auxiliary value function
\[
V_m(t,w)
:=
\sup_{{\piv}\in\mathcal A^{K,{\rm pre}}_{t,T}}
\E_{t,w}\!\left[
\int_t^T
\lambda e^{-\lambda(s-t)}
\mathcal J_m\!\left(s,\widetilde W_s^{t,w;{\piv}}\right)\,ds
+
e^{-\lambda(T-t)}
\mathcal T_m\!\left(\widetilde W_T^{t,w;{\piv}}\right)
\right].
\]
The coefficients
\[
b(w,{\piv})=w{\piv}^\top\thetavec,
\qquad
\sigma_W(w,{\piv})=w{\piv}^\top\sigmavec
\]
are globally Lipschitz in $w$, uniformly over ${\piv}\in\Pi_K$, and satisfy
a uniform linear-growth bound.

We now justify the use of \cite[Proposition~3.7]{Touzi2013}. That proposition
is stated after reducing, for simplicity, to the Mayer case $f=k\equiv0$.
Since the horizon is finite, we may use the standard augmentation described in
\cite[Remark~3.2(iii)]{Touzi2013}. Let
$\chi:\mathbb R\to\mathbb R$ be a bounded Lipschitz function such that
$\chi(z)=z$ for $z\in[e^{-\lambda T},1]$. For the auxiliary problem define
additional state variables $Y$ and $Z$ by
\[
dY_s
=
\chi(Z_s)\,\lambda\mathcal J_m\!\left(s,\widetilde W_s^{t,w;{\piv}}\right)\,ds,
\qquad
Y_t=0,
\]
and
\[
dZ_s=-\lambda Z_s\,ds,
\qquad
Z_t=1.
\]
Then $Z_s=e^{-\lambda(s-t)}\in[e^{-\lambda T},1]$, so
$\chi(Z_s)=Z_s$ on every reachable path starting from $Z_t=1$. Hence the
payoff in the definition of $V_m$ can be written in Mayer form as
\[
Y_T+\chi(Z_T)\mathcal T_m\!\left(\widetilde W_T^{t,w;{\piv}}\right).
\]
The augmented coefficients are Lipschitz in the state and have linear growth,
uniformly over the bounded control set $\Pi_K$. Indeed, $\chi$,
$\mathcal J_m$, and $\mathcal T_m$ are bounded and Lipschitz. The augmented
terminal payoff
\[
(w,y,z)\longmapsto y+\chi(z)\mathcal T_m(w)
\]
is also Lipschitz. Therefore \cite[Proposition~3.7]{Touzi2013} applies to the
augmented Mayer problem, and each $V_m$ is continuous on
$[0,T]\times\mathbb R$.

For every $(t,w)$,
\begin{align*}
\left|V_m(t,w)-\widetilde V^{R,D,K}(t,w)\right|
&\le
\sup_{{\piv}}
\E_{t,w}\!\left[
\int_t^T
\lambda e^{-\lambda(s-t)}
\left|
\mathcal J_m-\mathcal J^K
\right|
\!\left(s,\widetilde W_s^{t,w;{\piv}}\right)\,ds
\right]
\\
&\quad
+
\sup_{{\piv}}
\E_{t,w}\!\left[
e^{-\lambda(T-t)}
\left|
\mathcal T_m-\mathcal T^K
\right|
\!\left(\widetilde W_T^{t,w;{\piv}}\right)
\right]
\\
&\le
\|\mathcal J_m-\mathcal J^K\|_\infty
+
\|\mathcal T_m-\mathcal T^K\|_\infty .
\end{align*}
Hence $V_m\to \widetilde V^{R,D,K}$ uniformly on $[0,T]\times[0,\infty)$. Since
$\widetilde V^{R,D,K}$ is the uniform limit of continuous functions, it is continuous.
\end{proof}

Define the dual-goal operator
\[
\mathcal F_{RD}^K(t,w,r,a,p,q)
:=
\lambda r-a-H^K(w,p,q)-\lambda\mathcal J^K(t,w).
\]
Thus $\mathcal F_{RD}^K=0$ is equivalent to
\[
u_t(t,w)
+
H^K\bigl(w,u_w(t,w),u_{ww}(t,w)\bigr)
+
\lambda\bigl(\mathcal J^K(t,w)-u(t,w)\bigr)
=0.
\]
With the sign convention used in the single-goal proofs, a viscosity subsolution
of $\mathcal F_{RD}^K=0$ satisfies $\mathcal F_{RD}^K\le0$, and a viscosity
supersolution satisfies $\mathcal F_{RD}^K\ge0$.

\begin{lemma}[Comparison]
\label{lem:dual-comparison}
Let $Q:=(0,T)\times(0,b)$. Let $u\in USC(\overline Q)$ and
$v\in LSC(\overline Q)$ be bounded. Suppose that $u$ is a viscosity
subsolution and $v$ is a viscosity supersolution of
\[
\mathcal F_{RD}^K(t,w,z,z_t,z_w,z_{ww})=0
\qquad\text{in }Q.
\]
Assume that the parabolic boundary inequalities
\[
u(T,w)\le \mathcal T^K(w)\le v(T,w),
\qquad 0\le w\le b,
\]
and
\[
u(t,0)\le0\le v(t,0),
\qquad
u(t,b)\le1\le v(t,b),
\qquad 0\le t\le T,
\]
hold. Then
\[
u(t,w)\le v(t,w)
\qquad\text{for all }(t,w)\in(0,T]\times[0,b].
\]
If, in addition, $u$ and $v$ are continuous on $\overline Q$, then the
inequality holds on all of $[0,T]\times[0,b]$.
\end{lemma}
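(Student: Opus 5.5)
The plan is to follow the proof of Lemma~\ref{lem:FD-comparison} closely and reduce the statement to the Cauchy--Dirichlet comparison theorem \cite[Theorem~8.2]{crandall1992usersguideviscositysolutions}. First I will reverse time, $s:=T-t$. Then I will subtract the linear function $h(w):=w/b$, defining $U(s,w):=u(T-s,w)-h(w)$ and $V(s,w):=v(T-s,w)-h(w)$. The boundary data transform as follows:
\begin{itemize}
\item the lateral data become $U(s,0)\le0\le V(s,0)$ and $U(s,b)\le0\le V(s,b)$;
\item the initial data become $U(0,w)\le\mathcal T^K(w)-h(w)\le V(0,w)$.
\end{itemize}
The function $\mathcal T^K-h$ is continuous on $[0,b]$ by Lemma~\ref{lem:dual-operators-continuity}.

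Next I will compute the transformed equation. A test function $\psi$ for $U$ corresponds to $\phi(t,w)=\psi(T-t,w)+h(w)$. Since $h'=1/b$ and $h''=0$, $U$ and $V$ are a sub- and supersolution of
\[
z_s+\lambda z+F(s,w,z_w,z_{ww})=0,
\]
where
\[
F(s,w,p,q):=-H^K\!\left(w,p+\tfrac1b,q\right)+\tfrac{\lambda}{b}w-\lambda\mathcal J^K(T-s,w).
\]
This operator is continuous by Lemma~\ref{prop:FD-clipped-H} and Lemma~\ref{lem:dual-operators-continuity}, and it is degenerate elliptic. It is strictly increasing in the value variable with slope $\lambda>0$, so it is proper; this is even better than the fixed-deadline case.

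Then I will verify the structural condition \cite[Equation~(3.14)]{crandall1992usersguideviscositysolutions}, uniformly in $s$, via \cite[Example~3.6]{crandall1992usersguideviscositysolutions}. I will write $F$ as an infimum over $\piv\in\Pi_K$ of
\[
b^{\piv}(w)p-\operatorname{tr}(Z^{\piv}(w)^\top Z^{\piv}(w)q)-f^{\piv}(s,w),
\]
with
\[
f^{\piv}(s,w):=\tfrac{w}{b}\piv^\top\thetavec-\tfrac{\lambda}{b}w+\lambda\mathcal J^K(T-s,w).
\]
The coefficients $b^{\piv}$ and $Z^{\piv}$ are as in the earlier comparison proofs and are Lipschitz uniformly in $\piv$ by compactness of $\Pi_K$.

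The main obstacle is the time dependence of the source term $\mathcal J^K(T-s,w)$, since Theorem~8.2 is formulated for $F(t,x,r,p,X)$ with continuity in $t$. I will use the uniform continuity of $\mathcal J^K$ on the compact set $[0,T]\times[0,b]$, from Lemma~\ref{lem:dual-operators-continuity}. This gives a single modulus $\omega$ with $|f^{\piv}(s,w)-f^{\piv}(s,w')|\le\omega(|w-w'|)$, uniformly in $s$ and $\piv$. That is exactly the uniform-in-$t$ form of (3.14) required in \cite[Section~8]{crandall1992usersguideviscositysolutions}, together with the continuity of $F$ in $s$.

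Applying the theorem will give $U\le V$ on $[0,T)\times[0,b]$, that is, $u\le v$ on $(0,T]\times[0,b]$. If $u$ and $v$ are additionally continuous on $\overline Q$, I will let $t\downarrow0$ to extend the inequality to $t=0$, giving it on all of $[0,T]\times[0,b]$.
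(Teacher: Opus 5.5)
Your proposal is correct and follows the paper's proof essentially step for step: you reverse time with $s=T-t$, subtract $h(w)=w/b$ to zero out the lateral data, get properness from the slope $\lambda$ in the value variable, and verify the structural condition (3.14) via Example~3.6, using uniform continuity of $\mathcal J^K$ on $[0,T]\times[0,b]$ to obtain a modulus uniform in $s$; you then invoke \cite[Theorem~8.2]{crandall1992usersguideviscositysolutions} and extend to $t=0$ by continuity. The only cosmetic difference is that you fold $-\tfrac{\lambda}{b}w+\lambda\mathcal J^K(T-s,w)$ into the controlled spatial term $f^{\piv}$ instead of keeping it outside the infimum, and this changes nothing.
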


\begin{proof}
We reduce the result to the parabolic Cauchy--Dirichlet comparison theorem
\cite[Theorem~8.2]{crandall1992usersguideviscositysolutions}, exactly as in
Lemma~\ref{lem:FD-comparison}. Set $s:=T-t,
\ h(w):=\frac{w}{b}$,
and define
\[
U(s,w):=u(T-s,w)-h(w),
\qquad
V(s,w):=v(T-s,w)-h(w).
\]
The boundary inequalities become
\[
U(0,w)\le \mathcal T^K(w)-h(w)\le V(0,w),
\qquad 0\le w\le b,
\]
and
\[
U(s,0)\le0\le V(s,0),
\qquad
U(s,b)\le0\le V(s,b),
\qquad 0\le s\le T.
\]

We next identify the transformed equation. If $\psi\in C^{1,2}$ is a test
function for $U$, the corresponding test function for $u$ is
\[
\phi(t,w):=\psi(T-t,w)+h(w).
\]
Hence
\[
\phi_t(t,w)=-\psi_s(T-t,w),
\qquad
\phi_w(t,w)=\psi_w(T-t,w)+h'(w),
\qquad
\phi_{ww}(t,w)=\psi_{ww}(T-t,w)+h''(w).
\]
Since $h'(w)=1/b$ and $h''(w)=0$, the viscosity inequalities for $u$ and
$v$ transform into the viscosity inequalities for the Cauchy--Dirichlet problem
\[
z_s+\widehat F(s,w,z,z_w,z_{ww})=0
\qquad\text{in }(0,T)\times(0,b),
\]
where
\[
\widehat F(s,w,r,p,q)
:=
\lambda\bigl(r+h(w)\bigr)
-
H^K\!\left(w,p+\frac1b,q\right)
-
\lambda\mathcal J^K(T-s,w).
\]

It remains only to verify the structural hypotheses of
\cite[Theorem~8.2]{crandall1992usersguideviscositysolutions}. By
Lemma~\ref{prop:FD-clipped-H} and Lemma~\ref{lem:dual-operators-continuity},
$\widehat F$ is continuous. It is proper because it is nondecreasing in the
value variable $r$, with slope $\lambda>0$, and it is degenerate elliptic
because $H^K$ is nondecreasing in its second-derivative argument $q$. Indeed,
if $q_1\le q_2$, then
\[
H^K(w,p,q_1)\le H^K(w,p,q_2),
\]
and therefore
\[
\widehat F(s,w,r,p,q_1)\ge \widehat F(s,w,r,p,q_2).
\]

We now verify the structural condition
\cite[Equation~(3.14)]{crandall1992usersguideviscositysolutions}. For
${\piv}\in\Pi_K$, set
\[
\beta^{{\piv}}(w):=-w\,{\piv}^\top\thetavec,
\qquad
Z^{{\piv}}(w):=\frac{w}{\sqrt2}\bigl({\piv}^\top\Sigmav{\piv}\bigr)^{1/2},
\qquad
f^{{\piv}}(w):=\frac{w}{b}\,{\piv}^\top\thetavec .
\]
Then, identifying $Z^{{\piv}}(w)$ with a $1\times1$ matrix,
\[
\widehat F(s,w,r,p,q)
=
\lambda r+\lambda h(w)-\lambda\mathcal J^K(T-s,w)
+
\inf_{{\piv}\in\Pi_K}
\left\{
\beta^{{\piv}}(w)p
-
\operatorname{tr}\!\left(Z^{{\piv}}(w)^\top Z^{{\piv}}(w)q\right)
-
f^{{\piv}}(w)
\right\}.
\]
For each fixed ${\piv}$, the first-order term, the second-order diffusion term,
and the continuous spatial terms are covered by
\cite[Example~3.6]{crandall1992usersguideviscositysolutions}. Because
$\Pi_K$ is compact, the one-sided Lipschitz constants for $\beta^{{\piv}}$, the
Lipschitz constants for $Z^{{\piv}}$, and the moduli of continuity of
$f^{{\piv}}$ can be chosen uniformly over ${\piv}\in\Pi_K$. Also,
$\mathcal J^K$ is uniformly continuous on the compact set $[0,T]\times[0,b]$,
so the modulus of continuity for the spatial term
$\lambda h(w)-\lambda\mathcal J^K(T-s,w)$ can be chosen uniformly in $s$.
The stability of the structural condition under sums and infima, stated in
\cite[Example~3.6]{crandall1992usersguideviscositysolutions}, therefore implies
that $\widehat F$ satisfies the same condition, uniformly in $s\in[0,T]$.

The comparison theorem \cite[Theorem~8.2]{crandall1992usersguideviscositysolutions}
therefore gives
\[
U(s,w)\le V(s,w)
\qquad\text{for }(s,w)\in[0,T)\times[0,b].
\]
Returning to $t=T-s$, we obtain
\[
u(t,w)\le v(t,w)
\qquad\text{for }(t,w)\in(0,T]\times[0,b].
\]
If $u$ and $v$ are continuous on $\overline Q$, the inequality extends to
$t=0$ by continuity.
\end{proof}

We are now ready to complete the proof of our main theorem.

\begin{proof}[Proof of Theorem~\ref{thm:dual-HJB}]
We proceed in stages.

\emph{\underline{Step 1:} (Regularity and Boundary Values)} Since Lemma~\ref{lem:dual-value-continuity} tells us that $\widetilde V^{R,D,K}$ is continuous on $[0,T]\times[0,\infty)$ and by Proposition~\ref{prop:reduction} $\widetilde V^{R,D,K}= V^{R,D,K}$ on $[0,T)\times[0,\infty)$, the value function
$V^{R,D,K}$ is continuous on $[0,T)\times[0,\infty)$. Moroever, the continuous extension of $V^{R,D,K}$ to  $[0,T]\times[0,\infty)$ agrees with $\widetilde V^{R,D,K}$.  It is bounded between
$0$ and $1$, because the objective is a weighted sum of two success
indicators and $\alpha_R+\alpha_D=1$.

The boundary values are immediate from the reduced representation. If $w=0$,
then the multiplicative wealth process remains identically zero, and since
$F_R(0)=F_G(0)=0$, $V^{R,D,K}(t,0)=0$.
If $w\ge b=b_R+b_G$, the household may choose ${\piv}\equiv0$. Then it can
fund whichever goal arrives first and still has enough wealth to fund the
remaining goal. Since the value cannot exceed $1$, this gives
\[
V^{R,D,K}(t,w)=1,
\qquad 0\le t\le T,\quad w\ge b.
\]
Finally, if $t=T$, the running interval in \eqref{eq:app-dual-reduced-value}
is empty and $\widetilde W_T^{T,w;{\piv}}=w$, so the continuous extension $\widetilde V^{R,D,K}$ satisfies
\[
\widetilde V^{R,D,K}(T,w)=\mathcal T^K(w),
\qquad 0\le w\le b.
\]

\emph{\underline{Step 2:} (Viscosity Solution)} We now identify the (continuous extension of the) value function with a solution to the viscosity equation. The reduced control problem
\eqref{eq:app-dual-reduced-value} is a finite-horizon stochastic control problem
with state coefficients
\[
b(w,{\piv})=w{\piv}^\top\thetavec,
\qquad
\sigma_W(w,{\piv})=w{\piv}^\top\sigmavec,
\]
discount rate $k\equiv\lambda$, running reward $f(t,w,{\piv})=\lambda\mathcal J^K(t,w)$,
and terminal reward $g(w)=\mathcal T^K(w)$.
The coefficients are Lipschitz in the state and have linear growth, uniformly
over the compact control set $\Pi_K$. The functions $f$ and $g$ are bounded
and continuous by Lemma~\ref{lem:dual-operators-continuity}, and
$k\equiv\lambda$ is bounded. The associated Hamiltonian from \cite[Section 7]{Touzi2013} is
\[
\mathfrak H(t,w,r,p,q)
:=
\sup_{{\piv}\in\Pi_K}
\left\{
-\lambda r
+
wp\,{\piv}^\top\thetavec
+
\frac12 w^2q\,{\piv}^\top\Sigmav{\piv}
+
\lambda\mathcal J^K(t,w)
\right\}.
\]
Because $\Pi_K$ is compact and $\mathcal J^K$ is continuous,
$\mathfrak H$ is finite and continuous. Since $\widetilde V^{R,D,K}$ is bounded, it is
locally bounded. Hence the hypotheses of the finite-horizon viscosity
characterization theorem \cite[Theorem~7.4]{Touzi2013} are satisfied, and
$\widetilde V^{R,D,K}$ is a viscosity solution of
\[
\lambda u(t,w)
-
u_t(t,w)
-
\sup_{{\piv}\in\Pi_K}
\left\{
w u_w(t,w){\piv}^\top\thetavec
+
\frac12 w^2 u_{ww}(t,w){\piv}^\top\Sigmav{\piv}
+
\lambda\mathcal J^K(t,w)
\right\}
=0.
\]
Equivalently,
\[
u_t(t,w)
+
H^K\bigl(w,u_w(t,w),u_{ww}(t,w)\bigr)
+
\lambda\bigl(\mathcal J^K(t,w)-u(t,w)\bigr)
=0
\]
in $\Omega_{RD}=(0,T)\times(0,b)$, in the viscosity sense.

\emph{\underline{Step 3:} (Uniqueness)} It remains to prove uniqueness. Let $U$ and $W$ be two bounded continuous
viscosity solutions of
\[
\partial_t z
+
H^K(w,z_w,z_{ww})
+
\lambda(\mathcal J^K-z)
=0
\qquad\text{in }\Omega_{RD},
\]
with boundary data
\[
z(T,w)=\mathcal T^K(w),\quad 0\le w\le b,
\qquad
z(t,0)=0,\quad z(t,b)=1,\quad 0\le t\le T.
\]
Applying Lemma~\ref{lem:dual-comparison} to $U$ as subsolution and $W$ as
supersolution gives $U\le W$ on $[0,T]\times[0,b]$. Reversing the roles gives
$W\le U$. Hence $U=W$.

Thus the continuous extension of $V^{R,D,K}$, restricted to $[0,T]\times[0,b]$, is the unique bounded
continuous viscosity solution of
\eqref{eq:dual-HJB-clipped}--\eqref{eq:dual-HJB-bc}. 
\end{proof}

\section{Proof of Proposition~\ref{prop:reduction}}\label{app:reduction}

\subsection{Conventions and Preliminaries}
To make the conditioning arguments below formal, we take the model on a
canonical product realization, for example
$C([0,\infty);\R^d)\times[0,\infty)\times[0,b_R]\times[0,b_G]$, carrying the
Brownian coordinate, the exponential clock, and the two goal amounts, and then pass
to the usual completed, right-continuous filtrations. Regular conditional probabilities below are taken on
the raw countably generated sigma-fields and then identified with their completed
versions. Predictable controls are identified up to $ds\otimes d\PP$-null sets with
raw predictable versions. Naturally, conclusions about the value function remain wholly unchanged by this choice of convention.

We will always work \emph{conditionally on $\{\tau>t\}$} and write
$\E_{t,w}:=\E[\,\cdot\mid\tau>t\,]$ for the conditional expectation with pre-payment
wealth started at $w$ at time $t$. On $\{\tau>t\}$ set $\tau_t:=\tau$. By the memorylessness of the exponential distribution and the independence of $\tau$ from $\F_\infty\vee\sigma(R,G)$, under $\PP(\cdot\mid\tau>t)$ the
residual time $\tau_t-t$ is $\mathrm{Exp}(\lambda)$ and is independent of
$\F_\infty\vee\sigma(R,G)$; in particular $\PP(\tau_t=T\mid\tau>t)=0$ and
$\PP(\tau_t>T\mid\tau>t)=e^{-\lambda(T-t)}$. For convenience we adopt the same notation for all-or-nothing post-payment map as in the motivating section:
\begin{equation}\label{eq:Gamma}
\Gamma_a(x):=x-a\,\1_{\{x\ge a\}},\qquad a\ge0,\ x\ge0.
\end{equation}

Recall that for $\piv\in\cA^{K,\mathrm{pre}}_{t,u}$ (or $\piv\in\cA^{K}_{t,u}$) the pre-payment
wealth $\widetilde W^{t,w;\piv}$ is the strong solution of
\begin{equation}\label{eq:prepay}
d\widetilde W_s^{t,w;\piv}
=\widetilde W_s^{t,w;\piv}\bigl(\piv_s^\top\thetavec\,ds+\piv_s^\top\sigmavec\,dB_s\bigr),
\qquad \widetilde W_t^{t,w;\piv}=w,\quad s\in[t,u].
\end{equation}
The post-payment wealth $\widehat W^{t,w;\piv}$ for a full strategy
$\piv\in\cA^{K}_{t,\infty}$ is the unique c\`adl\`ag $\filG$-adapted solution of
\begin{equation}\label{eq:postpay}
\widehat W_s^{t,w;\piv}
=w
+\int_t^s\widehat W_{r-}^{t,w;\piv}\,\piv_r^\top\thetavec\,dr
+\int_t^s\widehat W_{r-}^{t,w;\piv}\,\piv_r^\top\sigmavec\,dB_r
-R\,\1_{\{\tau\le s,\ \widehat W_{\tau-}^{t,w;\piv}\ge R\}}
-G\,\1_{\{T\le s,\ \widehat W_{T-}^{t,w;\piv}\ge G\}},
\end{equation}
$s\ge t$. Thus $\widehat W^{t,w;\piv}$ follows \eqref{eq:prepay} between goal times
and has all-or-nothing jumps at $\tau$ and $T$. 

We remark that for $s\in[0,T]$ and $x\ge0$, we may write
\begin{align}
\mathcal J^K(s,x)
&:=\E_R\!\left[\alpha_R\1_{\{x\ge R\}}+\alpha_D V^{D,K}\!\left(s,\Gamma_R(x)\right)\right]
\label{eq:JK}\\
&\phantom{:}=\alpha_R F_R(x)+\alpha_D\!\left[(1-F_R(x))V^{D,K}(s,x)
+\int_0^x V^{D,K}(s,x-r)\,dF_R(r)\right],\nonumber\\[2pt]
\mathcal T^K(x)
&:=\E_G\!\left[\alpha_D\1_{\{x\ge G\}}+\alpha_R V^{R,K}\!\left(\Gamma_G(x)\right)\right]
\label{eq:TK}\\
&\phantom{:}=\alpha_D F_G(x)+\alpha_R\!\left[(1-F_G(x))V^{R,K}(x)
+\int_0^x V^{R,K}(x-g)\,dF_G(g)\right],\nonumber
\end{align}
where $\E_R,\E_G$ average over the laws of $R$ and $G$ only.

We use throughout the following properties, established in the single-goal
analysis: $V^{D,K}$ and $V^{R,K}$ take values in $[0,1]$; $V^{D,K}$ is continuous on
$[0,T]\times[0,\infty)$ and $V^{R,K}$ is continuous on $[0,\infty)$; and
$V^{D,K}(s,x)=1$ for $x\ge b_G$, $V^{R,K}(x)=1$ for $x\ge b_R$. With
$\sigma_\lambda\sim\mathrm{Exp}(\lambda)$ independent of $\F_\infty$, the elementary
identity $\E[\int_0^\infty\lambda e^{-\lambda u}h(u)\,du]=\E[h(\sigma_\lambda)]$
applied to $h(u)=F_R(\widetilde W_u^{0,x;\piv})$ gives the equivalent form
\begin{equation}\label{eq:VRKrep}
V^{R,K}(x)=\sup_{\piv\in\cA^{K,\mathrm{pre}}_{0,\infty}}
\E\!\left[F_R\!\left(\widetilde W_{\sigma_\lambda}^{0,x;\piv}\right)\right].
\end{equation}
As noted in Lemma~\ref{lem:dual-operators-continuity} $(s,x)\mapsto\mathcal J^K(s,x)$ is continuous
on $[0,T]\times[0,\infty)$ and $x\mapsto\mathcal T^K(x)$ is continuous on
$[0,\infty)$. Hence for $\piv\in\cA^{K,\mathrm{pre}}_{t,T}$ the process
$(s,\omega)\mapsto\mathcal J^K(s,\widetilde W^{t,w;\piv}_s)$ is progressively
measurable and bounded by $1$, $\mathcal T^K(\widetilde W^{t,w;\piv}_T)$ is bounded
by $1$, and all expressions are finite.

Finally, the dual-goal value functions (and the focus of this appendix), are recorded again here for ease of reference.
Namely, for $0\le t\leq T$ and $w\ge0$,
\begin{equation}\label{eq:VRDK}
V^{R,D,K}(t,w):=\sup_{\piv\in\cA^{K}_{t,\infty}}
\E_{t,w}\!\left[\alpha_R\,\1_{\{\widehat W_{\tau_t-}^{t,w;\piv}\ge R\}}
+\alpha_D\,\1_{\{\widehat W_{T-}^{t,w;\piv}\ge G\}}\right],
\end{equation}
\begin{equation}\label{eq:Vtilde}
\widetilde V^{R,D,K}(t,w):=\sup_{\piv\in\cA^{K,\mathrm{pre}}_{t,T}}
\E_{t,w}\!\left[\int_t^T\lambda e^{-\lambda(s-t)}
\mathcal J^K\!\left(s,\widetilde W_s^{t,w;\piv}\right)ds
+e^{-\lambda(T-t)}\mathcal T^K\!\left(\widetilde W_T^{t,w;\piv}\right)\right].
\end{equation}

\subsubsection{Additional Notation}
\begin{itemize}
    \item For a sigma-field $\mathcal H$ on $\Omega$ and a set $A\subseteq\Omega$, we write $\mathcal H|_A:=\{H\cap A:H\in\mathcal H\}$
    for the trace of $\mathcal H$ on $A$. We say that two sigma-fields
    $\mathcal H$ and $\mathcal K$ coincide on $A$ if
    $\mathcal H|_A=\mathcal K|_A$. Equalities of completed sigma-fields below are
    understood up to completion.

    \item For stopping times $S_1,S_2$, we use the standard stochastic-interval notation $\llbracket S_1,S_2\rrbracket
    :=\{(\omega,u):S_1(\omega)\le u\le S_2(\omega)\}$,
    with the analogous meanings for
    $\rrbracket S_1,S_2\rrbracket$, $\llbracket S_1,S_2\llbracket$, and
    $\rrbracket S_1,S_2\llbracket$.
    \item Let $\mathbb H=(\mathcal H_t)_{t\ge0}$ be a filtration satisfying the usual conditions with respect to which $B$ is a $d$-dimensional Brownian motion. If $S$ is an a.s. finite $\mathbb H$-stopping time, we write $B^S_v:=B_{S+v}-B_S$, $v\ge0$, for the Brownian increment process after $S$. Its augmented natural filtration is denoted by $\mathbb B^S=(\mathcal B^S_v)_{v\ge0}$. By the strong Markov property, $B^S$ is a $d$-dimensional Brownian motion independent of $\mathcal H_S$.
\end{itemize}
\subsection{Auxiliary Results}

A number of auxilliary results are needed for the main proof. We provide them here and motivate each in turn. 

\subsubsection{Predictable Reductions}
The first result is a general result that will be used to justify a reduction in the effective class of controls from $\mathcal A_{t,\infty}^{K}$ to $\mathcal{A}_{t,\infty}^{K,\rm pre}$.

\begin{lemma}%
\label{lem:keyred}
Let $\mathbb H=(\mathcal H_s)_{s\ge0}$ satisfy the usual conditions, let
$\rho$ be a random time, and let $M$ be a real-valued random variable. Let
$\mathbb G$ be the progressive enlargement of $\mathbb H$ which makes $\rho$ a
stopping time and reveals $M$ at $\rho$, i.e.
\[
\mathcal G_s
=
\bigcap_{u>s}
\left(
\mathcal H_u\vee\sigma(\rho\wedge u)
\vee\sigma(M\mathbf 1_{\{\rho\le u\}})
\right).
\]
Then every (possibly vector-valued) $\mathbb G$-predictable process $Y$ admits an $\mathbb H$-predictable
process $Y'$ such that $Y=Y'$ on $\rrbracket0,\rho\rrbracket$ up to evanescence.
Consequently,
\[
Y=Y'
\quad ds\otimes d\mathbb P\text{-a.e. on }\llbracket0,\rho\rrbracket .
\]
\end{lemma}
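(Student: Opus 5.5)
This is the classical Jeulin-type statement that, before $\rho$ (inclusive, for predictable processes), the progressively enlarged filtration carries no more information than $\mathbb H$. I would prove it by a monotone class argument on generators of the $\mathbb G$-predictable $\sigma$-field.

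\emph{Step 1: sets.} The $\mathbb G$-predictable $\sigma$-field is generated by $\{0\}\times A$ with $A\in\mathcal G_0$, together with the rectangles $(s,u]\times A$ with $A\in\mathcal G_s$. It therefore suffices to treat indicators $Y=\mathbf 1_{(s,u]\times A}$ and then extend.

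\emph{Step 2: trace of $\mathcal G_s$ on $\{\rho>s\}$.} The key fact is that for every $A\in\mathcal G_s$ there is $A'\in\mathcal H_s$ with $A\cap\{\rho>s\}=A'\cap\{\rho>s\}$. Without the right-continuous intersection this is elementary. Every set in $\mathcal H_u\vee\sigma(\rho\wedge u)\vee\sigma(M\mathbf 1_{\{\rho\le u\}})$ has, on $\{\rho>u\}$, the form $A'\cap\{\rho>u\}$ with $A'\in\mathcal H_u$. This holds because $\rho\wedge u=u$ and $M\mathbf 1_{\{\rho\le u\}}=0$ there, and the class of sets with this property is a $\sigma$-field containing the generators.

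The right-continuous intersection is the main technical obstacle. I would handle it as follows. For $A\in\mathcal G_s$ and each $n$, pick $A_n\in\mathcal H_{s+1/n}$ with $A\cap\{\rho>s+1/n\}=A_n\cap\{\rho>s+1/n\}$. Then set $A':=\limsup_n A_n$, which lies in $\bigcap_m\mathcal H_{s+1/m}=\mathcal H_s$ by the usual conditions. On $\{\rho>s\}$ we have $\rho>s+1/n$ eventually, so $\mathbf 1_{A_n}=\mathbf 1_A$ eventually, and hence $A\cap\{\rho>s\}=A'\cap\{\rho>s\}$. The same argument gives the case $s=0$ on $\{\rho>0\}$. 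The time $0$ part is immaterial anyway, since the statement concerns $\rrbracket0,\rho\rrbracket$.

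\emph{Step 3: from sets to processes.} For $Y=\mathbf 1_{(s,u]\times A}$ with $A\in\mathcal G_s$, set $Y':=\mathbf 1_{(s,u]\times A'}$, which is $\mathbb H$-predictable. On $\rrbracket0,\rho\rrbracket$, any point $(\omega,v)$ with $v\in(s,u]$ satisfies $\rho(\omega)\ge v>s$. Step 2 then gives $Y=Y'$ there, identically and not merely up to evanescence.

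Now let $\mathcal M$ be the class of bounded $\mathbb G$-predictable $Y$ admitting such a $Y'$. This class is a vector space containing constants, because combinations of the corresponding $Y'$ work. It is also closed under bounded monotone limits: if $Y_n\uparrow Y$, take $Y':=\limsup_n Y_n'$, which is $\mathbb H$-predictable and agrees with $Y$ on $\rrbracket0,\rho\rrbracket$. The generating rectangles form a $\pi$-system, so the monotone class theorem puts every bounded $\mathbb G$-predictable process in $\mathcal M$.

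For general real-valued $Y$, apply the bounded case to $\arctan Y$ and set $Y':=\tan(\cdot)$ of the resulting process, after clipping to $(-\pi/2,\pi/2)$ where needed. For vector-valued $Y$, argue componentwise.

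The final claim, equality $ds\otimes d\PP$-a.e. on $\llbracket0,\rho\rrbracket$, follows because evanescent sets are $ds\otimes d\PP$-null and the single time $\{0\}$ has Lebesgue measure zero.
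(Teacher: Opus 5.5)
Your proposal is correct and is essentially the paper's argument: the paper cites \cite[Proposition~2.11(b)]{AksamitJeanblanc2017} for the case without $M$ and asserts that the same monotone-class argument goes through because $M\mathbf 1_{\{\rho\le u\}}$ carries no information strictly before $\rho$; your Step~2 trace lemma, including the $\limsup_n A_n$ treatment of the right-continuous intersection, makes exactly this explicit. One cosmetic fix is needed: $\limsup_n Y_n'$ may be infinite off $\rrbracket0,\rho\rrbracket$, so clip it (for instance to the bound of $Y$) so that $Y'$ is real-valued.
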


\begin{proof}
For the progressive enlargement without $M$, this is the result of \cite[Proposition~2.11(b)]{AksamitJeanblanc2017}. The
case with $M$ follows from the same monotone-class argument. Indeed, $M$ contributes no predictable information on
$\rrbracket0,\rho\rrbracket$ so every $\mathbb G$-predictable process has an
$\mathbb H$-predictable reduction on $\rrbracket0,\rho\rrbracket$, up to
evanescence.
The stated $ds\otimes d\mathbb P$ conclusion follows because the only difference
between $\rrbracket0,\rho\rrbracket$ and $\llbracket0,\rho\rrbracket$ is the
time-zero slice, which is $ds\otimes d\mathbb P$-null.
\end{proof}

As noted above, the next two lemmas use this general result in the context of our problem to argue that we may restrict to pre-arrival controls before goals have arrived.

\begin{lemma}
\label{lem:beforeT-red}
Fix $0\le t<T$ and work on $\{\tau>t\}$. For every
$\piv\in\cA^K_{t,\infty}$ there exists $\piv^0\in\cA^{K,\mathrm{pre}}_{t,T}$ such that
\[
\piv_s=\piv^0_s
\quad\text{for }ds\otimes d\PP\text{-a.e. }(s,\omega)\text{ with }
 t\le s<\tau\wedge T .
\]
Consequently, if $\widehat W=\widehat W^{t,w;\piv}$ and
$\widetilde W=\widetilde W^{t,w;\piv^0}$, then almost surely
$\widehat W_s=\widetilde W_s$ for $s\in[t,\tau\wedge T)$, and
\[
\widehat W_{\tau-}=\widetilde W_{\tau}\quad\text{on }\{\tau<T\},
\qquad
\widehat W_{T-}=\widetilde W_T\quad\text{on }\{\tau>T\}.
\]
\end{lemma}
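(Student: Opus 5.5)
The plan is to apply Lemma~\ref{lem:keyred} with $\mathbb H=\filF$, the progressive enlargement that reveals $\tau$ and $R$ at $\tau$. The filtration $\filG$ also reveals $G$ at the deterministic time $T$. So I will first reduce from $\filG$ to the enlargement $\filG'$ generated by $\filF$ and $(\tau,R)$ alone, on the stochastic interval $\rrbracket t,\tau\wedge T\llbracket$.

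The key observation is that for $u<T$ we have $G\mathbf 1_{\{T\le u\}}=0$, so $\mathcal G_s=\mathcal G'_s$ for every $s<T$. A $\filG$-predictable process restricted to $[0,T)$ is therefore $\filG'$-predictable. One way to see this is through generators: the predictable sigma-field on $[0,T)\times\Omega$ is generated by the sets $(a,c]\times F$ with $F\in\mathcal G_a$ and $a<T$, and these are the same for both filtrations. Hence I can pick a $\filG'$-predictable $\piv'$ with $\piv=\piv'$ on $[t,T)$; for instance $\piv'=\piv\mathbf 1_{[0,T)}$ extended by $0$ after $T$.

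Next I work conditionally on $\{\tau>t\}$ with $\tau_t=\tau$, and take $\rho=\tau$, $M=R$ in Lemma~\ref{lem:keyred}. This gives an $\filF$-predictable $\piv''$ with $\piv'=\piv''$ on $\rrbracket 0,\tau\rrbracket$ up to evanescence. I then clip, setting $\piv^0_s:=\proj_{\Pi_K}(\piv''_s)$ for $s\in[t,T]$. On $\{\piv''\neq\piv\}$ this changes nothing relevant, because $\piv$ already takes values in $\Pi_K$ $ds\otimes d\PP$-a.e. on the interval of interest, and outside that interval the values are irrelevant. The result is $\piv^0\in\cA^{K,\mathrm{pre}}_{t,T}$ with $\piv=\piv^0$ $ds\otimes d\PP$-a.e. on $[t,\tau\wedge T)$. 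The one subtlety is the shift to time $t$: the enlargement is of the shifted filtration. I will handle this by applying the lemma on $\{\tau>t\}$ with the original filtrations, since $\tau$ is a $\filG$-stopping time there.

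For the wealth identification, I note that before $\tau\wedge T$ there are no jumps in \eqref{eq:postpay}. So on $[t,\tau\wedge T)$ both $\widehat W$ and $\widetilde W$ solve the same linear SDE driven by integrands that agree $ds\otimes d\PP$-a.e. Concretely, I compare $\widehat W$ with the stopped pre-payment process: the difference of stochastic integrals is zero on $[t,S)$ for every $S<\tau\wedge T$. By uniqueness of strong solutions, using a localization $S_n\uparrow\tau\wedge T$, the two processes coincide a.s. there. Since $\widetilde W$ is continuous, taking left limits gives $\widehat W_{\tau-}=\widetilde W_\tau$ on $\{\tau<T\}$ and $\widehat W_{T-}=\widetilde W_T$ on $\{\tau>T\}$; the event $\{\tau=T\}$ is null.

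I expect the main obstacle to be making the first reduction rigorous, namely removing $G$ and showing that predictability is preserved on $[0,T)$. The right-continuous intersection in $\mathcal G_s$ for $s$ near $T$ needs checking: for $s<T$ one can take $u\in(s,T)$, so the $G$ term never enters. The uniqueness step and the clipping are routine.
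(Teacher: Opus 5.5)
Your proposal is correct and follows the paper's proof essentially step for step. You identify $\filG$ before $T$ with the progressive enlargement of $\filF$ by $(\tau,R)$, apply Lemma~\ref{lem:keyred}, project onto $\Pi_K$, and then use pathwise uniqueness on $[t,\tau\wedge T)$ together with continuity of $\widetilde W$ for the left limits. Your explicit check that the predictable $\sigma$-fields of $\filG$ and of the $G$-free enlargement coincide on $[0,T)\times\Omega$, and your localization $S_n\uparrow\tau\wedge T$, only spell out what the paper states in one line each.
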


\begin{proof}
On $[0,T)$ the component $\sigma(G\1_{\{T\le u\}})$ in the definition of $\mathbb{G}$ is trivial.
Thus, before $T$, the filtration $\filG$ is the progressive enlargement of $\filF$
by $\tau$ and the process $R\1_{\tau\leq u}$. Applying Lemma~\ref{lem:keyred} to
$\piv\1_{[t,T)}$ gives an $\filF$-predictable process $\rhov$ which coincides with
$\piv$ on $\{(s,\omega):0\le s<\tau(\omega)\}\cap([t,T)\times\Omega)$ up to evanescence. Let
$\piv^0:=\proj_{\Pi_K}\rhov$ on $[t,T]$. Since the Euclidean projection onto the
nonempty closed convex set $\Pi_K$ is Borel and $1$-Lipschitz, $\piv^0$ is
$\filF$-predictable, $\Pi_K$-valued, and still coincides with $\piv$ where
$\piv$ already takes values in $\Pi_K$. This gives the asserted equality of
controls.

Before the first goal resolution no jump in \eqref{eq:postpay} has occurred, so
$\widehat W$ solves the diffusion equation \eqref{eq:prepay} driven by $\piv$ on
$[t,\tau\wedge T)$. The stochastic and drift integrals up to any
$s<\tau\wedge T$ depend on the control only through its restriction to
$\{t\le r<s\}$, where $\piv=\piv^0$ $dr\otimes d\PP$-a.e. Pathwise uniqueness for
\eqref{eq:prepay} gives $\widehat W_s=\widetilde W_s$ for
$s<\tau\wedge T$, and the left-limit identities follow from continuity of
$\widetilde W$.
\end{proof}

\begin{lemma}
\label{lem:postT-red}
Let $\FG=(\F_s\vee\sigma(G))_{s\ge T}$ and augment it so that it is complete and right-continuous. For
every $\piv\in\cA^K_{T,\infty}$ there exists an $\FG$-predictable, $\Pi_K$-valued
process $\piv^G$ on $[T,\infty)$ such that
\[
\piv_s=\piv^G_s
\quad\text{for }ds\otimes d\PP\text{-a.e. }(s,\omega)\text{ with }T\le s<\tau.
\]
Moreover, in relative time $v\ge0$, the process $\piv^G_{T+v}$ is predictable with
respect to $(\KK_T\vee\BB^T_v)_{v\ge0}$, where $\KK_T:=\F_T\vee\sigma(G)$ and
$\BB^T$ is the augmented natural filtration of $B^T=(B_{T+v}-B_T)_{v\ge0}$.
\end{lemma}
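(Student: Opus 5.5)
The plan is to mirror the proof of Lemma~\ref{lem:beforeT-red}, now viewing $\filG$ on $[T,\infty)$ as a progressive enlargement of the filtration $\FG$ by the single random time $\tau$ and the mark $R$. On $\{\tau>T\}$, for $u\ge T$ we have $\sigma(G\1_{\{T\le u\}})=\sigma(G)$. Hence for $u>T$,
\[
\F_u\vee\sigma(\tau\wedge u)\vee\sigma(R\1_{\{\tau\le u\}})\vee\sigma(G\1_{\{T\le u\}})
=
\bigl(\F_u\vee\sigma(G)\bigr)\vee\sigma(\tau\wedge u)\vee\sigma(R\1_{\{\tau\le u\}}).
\]
Taking intersections over $u>s$ and completing shows that $\mathcal G_s$ coincides with the progressive enlargement of $\FG$ by $(\tau,R)$ for $s\ge T$.

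Two points need care here. First, $\FG$ should satisfy the usual conditions; this holds by construction, since we augment it. Second, the interchange of the intersection over $u$ with the join by $\sigma(G)$ must be justified. This is where I expect the main technical obstacle. I would first try to handle it by exploiting the independence of $G$ from $\F_\infty\vee\sigma(\tau,R)$ together with the right-continuity of the Brownian filtration. Concretely, I would apply the standard fact that if $\mathcal{A}$ is independent of a decreasing family, then $\bigcap_u(\mathcal H_u\vee\mathcal{A})=(\bigcap_u\mathcal H_u)\vee\mathcal A$ up to null sets. Failing that, it suffices to work with the raw (non-intersected) enlargement: predictable processes of a filtration and of its right-continuous augmentation agree up to evanescence in this setting.

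With this identification in hand, apply Lemma~\ref{lem:keyred} with $\mathbb H=\FG$ (time shifted to start at $T$), $\rho=\tau$ and $M=R$ to the $\filG$-predictable process $\piv\1_{[T,\infty)}$. This yields an $\FG$-predictable $\rhov$ with $\rhov=\piv$ on $\rrbracket T,\tau\rrbracket$ up to evanescence, and hence $ds\otimes d\PP$-a.e. on $\{T\le s<\tau\}$. Then set $\piv^G:=\proj_{\Pi_K}\rhov$. Since the projection is Borel, $\piv^G$ remains $\FG$-predictable and $\Pi_K$-valued, and it agrees with $\piv$ wherever $\piv\in\Pi_K$, i.e.\ a.e. On $\{\tau<T\}$ the set $\{T\le s<\tau\}$ is empty, so the claim is vacuous there.

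For the relative-time statement, note that $\F_{T+v}=\F_T\vee\BB^T_v$ up to null sets, because Brownian increments after the deterministic time $T$ generate $\F_{T+v}$ together with $\F_T$. Therefore $\KK_T\vee\BB^T_v=\F_{T+v}\vee\sigma(G)$, which is exactly $\FG_{T+v}$ (after augmentation). Predictability of $v\mapsto\piv^G_{T+v}$ with respect to $(\KK_T\vee\BB^T_v)_{v\ge0}$ then follows because a deterministic time shift maps the predictable $\sigma$-field generated by left-continuous adapted processes onto the corresponding one.
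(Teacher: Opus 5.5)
Your proposal is correct and follows the paper's proof essentially step for step. You identify $\filG$ on $[T,\infty)$ with the progressive enlargement of $\F_s\vee\sigma(G)$ by $(\tau,R)$, apply Lemma~\ref{lem:keyred} to $\piv\1_{[T,\infty)}$, project onto $\Pi_K$, and use $\F_{T+v}=\F_T\vee\BB^T_v$ for the relative-time claim. The differences are cosmetic: the paper avoids your time shift (which is slightly inconsistent as written, since you keep $\rho=\tau$ unshifted) by defining $\mathcal H_s=\F_s\vee\sigma(G)$ for all $s\ge0$, and it takes the identification of the enlargement as immediate rather than arguing via the independence interchange you sketch.
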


\begin{proof}
Let $\mathbb H=(\mathcal H_s)_{s\ge0}$ be the initially enlarged Brownian filtration
$\mathcal H_s:=\F_s\vee\sigma(G)$, completed and made right-continuous. On the time
region $[T,\infty)$, the filtration $\filG$ is the progressive enlargement of
$\mathbb H$ by $\tau$ and the process $R \1_{\{\tau\leq u\}}$. Indeed, $G$ is already contained
in $\mathbb H$, while the pair $(\tau,R)$ is revealed only when $\tau$ occurs.
Applying Lemma~\ref{lem:keyred} to the $\filG$-predictable process
$\piv\1_{[T,\infty)}$ gives an $\mathbb H$-predictable process $\rhov$ which
coincides with $\piv$ on $\dbl0,\tau\dbr\cap([T,\infty)\times\Omega)$ up to
evanescence. Restricting to $[T,\infty)$ and projecting onto $\Pi_K$ yields an
$\FG$-predictable, $\Pi_K$-valued process $\piv^G$ satisfying the asserted equality
for $ds\otimes d\PP$-a.e.\ $(s,\omega)$ with $T\le s<\tau$. On $\{\tau\le T\}$ this
set is empty, so no additional assertion is needed.

Finally, since $\F_{T+v}=\F_T\vee\BB^T_v$ up to null sets for Brownian natural
filtrations, the initially enlarged filtration satisfies
$\FG_{T+v}=\KK_T\vee\BB^T_v$ up to completion. Hence the relative-time process
$\piv^G_{T+v}$ is predictable with respect to $(\KK_T\vee\BB^T_v)_{v\ge0}$.
\end{proof}

\subsubsection{Repesentations of Expectations}

In our analysis, we will need to simplify expectations in terms of the exponential density. To this end, we supply the following two lemmas.

\begin{lemma}\label{lem:disint}
Fix $0\le t<T$ and work on $\{\tau>t\}$. Let $\piv\in\cA^{K,\mathrm{pre}}_{t,T}$ and
$\widetilde W=\widetilde W^{t,w;\piv}$. For every bounded Borel
$f:[t,T]\times[0,\infty)\to\R$,
\[
\E_{t,w}\!\left[f\!\left(\tau_t,\widetilde W_{\tau_t}\right)\1_{\{\tau_t<T\}}\right]
=\int_t^T\lambda e^{-\lambda(s-t)}\,\E_{t,w}\!\left[f\!\left(s,\widetilde W_s\right)\right]ds,
\]
and for every bounded Borel $g:[0,\infty)\to\R$,
\[
\E_{t,w}\!\left[g\!\left(\widetilde W_T\right)\1_{\{\tau_t>T\}}\right]
=e^{-\lambda(T-t)}\,\E_{t,w}\!\left[g\!\left(\widetilde W_T\right)\right].
\]
\end{lemma}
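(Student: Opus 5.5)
The plan is to condition on the Brownian information and integrate out the exponential clock, using that $\tau_t - t$ is independent of $\F_\infty$ under $\PP(\cdot\mid\tau>t)$. The control $\piv\in\cA^{K,\mathrm{pre}}_{t,T}$ is $\filF$-predictable, so the whole path $(\widetilde W_s)_{s\in[t,T]}$ is $\F_\infty$-measurable. By the conventions above, under $\PP(\cdot\mid\tau>t)$ the residual time $\eta:=\tau_t-t$ is $\mathrm{Exp}(\lambda)$ and independent of $\F_\infty\vee\sigma(R,G)$.

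For the first identity, consider the map $\Psi(u,\omega):=f(t+u,\widetilde W_{t+u}(\omega))\1_{\{t+u<T\}}$. I will check that $\Psi$ is $\mathcal B([0,\infty))\otimes\F_\infty$-measurable and bounded. Measurability holds because $\widetilde W$ has continuous paths, hence $(s,\omega)\mapsto\widetilde W_s(\omega)$ is jointly measurable, and $f$ is bounded Borel. The expectation on the left is $\E_{t,w}[\Psi(\eta,\cdot)]$. Independence of $\eta$ and $\F_\infty$ then gives, via the freezing lemma (Fubini on the product law),
\[
\E_{t,w}[\Psi(\eta,\cdot)]=\int_0^\infty\lambda e^{-\lambda u}\,\E_{t,w}[\Psi(u,\cdot)]\,du .
\]
Substituting $s=t+u$ and using the indicator to truncate the integral at $T$ yields the claimed formula.

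For the second identity, I apply the same independence to $g(\widetilde W_T)\1_{\{\eta>T-t\}}$. Here $g(\widetilde W_T)$ is $\F_\infty$-measurable and bounded, so the expectation factors as $\E_{t,w}[g(\widetilde W_T)]\,\PP(\eta>T-t)=e^{-\lambda(T-t)}\E_{t,w}[g(\widetilde W_T)]$.

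The only delicate point is justifying that, under the conditional measure and in the canonical product realization, the law of the wealth path under $\E_{t,w}$ is unaffected by conditioning on $\{\tau>t\}$. This holds because $\{\tau>t\}$ is independent of $\F_\infty$, so $\PP(\cdot\mid\tau>t)$ restricted to $\F_\infty$ equals $\PP$. Given this, the remaining steps are just Fubini.
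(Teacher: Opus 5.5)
Your proposal is correct and takes essentially the same route as the paper. Under $\PP(\cdot\mid\tau>t)$ the residual clock $\eta=\tau_t-t$ is $\mathrm{Exp}(\lambda)$ and independent of the $\filF$-measurable wealth path, so you condition on $\eta$, apply Fubini and change variables $s=t+u$, and factor the expectation for the survival identity; your explicit checks of the joint measurability of $\Psi$ and of the invariance of the law on $\F_\infty$ under conditioning on $\{\tau>t\}$ are slightly more careful versions of what the paper states in passing.
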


\begin{proof}
Under $\PP(\cdot\mid\tau>t)$ the residual $\eta=\tau_t-t$ has density
$\lambda e^{-\lambda u}$ and is independent of $\F_\infty$. Since $\piv$ is
$\filF$-predictable, the path $(\widetilde W_s)_{s\in[t,T]}$ is $\F_T$-measurable and the map $(s,\omega)\mapsto f(s,\widetilde W_s(\omega))$ is jointly measurable and
bounded. By conditioning on $\eta$, using the density $\lambda e^{-\lambda u}$, the independence of $\eta$ from $\mathcal F_T$, and Fubini’s theorem
\[
\E_{t,w}\!\left[f(t+\eta,\widetilde W_{t+\eta})\1_{\{\eta<T-t\}}\right]
=\int_0^{T-t}\lambda e^{-\lambda u}\,\E_{t,w}[f(t+u,\widetilde W_{t+u})]\,du.
\]
The change of variables $s=t+u$ gives the first identity. The second follows from
$\PP(\eta>T-t\mid \tau>t)=e^{-\lambda(T-t)}$ and the independence of $\eta$ from $\widetilde W_T$.
\end{proof}

\begin{lemma}\label{lem:shift}
Fix $0\le t<T$ and work on $\{\tau>t\}$. Set $\KK_T:=\F_T\vee\sigma(G)$ and let
$\sigma_\lambda\sim\mathrm{Exp}(\lambda)$ be independent of
$\F_\infty\vee\sigma(R,G)$. Let $(\psi(s))_{s\ge T}$ be a bounded and jointly
measurable process, with $\psi(s)$ being $\sigma(\F_s,G,R)$-measurable for each $s\ge T$.
Then, almost surely,
\[
\E_{t,w}\!\left[\1_{\{\tau_t>T\}}\,\psi(\tau_t)\,\middle|\,\KK_T\right]
=e^{-\lambda(T-t)}\,\E\!\left[\psi(T+\sigma_\lambda)\,\middle|\,\KK_T\right].
\]
\end{lemma}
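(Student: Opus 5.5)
The plan is to prove the identity by verifying it against an arbitrary generating event of $\KK_T$. Concretely, we show that for every bounded $\KK_T$-measurable $Z$,
\[
\E_{t,w}\bigl[Z\,\1_{\{\tau_t>T\}}\psi(\tau_t)\bigr]=e^{-\lambda(T-t)}\E_{t,w}\bigl[Z\,\E[\psi(T+\sigma_\lambda)\mid\KK_T]\bigr].
\]
Since the right-hand side of the claimed equality is $\KK_T$-measurable, this suffices. Write $\tau_t=t+\eta$. Under $\PP(\cdot\mid\tau>t)$, the variable $\eta$ is $\mathrm{Exp}(\lambda)$ and independent of $\F_\infty\vee\sigma(R,G)$. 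This is exactly the memorylessness and independence fact recorded in the conventions above.

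\textbf{Step 1: remove $\eta$ by Fubini.} The key observation is that $Z\,\psi(s)$ is $\F_\infty\vee\sigma(R,G)$-measurable for each $s\ge T$, so it is independent of $\eta$. The pair $(s,\omega)\mapsto Z\psi(s)$ is jointly measurable and bounded. Conditioning on $\eta$ and applying Fubini, as in the proof of Lemma~\ref{lem:disint}, gives
\[
\E_{t,w}\bigl[Z\,\1_{\{\eta>T-t\}}\psi(t+\eta)\bigr]=\int_{T-t}^\infty\lambda e^{-\lambda u}\,\E[Z\psi(t+u)]\,du.
\]
We then substitute $u=T-t+v$ to rewrite this as
\[
e^{-\lambda(T-t)}\int_0^\infty\lambda e^{-\lambda v}\E[Z\psi(T+v)]\,dv.
\]
The conditioning on $\{\tau>t\}$ can be dropped inside this integrand, because $Z\psi$ is independent of $\tau$.

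\textbf{Step 2: reassemble with $\sigma_\lambda$.} Since $\sigma_\lambda$ is independent of $\F_\infty\vee\sigma(R,G)$ with the same density, the same Fubini argument run in reverse gives
\[
\int_0^\infty\lambda e^{-\lambda v}\E[Z\psi(T+v)]\,dv=\E[Z\psi(T+\sigma_\lambda)]=\E\bigl[Z\,\E[\psi(T+\sigma_\lambda)\mid\KK_T]\bigr].
\]
The last equality is the tower property, using that $Z$ is $\KK_T$-measurable. The left-hand side of Step 1 is computed under $\E_{t,w}$ while this is an unconditional expectation, so we must check they agree. They do, because $Z\,\E[\psi(T+\sigma_\lambda)\mid\KK_T]$ is independent of $\{\tau>t\}$; it is in fact a function of $\F_\infty\vee\sigma(R,G,\sigma_\lambda)$, and we take $\sigma_\lambda$ independent of $\tau$. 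Combining Steps 1 and 2 proves the identity for every test $Z$, hence the almost sure equality.

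\textbf{Main obstacle.} The main difficulty is measure-theoretic, namely ensuring that $(s,\omega)\mapsto\psi(s)$ is jointly measurable so that $\psi(t+\eta)$ and $\psi(T+\sigma_\lambda)$ are random variables and Fubini applies. This is given by hypothesis. A second point is to confirm that $\sigma_\lambda$ can be placed on the canonical product space independently of everything else, including $\tau$, which we do by enlarging the product. Beyond these checks, the argument is a direct disintegration.
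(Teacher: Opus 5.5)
Your proposal is correct and follows the paper's argument. You write $\tau_t=t+\eta$, use the independence of $\eta$ from $\F_\infty\vee\sigma(R,G)$ together with Fubini, shift $u=(T-t)+v$, and reassemble with $\sigma_\lambda$; the only difference is that you verify the conditional identity by testing against bounded $\KK_T$-measurable $Z$ with ordinary Fubini, whereas the paper cites a conditional Fubini theorem directly. (In Step 2 you do not need the extra assumption that $\sigma_\lambda$ is independent of $\tau$: $Z\,\E[\psi(T+\sigma_\lambda)\mid\KK_T]$ is $\KK_T$-measurable, so it is already independent of $\{\tau>t\}$.)
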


\begin{proof}
On $\{\tau>t\}$ write $\tau_t=t+\eta$ with $\eta\sim\mathrm{Exp}(\lambda)$
independent of $\F_\infty\vee\sigma(R,G)\supseteq\KK_T$. Since $\{\tau>t\}$ is
independent of $\F_\infty\vee\sigma(R,G)$, for each $u\ge T-t$,
\[
\E_{t,w}[\psi(t+u)\mid\KK_T]=\E[\psi(t+u)\mid\KK_T].
\]
Hence, by the conditional version of Fubini's theorem,
\[
\E_{t,w}\!\left[\1_{\{t+\eta>T\}}\psi(t+\eta)\,\middle|\,\KK_T\right]
=\int_{T-t}^{\infty}\E[\psi(t+u)\mid\KK_T]\,\lambda e^{-\lambda u}\,du .
\]
Substituting $u=(T-t)+v$ gives
$\lambda e^{-\lambda u}=\lambda e^{-\lambda(T-t)}e^{-\lambda v}$ and $t+u=T+v$.
Therefore the right-hand side equals
\[
e^{-\lambda(T-t)}
\int_0^\infty \E[\psi(T+v)\mid\KK_T]\,\lambda e^{-\lambda v}\,dv
=
e^{-\lambda(T-t)}\E[\psi(T+\sigma_\lambda)\mid\KK_T],
\]
where the last equality again uses conditional Fubini and the independence of
$\sigma_\lambda$ from $\F_\infty\vee\sigma(R,G)$.
\end{proof}

\subsubsection{Alternative Representation of the Single Goal Value Functions}
Next, we will also need an alternative way to interpret the single goal value functions. Let $\beta$ be the coordinate process on $C([0,\infty);\R^d)$,
let $\Wien$ be Wiener measure, and let $\filB=(\BB_v)_{v\ge0}$ be the usual
augmentation of the canonical Brownian filtration. Let $\cU^K$ be the set of
$\Pi_K$-valued $\filB$-predictable processes on this canonical space. Thus each
$\nuv\in\cU^K$ is a predictable functional of a Brownian increment path. If $W$ is
any $d$-dimensional Brownian motion on any filtered probability space, we write
$\nuv(W)$ for the process obtained by evaluating this functional on the path of
$W$, and let $X^{x;\nuv}(W)$ solve
\[
dX_v=X_v\bigl(\nuv_v(W)^\top\thetavec\,dv+\nuv_v(W)^\top\sigmavec\,dW_v\bigr),
\qquad X_0=x .
\]
Define the following payoffs, with expectation under Wiener measure,
\[
P_D^{\nuv}(s,x):=\E_{\Wien}\!\left[F_G\!\left(X^{x;\nuv}_{T-s}(\beta)\right)\right],
\qquad
P_R^{\nuv}(x):=\E_{\Wien}\!\left[\int_0^\infty\lambda e^{-\lambda u}
F_R\!\left(X^{x;\nuv}_u(\beta)\right)du\right].
\]
Our next result relates these to $V^{D,K}$ and $V^{R,K}$.

\begin{lemma}\label{lem:fresh}
For all $(s,x)\in[0,T]\times[0,\infty)$,
\[
V^{D,K}(s,x)=\sup_{\nuv\in\cU^{K}}P_D^{\nuv}(s,x),
\qquad
V^{R,K}(x)=\sup_{\nuv\in\cU^{K}}P_R^{\nuv}(x).
\]
\end{lemma}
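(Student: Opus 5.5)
The plan is to prove each identity by two inequalities: a transfer of controls from the original probability space to the canonical space, and a transfer back. The transfer uses the fact that a predictable functional of a Brownian path produces strategies whose joint law with the wealth depends only on the law of the driving Brownian motion. I treat $V^{D,K}$ in detail; $V^{R,K}$ is identical with $T-s$ replaced by the infinite horizon and the Lebesgue integral against $\lambda e^{-\lambda u}$.

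\emph{Step 1 ($\sup_{\nuv}P_D^{\nuv}\le V^{D,K}$).} Fix $\nuv\in\cU^K$ and $s\in[0,T]$. On the original space, define the increment process $B^s_v:=B_{s+v}-B_s$ and set $\piv_{s+v}:=\nuv_v(B^s)$ for $v\in[0,T-s]$. Since $\nuv$ is $\filB$-predictable and $\sigma(B^s_r:r\le v)\subseteq\F_{s+v}$, the process $\piv$ is $\filF$-predictable on $[s,T]$ and $\Pi_K$-valued, so $\piv\in\mathcal A^{K,\rm pre}_{s,T}$. The wealth $\widetilde W^{s,x;\piv}_{s+v}$ and $X^{x;\nuv}_v(B^s)$ solve the same SDE driven by $B^s$. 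By pathwise uniqueness they coincide a.s., and in fact the solution is a measurable functional of the driving path (Yamada--Watanabe). Since $B^s$ has Wiener law, $\E[F_G(\widetilde W_T^{s,x;\piv})]=P_D^{\nuv}(s,x)$. Taking the supremum over $\nuv$ gives the inequality.

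\emph{Step 2 ($V^{D,K}\le\sup_{\nuv}P_D^{\nuv}$).} Fix $\piv\in\mathcal A^{K,\rm pre}_{s,T}$. The difficulty is that $\piv$ is predictable with respect to $\F_{s+\cdot}=\F_s\vee\BB^s_\cdot$, so it may depend on the pre-$s$ path. I would handle this by freezing the past. Using the canonical product realization, write $\omega=(\omega^{\le s},\beta)$, where $\beta=B^s$ is independent of $\F_s$ with Wiener law. For each fixed $\omega^{\le s}$, the section $\nuv^{\omega^{\le s}}_v(\beta):=\piv_{s+v}(\omega^{\le s},\beta)$ is $\filB$-predictable. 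This holds for raw predictable versions by a monotone class argument on the generating predictable rectangles, and holds for $\PP$-a.e. section after completion. Hence $\nuv^{\omega^{\le s}}\in\cU^K$ for a.e. $\omega^{\le s}$. Conditioning on $\F_s$ and using independence of $B^s$ from $\F_s$, the strong solution satisfies, for a.e. $\omega^{\le s}$,
\[
\E\bigl[F_G(\widetilde W_T^{s,x;\piv})\,\big|\,\F_s\bigr](\omega^{\le s})=P_D^{\nuv^{\omega^{\le s}}}(s,x)\le\sup_{\nuv\in\cU^K}P_D^{\nuv}(s,x).
\]
Taking expectations over $\omega^{\le s}$ gives the inequality.

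\emph{Main obstacle.} The technical crux is Step 2: the measurable-section statement that a.e. $\omega^{\le s}$-section of an $\F_{s+\cdot}$-predictable process is $\filB$-predictable, and the identification of the conditional wealth with the canonical solution for the frozen control. For the first part I would argue via monotone class, starting from indicators of $A\times(a,b]$ with $A\in\F_s\vee\BB^s_a$ and reducing to products $A_1\cap A_2$ with $A_1\in\F_s$ and $A_2\in\BB^s_a$. For the second part I would use that the stochastic integral of the sectioned integrand equals the section of the stochastic integral for a.e. section, which is a standard result for product-space Brownian filtrations. At $s=T$ both sides trivially equal $F_G(x)$.

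\emph{The $V^{R,K}$ case.} Here $s=0$ and $\F_0$ is trivial, so Step 2 needs no freezing: every $\piv\in\mathcal A^{K,\rm pre}_{0,\infty}$ is already a predictable functional of $B$. Both inequalities then follow from the law-invariance argument of Step 1 together with Fubini for the $\lambda e^{-\lambda u}\,du$ integral.
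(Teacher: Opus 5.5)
Your proposal is correct and follows essentially the same route as the paper. Step 1 is identical to the paper's (shift a canonical control by $B^s$ and use that $B^s$ has Wiener law). Your Step 2 freezes the $\F_s$-past through the product decomposition $\omega=(\omega^{\le s},\beta)$ and Fubini, which is the paper's regular-conditional-probability argument given $\F_s$ written out on the canonical space; your monotone-class sectioning supplies the ``standard predictable factorization fact'' that the paper invokes without proof, and your remark that the $V^{R,K}$ case needs no freezing because $s=0$ and $\F_0$ is trivial makes the paper's ``follows analogously'' explicit.
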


\begin{proof}
We prove the fixed-deadline identity. The random deadline identity follows analogously.

\emph{\underline{Step 1:}} We first prove the inequality $\sup_{\nuv\in\cU^K}P_D^{\nuv}(s,x)\le V^{D,K}(s,x)$. Consider the Brownian motion $B^s=(B_{s+v}-B_s)_{v\ge0}$.
For $\nuv\in\cU^{K}$ the shifted control
$\piv_u:=\nuv_{u-s}(B^{s})$, $u\in[s,T]$, is $\filF$-predictable and
$\Pi_K$-valued, so $\piv\in\cA^{K,\mathrm{pre}}_{s,T}$. Moreover, $\widetilde W_T^{s,x;\piv}
=
X^{x;\nuv}_{T-s}(B^s)$. Since $B^s$ has the Wiener
law,
\[
P_D^{\nuv}(s,x)
=
\E\!\left[
F_G\!\left(\widetilde W_T^{s,x;\piv}\right)
\right]
\le
V^{D,K}(s,x).
\]
Taking the supremum over $\nuv\in\cU^K$ gives the desired inequality.

\emph{\underline{Step 2:}} Conversely, we can show $V^{D,K}(s,x)\le\sup_{\nuv\in\cU^K}P_D^{\nuv}(s,x)$.
Fix $\piv\in\cA^{K,\mathrm{pre}}_{s,T}$. Let
$\PP^{\omega}:=\PP(\cdot\mid\F_s)(\omega)$ be a regular conditional
probability given $\F_s$, chosen on the underlying raw standard-Borel space (cf.~\cite[Thm.~8.5]{Kallenberg2021}).
Since $B^s$ is independent of $\F_s$, for $\PP$-a.e. $\omega$, under
$\PP^\omega$ the process $B^s$ is a Brownian motion and every set in $\F_s$
has $\PP^\omega$-probability $0$ or $1$.

By the standing raw-filtration convention, we choose a raw predictable version of
$\piv$. On the raw Brownian filtration one has
$\F_{s+v}=\F_s\vee\BB^s_v$ before completion, so the relative-time process
$v\mapsto\piv_{s+v}$ is predictable with respect to
$(\F_s\vee\BB^s_v)_{v\ge0}$. We use the standard predictable
factorization fact that, under a conditional law for which $\F_s$ is trivial,
any $\Pi_K$-valued
$(\F_s\vee\BB^s_v)_{v\ge0}$-predictable process is equal, up to
$dv\otimes d\PP^\omega$-null sets, to a canonical predictable functional
$\nuv^\omega(B^s)$ for some $\nuv^\omega\in\cU^K$. 

Therefore, for $\PP$-a.e. $\omega$, there exists $\nuv^\omega\in\cU^K$ such that
\[
\piv_{s+v}
=
\nuv^\omega_v(B^s)
\qquad
dv\otimes d\PP^\omega\text{-a.e. on }[0,T-s]\times\Omega .
\]
Under $\PP^\omega$, the two controls $\piv_{s+\cdot}$ and
$\nuv^\omega(B^s)$ agree $dv\otimes d\PP^\omega$-a.e. and are bounded. Hence
the corresponding drift integrals coincide, and the It\^o integrals coincide
by the It\^o isometry applied to the difference of the two bounded
integrands. Pathwise uniqueness for the wealth equation then gives that $\widetilde W_T^{s,x;\piv}
=
X^{x;\nuv^\omega}_{T-s}(B^s)$, $
\PP^\omega\text{-a.s.}$
Consequently,
\[
\E^{\PP^\omega}\!\left[
F_G\!\left(\widetilde W_T^{s,x;\piv}\right)
\right]
=
P_D^{\nuv^\omega}(s,x)
\le
\sup_{\nuv\in\cU^K}P_D^{\nuv}(s,x).
\]
Taking $\PP$-expectations and using
$\E[\,\cdot\,]=\E[\E^{\PP^\cdot}[\,\cdot\,]]$ yields $\E\!\left[
F_G\!\left(\widetilde W_T^{s,x;\piv}\right)
\right]
\le
\sup_{\nuv\in\cU^K}P_D^{\nuv}(s,x)$.
Taking the supremum over $\piv\in\cA^{K,\mathrm{pre}}_{s,T}$ gives $V^{D,K}(s,x)
\le
\sup_{\nuv\in\cU^K}P_D^{\nuv}(s,x)$.
\end{proof}

We can use these representations to prove two useful estimates.

\begin{lemma}\label{lem:condbd}
Let $S$ be a finite $\filG$-stopping time, let $\HH\subseteq\G$ be a
sub-$\sigma$-algebra such that $S$ is $\HH$-measurable and $B^{S}$ is a Brownian motion independent of
$\HH$, and let $X\ge0$ be $\HH$-measurable. Let $\piv$ be a $\Pi_K$-valued process
that is predictable with respect to $(\HH\vee\BB^{S}_v)_{v\ge0}$, and let $Z$ solve
\[
dZ_v=Z_v(\piv_v^\top\thetavec\,dv+\piv_v^\top\sigmavec\,dB^{S}_v),
\qquad Z_0=X .
\]
Then,
\begin{enumerate}
\item[(i)] for every $A\in\HH$ such that $S\le T$ on $A$,
\[
\1_A\,\E\!\left[F_G(Z_{(T-S)^+})\mid\HH\right]\le \1_A V^{D,K}(S\wedge T,X) \quad a.s.;
\]
\item[(ii)] if in addition $\sigma_\lambda\sim\mathrm{Exp}(\lambda)$ is independent of
$\HH\vee\sigma(B^{S})$, then
\[
\E\!\left[F_R(Z_{\sigma_\lambda})\mid\HH\right]\le V^{R,K}(X) \quad a.s.
\]
\end{enumerate}
\end{lemma}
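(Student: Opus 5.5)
The plan is to reduce both inequalities to Lemma~\ref{lem:fresh} by conditioning on $\HH$, in the same way as Step~2 of the proof of Lemma~\ref{lem:fresh}. Fix a regular conditional probability $\PP^\omega:=\PP(\cdot\mid\HH)(\omega)$ on the raw canonical space, as arranged in the conventions of this appendix. Because $B^S$ is a Brownian motion independent of $\HH$, for $\PP$-a.e.\ $\omega$ the following hold under $\PP^\omega$. First, $B^S$ is a Brownian motion. Second, every $\HH$-measurable variable is a.s.\ constant; in particular $S=S(\omega)$ and $X=X(\omega)$. Third, in part (ii), $\sigma_\lambda$ is still $\mathrm{Exp}(\lambda)$ and independent of $B^S$, because it is independent of $\HH\vee\sigma(B^S)$.

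Next I would invoke the predictable factorization fact already used in Lemma~\ref{lem:fresh}. When $\HH$ is trivial under $\PP^\omega$, any $\Pi_K$-valued $(\HH\vee\BB^S_v)$-predictable process agrees $dv\otimes d\PP^\omega$-a.e.\ with $\nuv^\omega(B^S)$ for some $\nuv^\omega\in\cU^K$. The two controls agree a.e.\ and are bounded, so the drift integrals coincide and, by the It\^o isometry, so do the stochastic integrals. Pathwise uniqueness then gives $Z=X^{X(\omega);\nuv^\omega}(B^S)$ $\PP^\omega$-a.s.

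For (i), take $\omega\in A$, so that $S(\omega)\le T$. Since $B^S$ has Wiener law under $\PP^\omega$,
\[
\E^{\PP^\omega}\!\left[F_G\bigl(Z_{T-S(\omega)}\bigr)\right]=P_D^{\nuv^\omega}\bigl(S(\omega),X(\omega)\bigr)\le V^{D,K}\bigl(S(\omega),X(\omega)\bigr),
\]
where the inequality is Lemma~\ref{lem:fresh}. Multiplying by $\1_A$ and identifying $\E^{\PP^\omega}$ with $\E[\cdot\mid\HH](\omega)$ gives (i).

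For (ii), under $\PP^\omega$ the time $\sigma_\lambda$ is independent of the path $Z$. Conditional Fubini then gives $\E^{\PP^\omega}[F_R(Z_{\sigma_\lambda})]=P_R^{\nuv^\omega}(X(\omega))$, and Lemma~\ref{lem:fresh} bounds this by $V^{R,K}(X(\omega))$.

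Measurability of the right-hand sides is clear, since $V^{D,K}$ and $V^{R,K}$ are continuous. The main obstacle I expect is justifying the factorization step and the regular conditional law with a nontrivial $\HH$ and a random start $S$, rather than the deterministic $\F_s$ of Lemma~\ref{lem:fresh}. I would handle it through the canonical product realization, where everything is generated by $\HH$ and the independent path $B^S$. Then $\piv_v(\omega',\cdot)=\nuv(h,B^S)$ for a jointly measurable functional, and freezing $h=\HH(\omega)$ gives $\nuv^\omega$.
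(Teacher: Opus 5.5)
Your proposal is correct and follows essentially the paper's own argument: condition on $\HH$ through a regular conditional probability $\PP^\omega$, freeze $S$ and $X$, identify the control with $\nuv^\omega(B^S)$ via the predictable factorization fact, use pathwise uniqueness to get $Z=X^{X(\omega);\nuv^\omega}(B^S)$ under $\PP^\omega$, and conclude (i) and (ii) from Lemma~\ref{lem:fresh}, noting for (ii) that $\sigma_\lambda$ remains $\mathrm{Exp}(\lambda)$ and independent of $B^S$ under $\PP^\omega$. Your remark on obtaining $\nuv^\omega$ by freezing the $\HH$-coordinate of a jointly measurable functional fills in a step that the paper only cites as a standard fact.
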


\begin{proof}
Let $\PP^{\omega}:=\PP(\cdot\mid\HH)(\omega)$ be a regular conditional probability
given $\HH$ \cite[Thm.~8.5]{Kallenberg2021}. Exactly as in the proof of
Lemma~\ref{lem:fresh}, for $\PP$-a.e.\ $\omega$: under $\PP^{\omega}$ the process
$B^{S}$ is a Brownian motion, every $\HH$-set is
$\PP^{\omega}$-trivial, $S(\omega)$ and $X(\omega)$ equal constants $s$ and $x$, and
$\piv$ is equal to $\nuv^{\omega}(B^{S})$ for
$dv\otimes d\PP^{\omega}$-a.e. $(v,\omega')$, for some
$\nuv^{\omega}\in\cU^{K}$. Since two controls are equal
$dv\otimes d\PP^{\omega}$-a.e.\ and bounded, their drift terms and It\^o integrals once again coincide under $\PP^{\omega}$. Pathwise uniqueness for the SDE
gives $Z=X^{x;\nuv^{\omega}}(B^{S})$ under $\PP^{\omega}$. If
$\omega\in A$, then $s=S(\omega)\le T$, so $(T-S)^+=T-s$, and
Lemma~\ref{lem:fresh} gives
\[
\E^{\PP^{\omega}}\!\left[F_G(Z_{(T-S)^+})\right]=P_D^{\nuv^{\omega}}(s,x)\le V^{D,K}(s,x).
\]
Since $A\in\HH$, this proves part (i) after multiplying by $\1_A$. For (ii),
$\sigma_\lambda$ is independent of $\HH\vee\sigma(B^{S})$ implies that under $\PP^{\omega}$ the
variable $\sigma_\lambda$ is $\mathrm{Exp}(\lambda)$ and independent of $B^{S}$,
whence
$\E^{\PP^{\omega}}[F_R(Z_{\sigma_\lambda})]=P_R^{\nuv^{\omega}}(x)\le V^{R,K}(x)$ by
Lemma~\ref{lem:fresh}.
\end{proof}

\subsubsection{Finite Patch Approximation and Predictable Pasting of Controls}
Our last auxiliary results builds a finite approximation of optimal controls.

\begin{lemma}[Finite-patch approximation]\label{lem:patch}
For every $\varepsilon>0$:
\begin{enumerate}
\item[(i)] there exist a finite Borel partition $A_1,\dots,A_N$ of
$[0,T]\times[0,b_G]$ and controls $\nuv^{1},\dots,\nuv^{N}\in\cU^{K}$ such that
$P_D^{\nuv^{i}}(s,x)\ge V^{D,K}(s,x)-\varepsilon$ whenever $(s,x)\in A_i$;
\item[(ii)] there exist a finite Borel partition $B_1,\dots,B_M$ of $[0,b_R]$ and
controls $\zetav^{1},\dots,\zetav^{M}\in\cU^{K}$ such that
$P_R^{\zetav^{j}}(y)\ge V^{R,K}(y)-\varepsilon$ whenever $y\in B_j$.
\end{enumerate}
For $x\ge b_G$ in \textup{(a)} or $y\ge b_R$ in \textup{(b)}, the zero control is
exactly optimal, attaining the value $1$.
\end{lemma}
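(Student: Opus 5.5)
The plan is to combine uniform continuity of the value functions with continuity of the payoff maps $\nuv\mapsto P_D^{\nuv}$ in the initial condition, uniformly over $\nuv\in\cU^K$, and then use compactness to extract a finite cover that we disjointify. We treat part (i) in detail; part (ii) is identical, with the stationary payoff $P_R$ and Lemma~\ref{lem:RD-continuity} in place of $P_D$ and Lemma~\ref{lem:FD-continuity}. The closing remark is immediate: for $x\ge b_G$ the zero control keeps $X\equiv x\ge b_G$, so $F_G(X_{T-s})=1=V^{D,K}(s,x)$. The same holds for $y\ge b_R$.

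The key estimate is a uniform modulus. We claim there is a modulus $\omega$ with
\[
\sup_{\nuv\in\cU^K}\bigl|P_D^{\nuv}(s,x)-P_D^{\nuv}(s',x')\bigr|\le\omega\bigl(|s-s'|+|x-x'|\bigr)
\]
on $[0,T]\times[0,b_G]$. For the spatial variable we use linearity of the SDE: $X^{x;\nuv}=x\,\mathcal E^{\nuv}$, where the stochastic exponential $\mathcal E^{\nuv}$ does not depend on $x$. Its second moments at horizon $T$ are bounded uniformly in $\nuv$ because $\Pi_K$ is bounded. Combining this with the uniform continuity of $F_G$ (modulus $\omega_G$) and Chebyshev, exactly as in the proof of Lemma~\ref{lem:RD-continuity}, gives a bound of the form $\omega_G(\eta)+C|x-x'|^2/\eta^2$.

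The time variable needs more care. The horizons $T-s$ and $T-s'$ differ, so we must compare $X_{T-s}$ with $X_{T-s'}$ for the same $\nuv$. We use $\E|X_{u}-X_{u'}|^2\le C|u-u'|$, which holds uniformly in $\nuv$ and in $x\le b_G$ by bounded coefficients and the Burkholder--Davis--Gundy inequality. Then $\omega_G$ and Chebyshev again give the bound. This time estimate is the step to check most carefully, but it is routine given the compact control set.

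With the modulus in hand, fix $\varepsilon>0$ and choose $\delta>0$ such that $\omega(\delta)<\varepsilon/3$ and such that $V^{D,K}$ oscillates by less than $\varepsilon/3$ on $\delta$-balls. The latter is possible because $V^{D,K}$ is uniformly continuous on the compact set $[0,T]\times[0,b_G]$ by Lemma~\ref{lem:FD-continuity}. Cover this compact set by finitely many $\delta$-balls with centers $(s_i,x_i)$. For each center pick $\nuv^i\in\cU^K$ with $P_D^{\nuv^i}(s_i,x_i)\ge V^{D,K}(s_i,x_i)-\varepsilon/3$, which exists by Lemma~\ref{lem:fresh}. Disjointify the cover by setting $A_i:=\mathrm{Ball}_i\setminus\bigcup_{j<i}\mathrm{Ball}_j$, intersected with the domain. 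These are Borel sets forming a partition. For $(s,x)\in A_i$ we chain the three estimates:
\[
P_D^{\nuv^i}(s,x)\ge P_D^{\nuv^i}(s_i,x_i)-\tfrac{\varepsilon}{3}\ge V^{D,K}(s_i,x_i)-\tfrac{2\varepsilon}{3}\ge V^{D,K}(s,x)-\varepsilon.
\]
Empty $A_i$ can be discarded.

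The main obstacle is the uniform-in-$\nuv$ continuity of $P_D^{\nuv}$ in the time variable. If the moment estimate proved awkward, the fallback is to partition time only finely enough that $V^{D,K}$ is nearly constant across each slab. On each slab we would then use the control $\nuv^i$ shifted to act from the left endpoint, with the time mismatch absorbed by the Hölder bound above.
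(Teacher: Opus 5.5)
Your argument is correct, but it takes a heavier route than the paper.

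\textbf{The paper's route.} The paper never needs a modulus that is uniform over $\nuv\in\cU^K$. It only uses that, for each \emph{fixed} $\nuv$, the map
$(s,x)\mapsto P_D^{\nuv}(s,x)=\E_{\Wien}[F_G(x\,\mathcal E^{\nuv}_{T-s})]$
is continuous. This follows at once from pathwise continuity of $(s,x)\mapsto x\,\mathcal E^{\nuv}_{T-s}$ and bounded convergence. The paper then:
\begin{enumerate}
\item picks an $\varepsilon/3$-optimal control $\nuv^{s,x}$ at every point $(s,x)$;
\item notes that the continuous function $V^{D,K}-P_D^{\nuv^{s,x}}$ stays below $\varepsilon$ on a neighborhood of $(s,x)$, whose size may depend on the chosen control;
\item extracts a finite subcover by compactness and disjointifies it.
\end{enumerate}

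\textbf{Your route.} You instead prove equicontinuity of the whole family $\{P_D^{\nuv}\}_{\nuv\in\cU^K}$. This uses the uniform second-moment bound for $\mathcal E^{\nuv}$, the increment estimate $\E|X_u-X_{u'}|^2\le C|u-u'|$ (It\^o's isometry already suffices; Burkholder--Davis--Gundy is not needed), and Chebyshev. You then combine this with uniform continuity of $V^{D,K}$ on the compact set.

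\textbf{Trade-off.} Your approach yields a partition that can be fixed in advance as a grid of mesh $\delta(\varepsilon)$, independent of which near-optimal controls are selected. The cost is the moment estimates. The time-variable step you flag as the main obstacle, and your fallback for it, are unnecessary: the paper's argument never needs any uniformity in $\nuv$.

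\textbf{A small point for (ii).} The second moments of $\mathcal E^{\nuv}_u$ grow exponentially in $u$. So the uniform spatial modulus for $P_R^{\nuv}$ cannot be a verbatim copy of the finite-horizon estimate. It requires truncation at a finite horizon $N$, with tail bounded by $e^{-\lambda N}$, exactly as in Lemma~\ref{lem:RD-continuity}. That lemma's bound is in fact already taken with a supremum over controls, so it gives you the equicontinuity in (ii) directly.
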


\begin{proof}
We first record two continuity facts. By scaling, the solution of \eqref{eq:prepay}
satisfies $X^{x;\nuv}_v(\beta)=x\,\mathcal E^{\nuv}_v$ with
$\mathcal E^{\nuv}_v:=\exp\!\bigl(\int_0^v(\nuv_r^\top\thetavec-\tfrac12\nuv_r^\top\Sigmav\nuv_r)\,dr+\int_0^v\nuv_r^\top\sigmavec\,d\beta_r\bigr)$,
which has continuous paths and does not depend on $x$. Hence, for fixed
$\nuv\in\cU^{K}$, $(s,x)\mapsto x\,\mathcal E^{\nuv}_{T-s}$ is a.s.\ jointly
continuous, and since $F_G\in[0,1]$ is continuous, bounded convergence gives that
$(s,x)\mapsto P_D^{\nuv}(s,x)=\E_{\Wien}[F_G(x\,\mathcal E^{\nuv}_{T-s})]$ is continuous on
$[0,T]\times[0,\infty)$. Likewise, for fixed $\zetav\in\cU^{K}$ and $x\to x'$,
$F_R(x\,\mathcal E^{\zetav}_u)\to F_R(x'\,\mathcal E^{\zetav}_u)$ for every $(u,\omega)$
by continuity of $F_R$. Since $F_R\in[0,1]$ dominated convergence on $\lambda e^{-\lambda u}\,du\otimes d\Wien$ shows $x\mapsto P_R^{\zetav}(x)$ is
continuous on $[0,\infty)$.

For (a): fix $\varepsilon>0$. For each $(s,x)\in[0,T]\times[0,b_G]$,
Lemma~\ref{lem:fresh} provides $\nuv^{s,x}\in\cU^{K}$ with
$P_D^{\nuv^{s,x}}(s,x)\ge V^{D,K}(s,x)-\varepsilon/3$. The function
$(s',x')\mapsto V^{D,K}(s',x')-P_D^{\nuv^{s,x}}(s',x')$ is continuous and is
$\le\varepsilon/3$ at $(s,x)$, so it is $<\varepsilon$ on a relatively open neighborhood
$U_{s,x}$ of $(s,x)$. Therefore, on $U_{s,x}$, $P_D^{\nuv^{s,x}}\ge V^{D,K}-\varepsilon$. These
neighborhoods cover the compact set $[0,T]\times[0,b_G]$ and so we can extract a finite subcover
$U_{s_1,x_1},\dots,U_{s_N,x_N}$. From this we can construct a disjoint cover so that we arrive at a Borel partition
$A_1,\dots,A_N$ with $A_i\subseteq U_{s_i,x_i}$ and controls $\nuv^{i}:=\nuv^{s_i,x_i}$.
Part (b) is identical on the compact interval $[0,b_R]$. Finally, under the zero
control wealth is constant, so if $x\ge b_G$ then $F_G(x)=1=V^{D,K}(s,x)$ and if
$y\ge b_R$ then $F_R(y)=1=V^{R,K}(y)$.
\end{proof}

\begin{remark}\label{rem:pred.pasting}
In the main proof of Propostion~\ref{prop:reduction} we will use the following standard predictable-pasting fact (which follows from a monotone class argument) without proof. Let $S$ be a
$\filG$-stopping time, let $C_1,\dots,C_m$ be a finite
$\G_S$-measurable partition, and let
$\nuv^1,\dots,\nuv^m\in\cU^K$. Then the process
\[
\Pi_r
:=
\sum_{k=1}^m
\1_{C_k}\,
\nuv^k_{(r-S)^+}(B^S)\,
\1_{\{S<r\}},
\qquad r\ge0,
\]
is $\filG$-predictable and $\Pi_K$-valued. More generally, finite
concatenations of such shifted controls at $\filG$-stopping times remain
admissible.
\end{remark}

\subsection{Main Proof}

We are now ready to complete the proof of the proposition. 

\begin{proof}[Proof of Proposition~\ref{prop:reduction}]
Fix $w\ge0$ and condition on $\{\tau>t\}$. Write
$P:=\alpha_R I_R+\alpha_D I_D$, where
$I_R:=\1_{\{\widehat W_{\tau_t-}^{t,w;\piv}\ge R\}}$ and
$I_D:=\1_{\{\widehat W_{T-}^{t,w;\piv}\ge G\}}$. Since
$\PP(\tau_t=T\mid\tau>t)=0$,
\[
P=P\,\1_{\{\tau_t<T\}}+P\,\1_{\{\tau_t>T\}}\qquad\text{a.s.}
\]

\emph{\textbf{\underline{Step 1:}} (upper bound)}
Fix an arbitrary $\piv\in\cA^{K}_{t,\infty}$. By Lemma~\ref{lem:beforeT-red} there is
$\piv^{0}\in\cA^{K,\mathrm{pre}}_{t,T}$ such that
\[
\piv_s=\piv^0_s
\quad\text{for }ds\otimes d\PP\text{-a.e. }(s,\omega)\text{ with }
 t\le s<\tau_t\wedge T .
\]
Likewise, writing $\widehat W=\widehat W^{t,w;\piv}$ and
$\widetilde W=\widetilde W^{t,w;\piv^0}$, then almost surely
$\widehat W_s=\widetilde W_s$ for $s\in[t,\tau_t\wedge T)$, and
\[
\widehat W_{\tau_t-}=\widetilde W_{\tau_t}\quad\text{on }\{\tau_t<T\},
\qquad
\widehat W_{T-}=\widetilde W_T\quad\text{on }\{\tau>T\}.
\]

\emph{\underline{Case (i):} $(\tau_t<T)$} On $\{\tau_t<T\}$ we have $\widehat W_{\tau_t-}=\widetilde W_{\tau_t}$,
hence $I_R\1_{\{\tau_t<T\}}=\1_{\{\widetilde W_{\tau_t}\ge R\}}\1_{\{\tau_t<T\}}$, and
the post-payment wealth is $\Gamma_R(\widetilde W_{\tau_t})$, which is
$\G_{\tau_t}$-measurable. The reward $\alpha_R I_R\1_{\{\tau_t<T\}}$ is also
$\G_{\tau_t}$-measurable. For $\alpha_D I_D$: after $\tau_t$ the continuation control
$\piv|_{[\tau_t,T]}$ is $\filG$-predictable. On the stochastic interval
$\llbracket \tau_t,T\llbracket$, no further non-Brownian information is revealed. More precisely, for each $v\ge0$,
\[
\G_{\tau_t+v}|_{\{\tau_t+v<T\}}
=
\bigl(\G_{\tau_t}\vee\BB^{\tau_t}_v\bigr)|_{\{\tau_t+v<T\}},
\qquad
\BB^{\tau_t}_v:=\sigma(B_{\tau_t+r}-B_{\tau_t}:0\le r\le v),
\]
up to completion. %
A standard argument gives that the shifted control
\[
\piv^{\tau_t}_v
:=
\piv_{\tau_t+v}\mathbf 1_{\{0<v<T-\tau_t\}},
\qquad v\ge0,
\]
is predictable with respect to
$(\G_{\tau_t}\vee\BB^{\tau_t}_v)_{v\ge0}$. Consequently, on $\{\tau_t<T\}$, the
continuation wealth $\widehat W_{T-}$ is
$\F_T\vee\sigma(\tau_t,R)$-measurable.

Since $G$ is independent of $\F_\infty\vee\sigma(\tau,R)$,
$\E[I_D\mid\F_T\vee\sigma(\tau,R)]=F_G(\widehat W_{T-})$, and Lemma~\ref{lem:condbd}(i).
gives
\[
\E[I_D\mid\G_{\tau_t}]\,\1_{\{\tau_t<T\}}
=\E\!\left[F_G(\widehat W_{T-})\mid\G_{\tau_t}\right]\1_{\{\tau_t<T\}}
\le V^{D,K}\!\left(\tau_t,\Gamma_R(\widetilde W_{\tau_t})\right)\1_{\{\tau_t<T\}}.
\]
Therefore
\[
\E\!\left[P\,\1_{\{\tau_t<T\}}\,\middle|\,\G_{\tau_t}\right]
\le\Bigl(\alpha_R\1_{\{\widetilde W_{\tau_t}\ge R\}}
+\alpha_D V^{D,K}(\tau_t,\Gamma_R(\widetilde W_{\tau_t}))\Bigr)\1_{\{\tau_t<T\}}.
\]
Take $\E_{t,w}$ and condition the right-hand side on $\F_\infty\vee\sigma(\tau_t)$;
since $R$ is independent of $\F_\infty\vee\sigma(\tau_t)$ and $\widetilde W_{\tau_t}$ is
$\F_\infty\vee\sigma(\tau_t)$-measurable, the conditional average over $R$ is
$\mathcal J^K(\tau_t,\widetilde W_{\tau_t})$ by \eqref{eq:JK}. Lemma~\ref{lem:disint}
then yields
\begin{equation}\label{eq:ub-early}
\E_{t,w}\!\left[P\,\1_{\{\tau_t<T\}}\right]
\le\int_t^T\lambda e^{-\lambda(s-t)}\,\E_{t,w}\!\left[\mathcal J^K(s,\widetilde W_s)\right]ds .
\end{equation}

\emph{\underline{Case (ii):} ($\tau_t>T$)} On $\{\tau_t>T\}$ we have $\widehat W_{T-}=\widetilde W_T$,
hence $I_D\1_{\{\tau_t>T\}}=\1_{\{\widetilde W_T\ge G\}}\1_{\{\tau_t>T\}}$ and the
post-payment wealth is $\Gamma_G(\widetilde W_T)$, which is
$\KK_T:=\F_T\vee\sigma(G)$-measurable. For the random reward, apply
Lemma~\ref{lem:postT-red} to $\piv|_{[T,\infty)}$: there is an $\FG$-predictable
$\piv^{1}$ equal to $\piv$ for $ds\otimes d\PP$-a.e.\ $(s,\omega)$ with
$T\le s<\tau$. Let $Z$ be the wealth process without any payments on
$[T,\infty)$ started from $\Gamma_G(\widetilde W_T)$ and driven by $\piv^1$ and
$B^T$, i.e.
\[
 dZ_v=Z_v\bigl((\piv^1_{T+v})^\top\thetavec\,dv+(\piv^1_{T+v})^\top\sigmavec\,dB^T_v\bigr),
 \qquad Z_0=\Gamma_G(\widetilde W_T).
\]
On $\{\tau_t>T\}$, the actual post-$T$ wealth before the random deadline coincides
with $Z$ up to time $\tau_t-T$, so $I_R=\1_{\{Z_{\tau_t-T}\ge R\}}$. Since $Z_v$ is
$\F_{T+v}\vee\sigma(G)$-measurable, the process
$\psi(s):=\1_{\{Z_{s-T}\ge R\}}$ is bounded, jointly measurable, and
$\sigma(\F_s,G,R)$-measurable for every $s\ge T$. By
Lemma~\ref{lem:shift},
\[
\E\!\left[\1_{\{\tau_t>T\}}I_R\,\middle|\,\KK_T\right]
=\E\!\left[\1_{\{\tau_t>T\}}\psi(\tau_t)\,\middle|\,\KK_T\right]
=e^{-\lambda(T-t)}\,\E\!\left[\1_{\{Z_{\sigma_\lambda}\ge R\}}\,\middle|\,\KK_T\right],
\]
with $\sigma_\lambda\sim\mathrm{Exp}(\lambda)$ independent of
$\F_\infty\vee\sigma(R,G)$. Since $R$ is independent of $\KK_T\vee\sigma(B^{T})\vee\sigma(\sigma_\lambda)$,
the last expectation is $\E[F_R(Z_{\sigma_\lambda})\mid\KK_T]$, and
Lemma~\ref{lem:condbd}(ii)
gives
$\E[F_R(Z_{\sigma_\lambda})\mid\KK_T]\le V^{R,K}(\Gamma_G(\widetilde W_T))$. Hence
\[
\E\!\left[\alpha_R I_R\,\1_{\{\tau_t>T\}}\,\middle|\,\KK_T\right]
\le\alpha_R e^{-\lambda(T-t)}\,V^{R,K}\!\left(\Gamma_G(\widetilde W_T)\right).
\]
For the fixed reward, $\1_{\{\widetilde W_T\ge G\}}$ is $\KK_T$-measurable and
$\tau_t$ is independent of $\KK_T$ under $\PP(\cdot\mid\tau>t)$, so
$\E[\alpha_D I_D\1_{\{\tau_t>T\}}\mid\KK_T]=\alpha_D e^{-\lambda(T-t)}\1_{\{\widetilde W_T\ge G\}}$.
Adding, taking $\E_{t,w}$, and averaging over $G$ (conditioning on $\F_\infty$,
since $G$ is independent of $\F_\infty$) we obtain (by 
\eqref{eq:TK}),
\begin{equation}\label{eq:ub-survival}
\E_{t,w}\!\left[P\,\1_{\{\tau_t>T\}}\right]
\le e^{-\lambda(T-t)}\,\E_{t,w}\!\left[\mathcal T^K(\widetilde W_T)\right].
\end{equation}
Adding \eqref{eq:ub-early} and \eqref{eq:ub-survival} bounds the payoff of $\piv$ by
the functional in \eqref{eq:Vtilde} evaluated at $\piv^{0}\in\cA^{K,\mathrm{pre}}_{t,T}$.
Taking the supremum over $\piv\in\cA^{K}_{t,\infty}$ gives
$V^{R,D,K}(t,w)\le\widetilde V^{R,D,K}(t,w)$.

\emph{\textbf{\underline{Step 2:}} (lower bound)} Fix $\piv\in\cA^{K,\mathrm{pre}}_{t,T}$, write $\widetilde W=\widetilde W^{t,w;\piv}$,
fix $\varepsilon>0$, and take the finite patches of Lemma~\ref{lem:patch}. Define a
full strategy $\piv^{\varepsilon}$ as follows. First, use $\piv$ on $[t,\tau_t\wedge T)$. Then,
on $\{\tau_t=s<T\}$ set $X_s:=\Gamma_R(\widetilde W_s)$ and, on $[s,T]$, use the zero
control if $X_s\ge b_G$ and otherwise use the shifted control $\nuv^{i}(B^{\tau_t})$
when $(s,X_s)\in A_i$. After $T$ on this branch set the control equal to zero. On
$\{\tau_t>T\}$ set $Y_T:=\Gamma_G(\widetilde W_T)$ and, on $[T,\tau_t)$, use the zero
control if $Y_T\ge b_R$ and otherwise use the shifted control $\zetav^{j}(B^{T})$ when
$Y_T\in B_j$. After $\tau_t$ set the control equal to zero. The switching events are
$\G_{\tau_t}$- and $\G_T$-measurable and occur at the $\filG$-stopping times
$\tau_t$ and $T$. By Remark~\ref{rem:pred.pasting}, $\piv^{\varepsilon}\in\cA^{K}_{t,\infty}$,
and its pre-payment wealth coincides with $\widetilde W$ on $[t,\tau_t\wedge T)$.

\emph{\underline{Case (i):} ($\tau_t<T$)} On $\{\tau_t=s<T\}$ the continuation wealth at $T$ is
$\widehat W_{T-}=X^{X_s;\nuv^{i}}_{T-s}(B^{\tau_t})$ on $\{(s,X_s)\in A_i\}$ (resp.\
$X_s$ under the zero control). Because $G$ is independent of $\F_\infty\vee\sigma(\tau,R)$,
$\nuv^{i}$ is a fixed functional of $B^{\tau_t}$, $B^{\tau_t}$ is independent of $\G_{\tau_t}$, and
$(s,X_s)$ is $\G_{\tau_t}$-measurable, freezing the $\G_{\tau_t}$-measurable
parameters and integrating the independent increments gives
\[
\E[I_D\mid\G_{\tau_t}]\,\1_{\{(s,X_s)\in A_i\}}=P_D^{\nuv^{i}}(s,X_s)\,\1_{\{(s,X_s)\in A_i\}}
\ge\bigl(V^{D,K}(s,X_s)-\varepsilon\bigr)\1_{\{(s,X_s)\in A_i\}},
\]
while on $\{X_s\ge b_G\}$, $\E[I_D\mid\G_{\tau_t}]=1=V^{D,K}(s,X_s)$.
In all cases 
\[
\E[I_D\mid\G_{\tau_t}]\1_{\{\tau_t<T\}}\ge(V^{D,K}(\tau_t,\Gamma_R(\widetilde W_{\tau_t}))-\varepsilon)\1_{\{\tau_t<T\}}.
\]
Proceeding as in Step 1 (conditioning on
$\F_\infty\vee\sigma(\tau_t)$ to average over $R$, then Lemma~\ref{lem:disint}),
\[
\E_{t,w}\!\left[P\,\1_{\{\tau_t<T\}}\right]
\ge\int_t^T\lambda e^{-\lambda(s-t)}\,\E_{t,w}\!\left[\mathcal J^K(s,\widetilde W_s)-\alpha_D\varepsilon\right]ds .
\]

\emph{\underline{Case (ii):} ($\tau_t>T$)} On $\{\tau_t>T\}$ the control $\zetav^{j}$ (resp.\ the zero
control) is $\filB$-predictable, so the continuation wealth
$Z_v=X^{Y_T;\zetav^{j}}_v(B^{T})$ is $\FG$-adapted and $\psi(s):=\1_{\{Z_{s-T}\ge R\}}$
is $\sigma(\F_s,G,R)$-measurable. By Lemma~\ref{lem:shift}, the independence of $R$ from
$\KK_T\vee\sigma(B^T)\vee\sigma(\sigma_\lambda)$, and an application of the same
argument on $\{Y_T\in B_j\}$ where the $\KK_T$-measurable values are frozen and the independent randomness is integrated out,
\[
\E\!\left[\1_{\{\tau_t>T\}}I_R\,\middle|\,\KK_T\right]
=e^{-\lambda(T-t)}\E[F_R(Z_{\sigma_\lambda})\mid\KK_T]
=e^{-\lambda(T-t)}P_R^{\zetav^{j}}(Y_T)
\ge e^{-\lambda(T-t)}\bigl(V^{R,K}(Y_T)-\varepsilon\bigr),
\]
while on $\{Y_T\ge b_R\}$, $\E[F_R(Z_{\sigma_\lambda})\mid\KK_T]=F_R(Y_T)=1=V^{R,K}(Y_T)$.
The fixed reward contributes $\alpha_D e^{-\lambda(T-t)}\1_{\{\widetilde W_T\ge G\}}$ as
in Step 1. Adding, taking $\E_{t,w}$ and averaging over $G$,
\[
\E_{t,w}\!\left[P\,\1_{\{\tau_t>T\}}\right]
\ge e^{-\lambda(T-t)}\,\E_{t,w}\!\left[\mathcal T^K(\widetilde W_T)-\alpha_R\varepsilon\right].
\]
Adding the two estimates and writing
\[
C_{t,T}:=\alpha_D\!\int_t^T\!\lambda e^{-\lambda(s-t)}\,ds+\alpha_R e^{-\lambda(T-t)}
=\alpha_D+(\alpha_R-\alpha_D)e^{-\lambda(T-t)}\le\alpha_D+\alpha_R=1,
\]
we obtain
\[
V^{R,D,K}(t,w)\ge
\E_{t,w}\!\left[\int_t^T\lambda e^{-\lambda(s-t)}\mathcal J^K(s,\widetilde W_s)\,ds
+e^{-\lambda(T-t)}\mathcal T^K(\widetilde W_T)\right]-\varepsilon\,C_{t,T},
\]
so the functional in \eqref{eq:Vtilde} evaluated at $\piv$ is at most
$V^{R,D,K}(t,w)+\varepsilon$. Letting $\varepsilon\downarrow0$ and taking the
supremum over $\piv\in\cA^{K,\mathrm{pre}}_{t,T}$ gives
$\widetilde V^{R,D,K}(t,w)\le V^{R,D,K}(t,w)$. Taking this together with Step 1 completes the proof of
Proposition~\ref{prop:reduction}.
\end{proof}

\section{Sensitivity to Calibration Parameters}\label{app:senscalibr}

\subsection{Forced Funding Sensitivity Analysis}
We show that the qualitative findings of the growth crowding-out effect and the deadline pressure effect from Section~\ref{sec:baseline-sensitivity} are robust to variations in goal dispersion, goal amount specification, college saving horizons, and preference weights. Figures~\ref{fig:calibration-retirement-sigma}
and~\ref{fig:calibration-college-sigma} vary the dispersion $\sigma_G$ of the fixed-deadline goal distribution across retirement and college calibrations respectively. Larger $\sigma_G$ fattens the right tail of $G$, raising the
worst-case goal realization and shifting value curves rightward. The policy response reflects two opposing forces: higher $\sigma_G$ expands the safe level $b_G$, keeping $\pi^*$ elevated over a wider wealth range, but also places more mass on low realizations of $G$ that are achievable even from modest wealth, partially offsetting the downward pressure
on value.

Figure~\ref{fig:random-goal-spec-public-college-highvar} replaces the deterministic random goal $R$ with a truncated lognormal of the same mean and standard deviation $\$15,499.06$. The value and policy curves are nearly indistinguishable from the deterministic benchmark, showing that the model's main findings are not sensitive to the specification of the random-goal distribution. The only visible difference is a marginally wider hump in the optimal policy in the intermediate wealth region, arising because some lognormal realizations of $R$ exceed the deterministic benchmark and require a slightly wider hedging band.

Figures~\ref{fig:calibration-college-public-horizon} and~\ref{fig:calibration-college-private-horizon} vary the saving horizon $T$. The
deadline pressure effect documented in Section~\ref{sec:baseline-sensitivity} is present here as well: shorter horizons shift
value curves rightward and force the household to maintain peak risk-taking over a wider wealth range before spiking sharply near the goal threshold. When the fixed-deadline goal is calibrated to the private college target of \$262{,}000, the household requires
substantially more wealth to achieve any given success probability than under the public college target of \$124{,}000, but the deadline pressure effect operates identically across both goal amounts.

\begin{figure}[H]
\centering
\includegraphics[width=\textwidth]{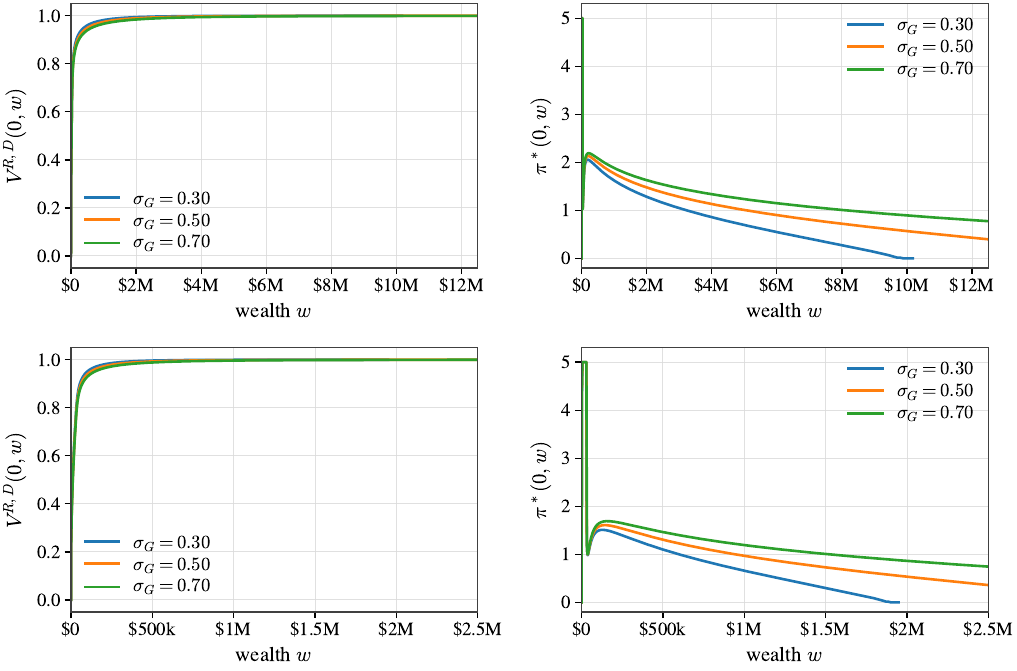}
\caption{Sensitivity of the initial two-goal value function $V^{R,D}(0,w)$ and the initial optimal risky position $\pi^*(0,w)$ to the dispersion $\sigma_G$ of the fixed-deadline goal. The top row uses the high-income retirement target with median goal amount $13.75 \times \$300{,}000 = \$4.125$ million, and the bottom row uses the low-income retirement target with median goal amount $26 \times \$30{,}000 = \$780{,}000$. In all panels, we fix $T=40$.}
\label{fig:calibration-retirement-sigma}
\end{figure}

\begin{figure}[H]
\centering
\includegraphics[width=\textwidth]{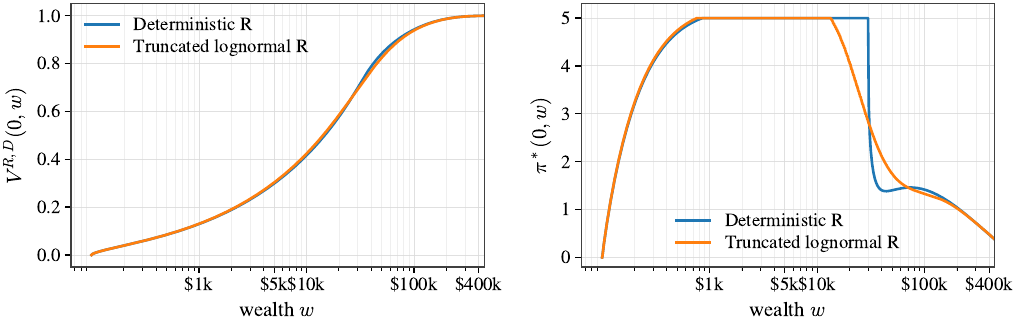}
\caption{Sensitivity to the specification of the random-deadline goal for the public-college baseline. $R$ is deterministic at \$29{,}837.40 with $\lambda=0.20$. The stochastic alternative uses a truncated lognormal distribution with mean \$29{,}837.40 and log-standard-deviation $\sigma_{\log R}=0.50$. This implies standard deviation \$15,499.06. The fixed-deadline goal is a truncated lognormal with public-college median $G=\$124{,}000$, $\sigma_G=0.50$, and $T=18$.}
\label{fig:random-goal-spec-public-college-highvar}
\end{figure}

\begin{figure}[H]
\centering
\includegraphics[width=\textwidth]{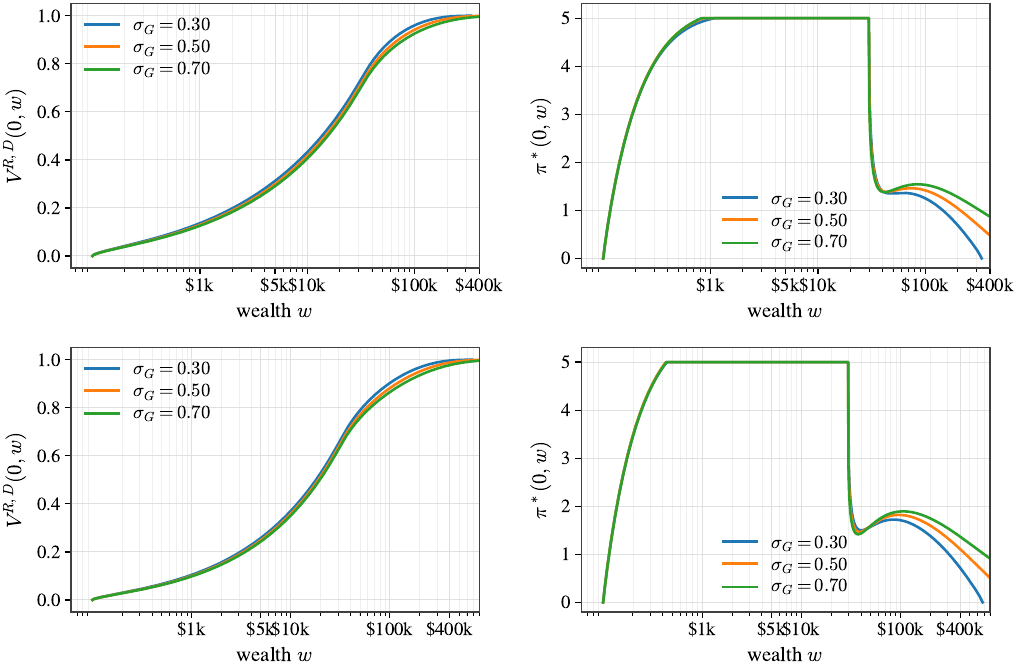}
\caption{Sensitivity to the fixed-deadline goal dispersion $\sigma_G$ for public and private four-year college targets. The top row uses the public-college median goal \$124{,}000, and the bottom row uses the private-college median goal \$262{,}000. }
\label{fig:calibration-college-sigma}
\end{figure}

\begin{figure}[H]
\centering
\includegraphics[width=\textwidth]{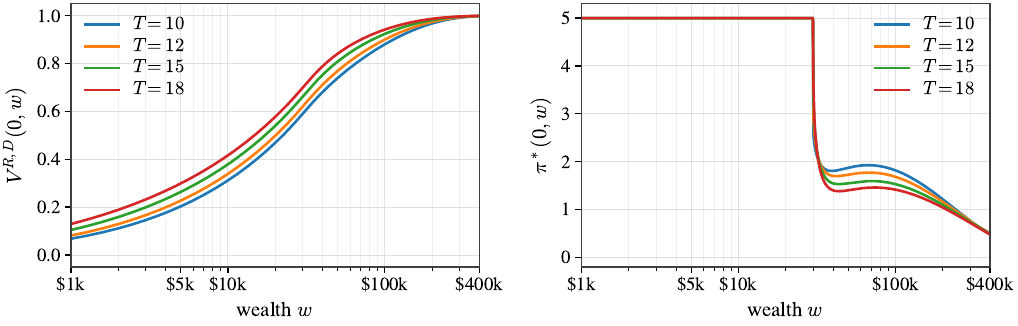}
\caption{Sensitivity of the initial two-goal value function $V^{R,D}(0,w)$ and the initial optimal risky position $\pi^*(0,w)$ to the college saving horizon $T$ under the public-college calibration. $T$ varies from 10 to 18 years, with $T=18$ corresponding to saving from birth and smaller values corresponding to later entry into the saving period.}
\label{fig:calibration-college-public-horizon}
\end{figure}

\begin{figure}[H]
\centering
\includegraphics[width=\textwidth]{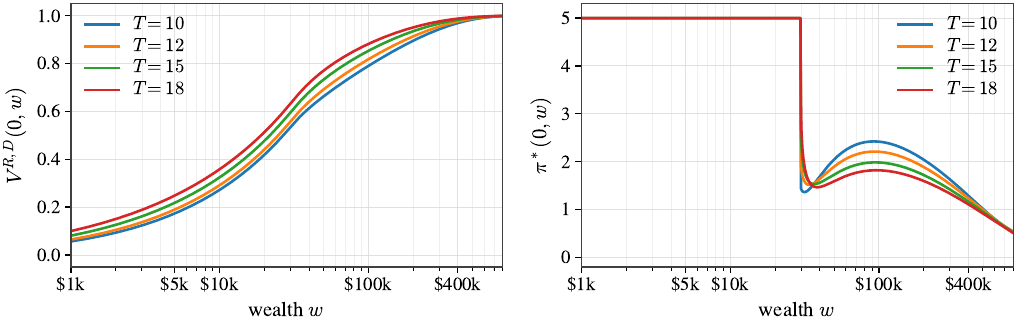}
\caption{Sensitivity of the initial two-goal value function $V^{R,D}(0,w)$ and the initial optimal risky position $\pi^*(0,w)$ to the college saving horizon $T$ under the private-college calibration. $T$ varies from 10 to 18 years, with $T=18$ corresponding to saving from birth and smaller values corresponding to later entry into the saving period. }
\label{fig:calibration-college-private-horizon}
\end{figure}

\subsection{Robustness of Optional Funding Results}
\label{app:optfunding}

In this section, we complement the optional funding analysis in
Section~\ref{sec:optional-graphs}. We show the
sensitivity of the ex ante option value and the terminal
option value to the random-deadline arrival intensity
$\lambda$, the fixed-goal priority per dollar $\alpha_D/G$,
and the deterministic deadline $T$.

In Figures~\ref{fig:option-value-lambda}
and~\ref{fig:terminal-option-value-lambda}, we show the effect of varying $\lambda$.
The ex ante option value is larger when $\lambda$ is smaller
because a lower arrival intensity increases the probability
$e^{-\lambda T}$ that the random-deadline goal survives past
$T$, which is precisely the branch on which the funding
decision is available. The terminal option value, by contrast,
isolates the time-$T$ tradeoff directly and is therefore less
sensitive to $\lambda$ through the survival probability
channel, though it is still affected through the shape of
$V^{R,K}$.

\begin{figure}[H]
    \centering
    \begin{subfigure}[t]{0.48\textwidth}
        \centering
        \includegraphics[width=\linewidth]{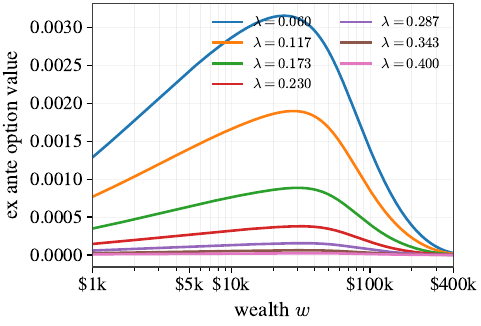}
        \caption{Ex ante option value by wealth.}
        \label{fig:option-lambda-wealth}
    \end{subfigure}
    \hfill
    \begin{subfigure}[t]{0.48\textwidth}
        \centering
        \includegraphics[width=\linewidth]{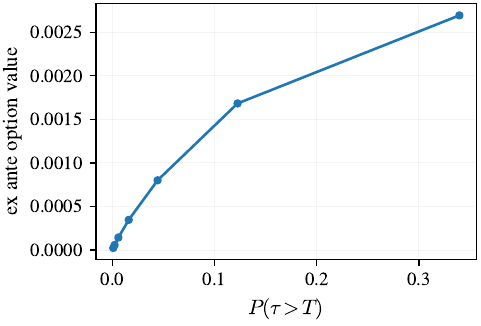}
        \caption{Ex ante option value versus
        $\Pr(\tau>T)=e^{-\lambda T}$.}
        \label{fig:option-lambda-survival}
    \end{subfigure}
    \caption{Sensitivity of the ex ante option value
    $\Delta V_0(w_0)$ to the random-deadline arrival intensity $\lambda$. The right panel plots, at a fixed wealth level $\$50k$, the ex ante option value against the survival
    probability $\Pr(\tau>T)=e^{-\lambda T}$: the option is
    more valuable when the random-deadline goal is more likely
    to remain unresolved at $T$.}
    \label{fig:option-value-lambda}
\end{figure}

\begin{figure}[H]
    \centering
    \begin{subfigure}[t]{0.48\textwidth}
        \centering
        \includegraphics[width=\linewidth]{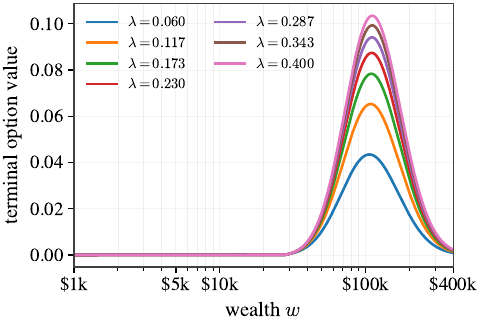}
        \caption{Terminal option value by wealth.}
        \label{fig:terminal-option-lambda-wealth}
    \end{subfigure}
    \hfill
    \begin{subfigure}[t]{0.48\textwidth}
        \centering
        \includegraphics[width=\linewidth]{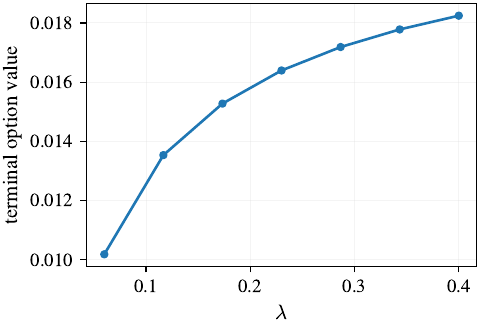}
        \caption{Terminal option value versus
        $\lambda$.}
        \label{fig:terminal-option-lambda-summary}
    \end{subfigure}
    \caption{Sensitivity of the terminal option value
    $\Delta\mathcal T(w)$ to the random-deadline arrival
    intensity $\lambda$ at a fixed wealth of $\$50k$. The terminal option value isolates the time-$T$ funding tradeoff and is evaluated conditional
    on the random-deadline goal remaining unresolved at $T$.}
    \label{fig:terminal-option-value-lambda}
\end{figure}

In Figure~\ref{fig:option-value-alpha-over-G}, we vary the
fixed-goal priority per dollar $\alpha_D/G$. As this ratio
rises, the direct reward from funding the fixed-deadline goal
increasingly outweighs the loss in random-deadline
continuation value, so the household prefers to fund the goal
whenever affordable and the option to decline it loses value.
Both the ex ante option value and the terminal option value
fall monotonically and eventually vanish once the fixed-goal
reward is sufficiently large relative to the continuation
value preserved by not funding.

\begin{figure}[H]
    \centering
    \begin{subfigure}[t]{0.48\textwidth}
        \centering
        \includegraphics[width=\linewidth]{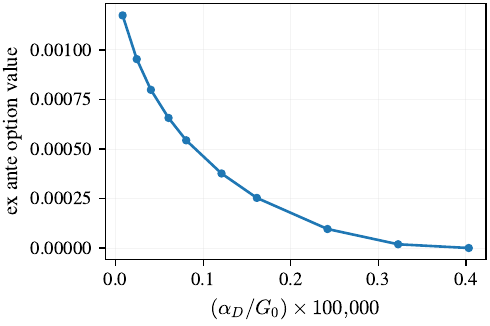}
        \caption{Ex ante option value.}
        \label{fig:option-alpha-time0}
    \end{subfigure}
    \hfill
    \begin{subfigure}[t]{0.48\textwidth}
        \centering
        \includegraphics[width=\linewidth]{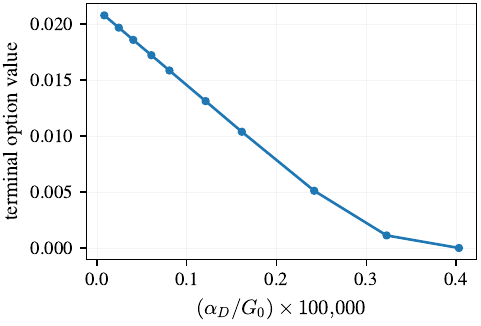}
        \caption{Terminal option value.}
        \label{fig:option-alpha-timeT}
    \end{subfigure}
    \caption{Ex ante option value and terminal option value
    as a function of the fixed-deadline goal priority per
    dollar $(\alpha_D/G)\times 100{,}000$. Initial wealth is fixed at $w_0 = \$50k$. Both objects fall
    monotonically as the fixed-goal reward rises relative to the random-deadline continuation value.}
    \label{fig:option-value-alpha-over-G}
\end{figure}

In Figure~\ref{fig:option-value-deadline-T-time0}, we vary the
deterministic deadline $T$. A longer horizon gives the
household more time to accumulate wealth before the
fixed-deadline goal is processed, but it also reduces the
probability that the random-deadline goal survives until $T$.
These two forces work in opposite directions on the ex ante
option value: a longer $T$ improves the terminal tradeoff by
allowing more wealth accumulation, but discounts it more
heavily through the lower survival probability $e^{-\lambda
T}$.

\begin{figure}[H]
    \centering
    \begin{subfigure}[t]{0.48\textwidth}
        \centering
        \includegraphics[width=\linewidth]{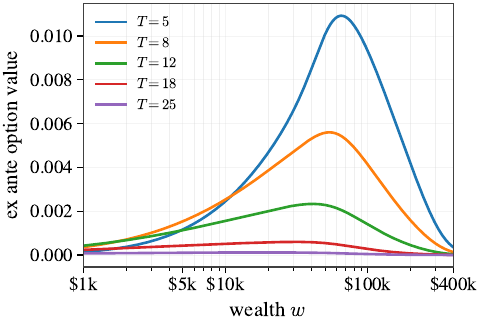}
        \caption{Ex ante option value by wealth.}
        \label{fig:option-T-wealth-time0}
    \end{subfigure}
    \hfill
    \begin{subfigure}[t]{0.48\textwidth}
        \centering
        \includegraphics[width=\linewidth]{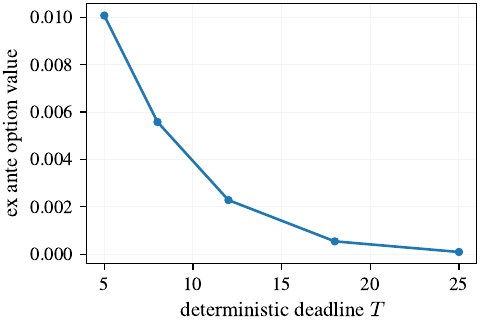}
        \caption{Ex ante option value versus $T$ at fixed wealth $\$50k$.}
        \label{fig:option-T-summary-time0}
    \end{subfigure}
    \caption{Sensitivity of the ex ante option value
    $\Delta V_0(w_0)$ to the deterministic deadline $T$.
    Longer horizons allow more wealth accumulation before the
    fixed-deadline funding decision but reduce the survival
    probability $e^{-\lambda T}$ of the random-deadline goal,
    producing a non-monotone relationship between $T$ and the
    ex ante option value.}
    \label{fig:option-value-deadline-T-time0}
\end{figure}

Figure~\ref{fig:low-lambda-option} shows the same ex ante option
value and policy difference under a lower arrival intensity
$\lambda$. The survival probability $\Pr(\tau>T)=e^{-\lambda T}$
is higher, the option is more valuable, and the policy difference
between optional and forced funding is correspondingly larger at
intermediate wealth, consistent with the pattern documented in
Figure~\ref{fig:three-option-policy-value}.

\begin{figure}[H]
    \centering
    \begin{subfigure}[t]{0.48\textwidth}
        \centering
        \includegraphics[width=\linewidth]{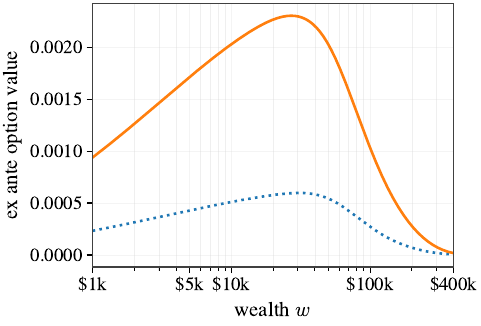}
        \caption{Ex ante option value.}
    \end{subfigure}
    \hfill
    \begin{subfigure}[t]{0.48\textwidth}
        \centering
        \includegraphics[width=\linewidth]{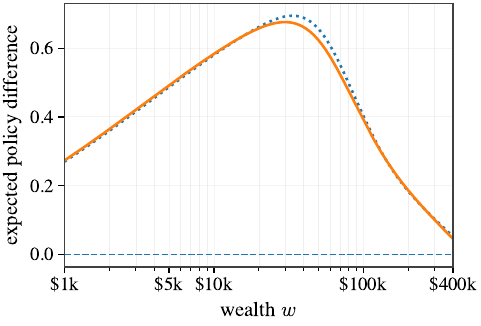}
        \caption{Optimal policy difference between optional
        and forced funding.}
    \end{subfigure}
    \caption{Ex ante option value and optimal policy
    difference under a lower arrival intensity $\lambda$,
    which raises the survival probability
    $\Pr(\tau>T)=e^{-\lambda T}$. Both the option value and
    the policy difference are larger than under the baseline $\lambda$, consistent with the survival-probability
    channel documented in
    Figure~\ref{fig:option-value-lambda}.}
    \label{fig:low-lambda-option}
\end{figure}

\section{Additional Results on Asymptotics}
\label{app:asymptotics}

This appendix expands on the limiting cases of our framework discussed in Section~\ref{sec:asymptotics}. There are
three types of limits. First, we study the role of the deterministic
deadline $T$ in the full two-goal problem. Second, we study the
limiting regimes generated by the random-arrival intensity $\lambda$.
Third, we vary the fixed-goal priority per dollar.

\subsection{Deterministic-Deadline Asymptotics}

We study the role of the deterministic deadline $T$ in the full
two-goal problem. When $T\to0$, there is essentially no time to
rebalance or accumulate wealth before the fixed-deadline goal is
processed. Hence the time-0 two-goal value converges to the terminal
operator
\begin{equation}
\label{eq:T-zero-terminal-limit-asym}
V^{R,D,K}(0,w;T)
\longrightarrow
\mathcal T^K(w),
\qquad T\downarrow0,
\end{equation}
where
\begin{equation}
\label{eq:terminal-operator-asym}
\mathcal T^K(w)
=
\mathbb E_G
\left[
\alpha_D\mathbf 1_{\{w\ge G\}}
+
\alpha_R V^{R,K}\left(w-G\mathbf 1_{\{w\ge G\}}\right)
\right].
\end{equation}
Thus the $T\to0$ limit is a single terminal funding decision followed
by the random-deadline continuation problem.

\begin{figure}[H]
    \centering

    \begin{subfigure}[t]{0.48\textwidth}
        \centering
        \includegraphics[width=\linewidth]{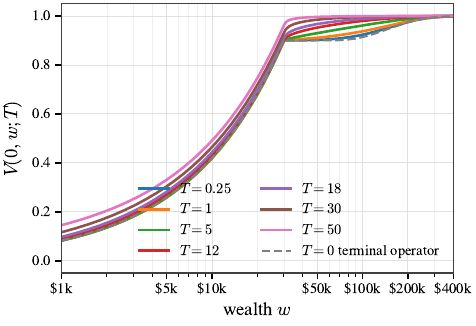}
        \caption{Value function as $T$ varies.}
        \label{fig:asym-T-value-curves}
    \end{subfigure}
    \hfill
    \begin{subfigure}[t]{0.48\textwidth}
        \centering
        \includegraphics[width=\linewidth]{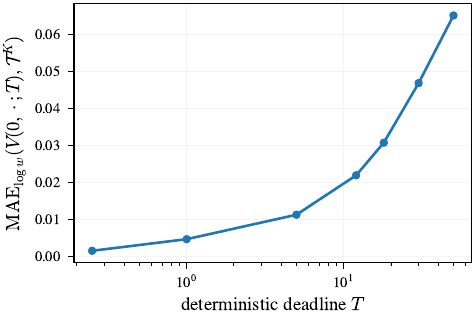}
        \caption{Distance to terminal operator.}
        \label{fig:asym-T-to-zero}
    \end{subfigure}

    \caption{Value function convergence in $T$. Panel (a) plots the
    time-0 value function $V(0,w;T)$ for different deterministic
    deadlines $T$, together with the terminal operator
    $\mathcal T^K(w)$. Panel (b) plots the average absolute error
    over log wealth
    $\mathrm{MAE}_{\log w}(V,\mathcal T^K)
    =\frac{1}{\log w_{\max}-\log w_{\min}}
    \int_{\log w_{\min}}^{\log w_{\max}}
    |V(e^x)-\mathcal T^K(e^x)|\,dx$.
    As $T\to0$, the value converges to the terminal operator because
    there is no time to trade before the fixed-deadline goal is
    processed.}
    \label{fig:asym-deadline-terminal}
\end{figure}

The option value also depends on $T$, but for a different reason. The
terminal option to decline the fixed-deadline goal matters only on the
branch $\{\tau>T\}$, where the random-deadline goal remains unresolved
at the deterministic deadline. Since $\mathbb P(\tau>T)=e^{-\lambda T}$,
the time-0 value of the option is diluted as $T$ grows. This does not
mean that the terminal tradeoff disappears; rather, the branch on
which it matters is reached with lower probability.

\begin{figure}[H]
    \centering

    \begin{subfigure}[t]{0.48\textwidth}
        \centering
        \includegraphics[width=\linewidth]{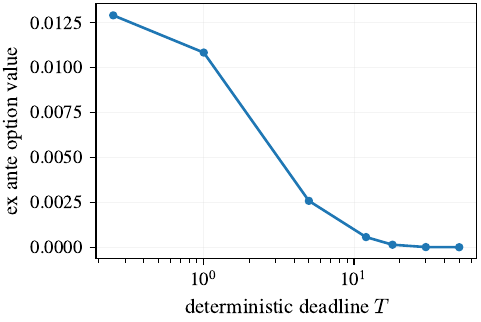}
        \caption{Ex ante option value.}
        \label{fig:asym-T-time0-option}
    \end{subfigure}
    \hfill
    \begin{subfigure}[t]{0.48\textwidth}
        \centering
        \includegraphics[width=\linewidth]{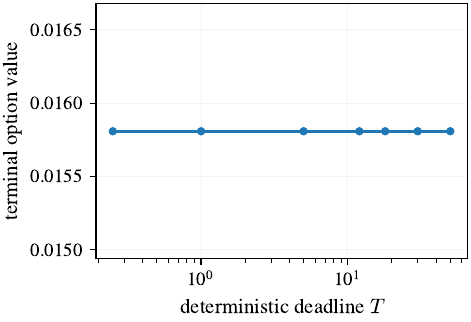}
        \caption{Terminal option value.}
        \label{fig:asym-T-terminal-option}
    \end{subfigure}

    \caption{Option value as the deterministic deadline varies.
    Panel (a) plots the ex ante option value
    $\{V_{\mathrm{opt}}(0,w;T)-V_{\mathrm{forced}}(0,w;T)\}$
    at a fixed wealth $w=\$50k$. Panel (b) plots the maximum
    conditional terminal option value,
    $\max_w\{\mathcal T_{\mathrm{opt}}^K(w)
    -\mathcal T_{\mathrm{forced}}^K(w)\}$. The time-0 option
    value falls for long deadlines because the probability that
    the random-deadline goal survives until $T$ is
    $e^{-\lambda T}$, while the conditional terminal object
    isolates the time-$T$ tradeoff itself.}
    \label{fig:asym-deadline-option}
\end{figure}

\subsection{Random-Arrival Asymptotics}

The random-arrival intensity $\lambda$ generates two useful limiting
regimes. When $\lambda\to0$, the random-deadline goal is unlikely to
arrive before the deterministic deadline. In this regime, the wealth
dependence of the two-goal value is increasingly governed by the
fixed-deadline accumulation problem. For deterministic $G$, the value
has the Browne-shaped approximation
\begin{equation}
\label{eq:lambda-zero-browne-shape-asym}
V^{R,D,K}(0,w)
\approx
\alpha_R
+
\alpha_D V^D_{\mathrm{Browne}}(0,w;G),
\qquad \lambda\approx0.
\end{equation}
The additive $\alpha_R$ term reflects the limiting random-goal
continuation component, while the nontrivial wealth dependence comes
from the fixed-deadline Browne benchmark.

\begin{figure}[H]
\centering
\includegraphics[width=0.48\textwidth]{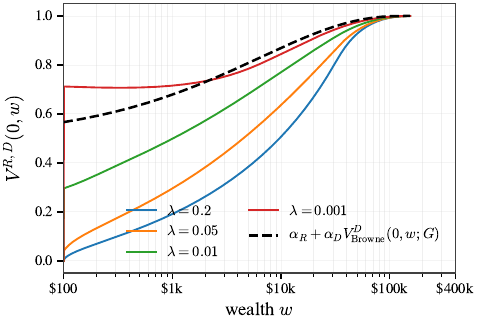}
\caption{Small-$\lambda$ limit. As the random-arrival intensity
decreases, the two-goal value becomes increasingly Browne-shaped.
The plotted reference is
$\alpha_R+\alpha_D V^D_{\mathrm{Browne}}(0,w;G)$.}
\label{fig:limit-lambda-zero-browne}
\end{figure}

At the opposite extreme, as $\lambda\to\infty$, the random-deadline
goal is resolved immediately. The limiting value is therefore the
instant-arrival operator applied at time zero:
\begin{equation}
\label{eq:lambda-infty-reference-asym}
\mathcal J_R[V^D](0,w)
=
\mathbb E_R
\left[
\alpha_R\mathbf 1_{\{w\ge R\}}
+
\alpha_D V^{D,K}\left(0,w-R\mathbf 1_{\{w\ge R\}}\right)
\right].
\end{equation}
This is the same object that appears as the HJB source term.
Economically, \eqref{eq:lambda-infty-reference-asym} says that the
random goal is processed immediately and the remaining problem is the
fixed-deadline single-goal continuation problem.

\begin{figure}[H]
    \centering

    \begin{subfigure}[t]{0.48\textwidth}
        \centering
        \includegraphics[width=\linewidth]{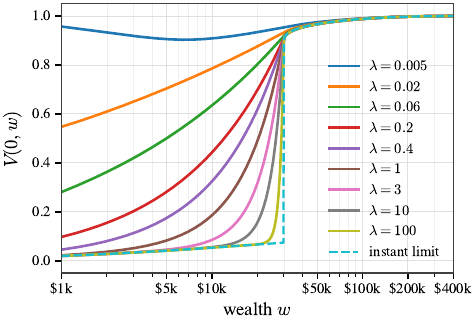}
        \caption{Value function for different $\lambda$.}
        \label{fig:asym-lambda-value-curves}
    \end{subfigure}
    \hfill
    \begin{subfigure}[t]{0.48\textwidth}
        \centering
        \includegraphics[width=\linewidth]{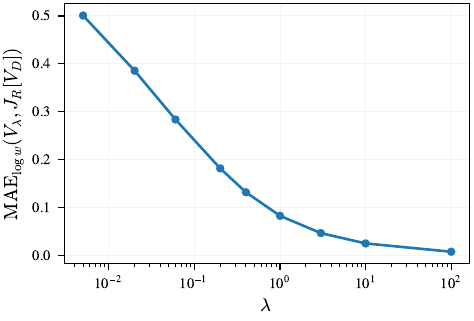}
        \caption{Distance to instant-arrival reference.}
        \label{fig:asym-lambda-instant-reference}
    \end{subfigure}

    \caption{Random-arrival asymptotics. Panel (a) plots $V(0,w)$
    for different arrival intensities together with the
    instant-arrival reference $\mathcal J_R[V^D](0,w)$. Panel (b)
    reports the average absolute error over log wealth between the
    computed two-goal value and the instant-arrival reference.}
    \label{fig:asym-lambda-limits}
\end{figure}

The option value is not monotone in $\lambda$. The terminal option
matters only if the random-deadline goal survives until $T$, which
favors small $\lambda$. However, when $\lambda$ is too small, the
remaining random-deadline goal is not very urgent, so preserving
wealth for it is less valuable. The option is therefore largest at
intermediate arrival intensities, where the random-deadline goal is
likely enough to survive until $T$ but also urgent enough to make the
post-$T$ continuation value economically significant.

\begin{figure}[H]
    \centering
    \includegraphics[width=0.48\textwidth]{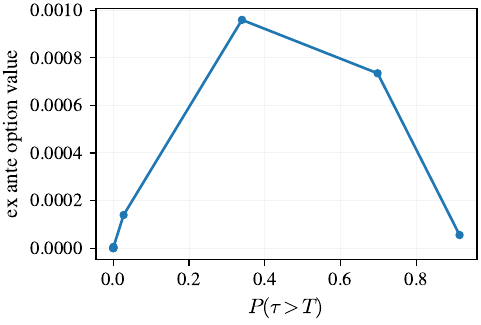}
    \caption{Ex ante option value as a function of the
    random-deadline survival probability
    $\Pr(\tau>T)=e^{-\lambda T}$. Moving to the right
    corresponds to lowering $\lambda$. Wealth is fixed at
    \$50k.}
    \label{fig:asym-lambda-survival-option}
\end{figure}

\newpage
\bibliographystyle{plainnat}
\bibliography{refs}

@techreport{jpmorgan2025replacement,
  author      = {{J.P. Morgan Asset Management}},
  title       = {Real-Life Data: An Innovative Approach to Calculating 
                 Income Replacement Rates},
  institution = {J.P. Morgan Asset Management},
  year        = {2025},
  url         = {https://am.jpmorgan.com/us/en/asset-management/adv/
                 insights/retirement-insights/
                 real-life-data-calculating-income-replacement-rates/}
}

@article{Bayraktar2025,
  title={Goal-based portfolio selection with mental accounting},
  author={Bayraktar, Erhan and Han, Bingyan},
  journal={arXiv preprint arXiv:2506.06654},
  year={2025}
}

@article{Browne1999,
  author  = {Sid Browne},
  title   = {Reaching goals by a deadline: digital options and continuous-time active portfolio management},
  journal = {Advances in Applied Probability},
  year    = {1999},
  volume  = {31},
  number  = {2},
  pages   = {551--577}
}

@book{Brunel2015,
  author    = {Jean L. P. Brunel},
  title     = {Goals-Based Wealth Management: An Integrated and Practical Approach to Changing the Structure of Wealth Advisory Practices},
  publisher = {Wiley},
  address   = {New York},
  year      = {2015}
}

@article{Capponi2024,
  author  = {Agostino Capponi and Yuchong Zhang},
  title   = {A Continuous Time Framework for Sequential Goal-Based Wealth Management},
  journal = {Management Science},
  year    = {2024},
  volume  = {70},
  number  = {11},
  pages   = {7664--7691}
}

@article{Chhabra2005,
  author  = {Ashvin B. Chhabra},
  title   = {Beyond Markowitz: A Comprehensive Wealth Allocation Framework for Individual Investors},
  journal = {The Journal of Wealth Management},
  year    = {2005},
  volume  = {7},
  number  = {4},
  pages   = {8--34}
}

@article{DAcunto2023,
  author  = {Francesco D'Acunto and Alberto G. Rossi},
  title   = {Robo-Advice: Transforming Households into Rational Economic Agents},
  journal = {Annual Review of Financial Economics},
  year    = {2023},
  volume  = {15},
  pages   = {543--563}
}

@article{Das2010,
  author  = {Sanjiv Das and Harry Markowitz and Jonathan Scheid and Meir Statman},
  title   = {Portfolio Optimization with Mental Accounts},
  journal = {Journal of Financial and Quantitative Analysis},
  year    = {2010},
  volume  = {45},
  number  = {2},
  pages   = {311--334}
}

@article{Das2022,
  author  = {Sanjiv R. Das and Daniel Ostrov and Anand Radhakrishnan and Deep Srivastav},
  title   = {Dynamic optimization for multi-goals wealth management},
  journal = {Journal of Banking and Finance},
  year    = {2022},
  volume  = {140},
  pages   = {106192}
}

@article{Markowitz1952,
  author  = {Harry Markowitz},
  title   = {Portfolio Selection},
  journal = {The Journal of Finance},
  year    = {1952},
  volume  = {7},
  number  = {1},
  pages   = {77--91}
}

@article{Reher2024,
  author  = {Michael Reher and Stanislav Sokolinski},
  title   = {Robo Advisors and Access to Wealth Management},
  journal = {Journal of Financial Economics},
  year    = {2024},
  volume = {155},
  number = {103829}
}

@article{Shefrin2000,
  author  = {Hersh Shefrin and Meir Statman},
  title   = {Behavioral portfolio theory},
  journal = {Journal of Financial and Quantitative Analysis},
  year    = {2000},
  volume  = {35},
  number  = {2},
  pages   = {127--151}
}

@article{LettauLudvigson2001,
  author  = {Lettau, Martin and Ludvigson, Sydney},
  title   = {Consumption, Aggregate Wealth, and Expected Stock Returns},
  journal = {The Journal of Finance},
  year    = {2001},
  volume  = {56},
  number  = {3},
  pages   = {815--849},
  doi     = {10.1111/0022-1082.00347}
}

@techreport{collegeboard2025pricing,
  author      = {Ma, Jennifer and Pender, Matea and Hu, Xiaowen},
  title       = {Trends in College Pricing and Student Aid 2025},
  institution = {College Board},
  address     = {New York},
  year        = {2025},
  month       = {November},
  url         = {https://research.collegeboard.org/trends/college-pricing}
}

@article{lusardi2007baby,
  author  = {Lusardi, Annamaria and Mitchell, Olivia S.},
  title   = {Baby Boomer Retirement Security: The Roles of Planning,
             Financial Literacy, and Housing Wealth},
  journal = {Journal of Monetary Economics},
  year    = {2007},
  volume  = {54},
  number  = {1},
  pages   = {205--224}
}

@incollection{duly2003necessities,
  author    = {Duly, Abby},
  title     = {Consumer Spending for Necessities},
  booktitle = {Consumer Expenditure Survey Anthology, 2003},
  publisher = {Bureau of Labor Statistics},
  address   = {Washington, DC},
  year      = {2003},
  pages     = {35--42}
}

@book{Kallenberg2021,
  title={Foundations of modern probability},
  author={Kallenberg, Olav},
  publisher={Springer},
  edition = {Third},
  year={2021}
}

@book{AksamitJeanblanc2017,
  title={Enlargement of filtration with finance in view},
  author={Aksamit, Anna and Jeanblanc, Monique},
  year={2017},
  publisher={Springer}
}

@article{BY16,
  title={Optimally investing to reach a bequest goal},
  author={Bayraktar, Erhan and Young, Virginia R},
  journal={Insurance: Mathematics and Economics},
  volume={70},
  pages={1--10},
  year={2016},
  publisher={Elsevier}
}

@article{crandall1992usersguideviscositysolutions,
  title={User’s guide to viscosity solutions of second order partial differential equations},
  author={Crandall, Michael G and Ishii, Hitoshi and Lions, Pierre-Louis},
  journal={Bulletin of the American mathematical society},
  volume={27},
  number={1},
  pages={1--67},
  year={1992}
}

@book{Pham2009,
  title={Continuous-time stochastic control and optimization with financial applications},
  author={Pham, Huy{\^e}n},
  volume={61},
  year={2009},
  publisher={Springer Science \& Business Media}
}

@book{Touzi2013,
  title={Optimal stochastic control, stochastic target problems, and backward {SDE}},
  author={Touzi, Nizar},
  volume={29},
  year={2013},
  publisher={Springer Science \& Business Media}
}

\end{document}